\newcommand{\drawgenerator}[8]{%
\xymatrix@!0{%
& #8 \ar@{-}[ld]\ar@{.}[dd] \ar@{-}[rr] & & #7 \ar@{-}[ld]  \\%
#1 \ar@{-}[rr] \ar@{-}[dd] &  & #2 \ar@{-}[dd] &            \\%
& #6 \ar@{.}[ld] &  & #5 \ar@{-}[uu] \ar@{.}[ll]       \\%
#3 \ar@{-}[rr] &  & #4 \ar@{-}[ru]                       %
}%
}
\def\l@subsubsection#1#2{}
\definecolor{darkblue}{RGB}{0,0,128}
\newtheorem{lemma}{Lemma}
\newtheorem{theorem}[lemma]{Theorem}
\theoremstyle{definition}
\newcommand{\lcm}{\text{lcm}}
\newcommand{\XY}{XY }
\newcommand{\XZZX}{XZZX }
\newcommand{\defcubic}{Clifford-deformed surface }
\newenvironment{aligns}{\subequations \align} {\endalign \endsubequations}
\definecolor{arpit}{RGB}{127,0,0}
\newcommand\blfootnote[1]{%
\begingroup
\renewcommand\thefootnote{}\footnote{#1}%
\addtocounter{footnote}{-1}%
\endgroup
}
\begin{document}
\title{Tailoring three-dimensional topological codes for biased noise}

\author{$\text{Eric Huang}^{\star,}$}
\affiliation{Perimeter Institute for Theoretical Physics, Waterloo, ON N2L 2Y5, Canada}
\affiliation{University of Maryland, College Park, MD 20742 USA}

\author{$\text{Arthur Pesah}^{\star,}$}
\affiliation{Department of Physics \& Astronomy, University College London, WC1E 6BT London, United Kingdom
}

\author{Christopher T. Chubb}
\affiliation{Institut quantique \& D\'epartement de physique, Universit\'e de Sherbrooke, Sherbrooke, QC J1K 2R1, Canada}
\affiliation{Institute for Theoretical Physics, ETH Z\"urich, 8093 Z\"urich, Switzerland}

\author{Michael Vasmer}
\affiliation{Perimeter Institute for Theoretical Physics, Waterloo, ON N2L 2Y5, Canada}
\affiliation{Institute for Quantum Computing, University of Waterloo, Waterloo, ON N2L 3G1, Canada}

\author{Arpit Dua}
\affiliation{Department of Physics and Institute for Quantum Information and Matter, California Institute of Technology, Pasadena, CA 91125, USA}
\blfootnote{*Equal contribution, alphabetical ordering.}

\begin{abstract}
Tailored topological stabilizer codes in two dimensions have been shown to exhibit high storage threshold error rates and 
improved subthreshold performance under biased Pauli noise. 
Three-dimensional (3D) topological codes can allow for several advantages including a transversal implementation of non-Clifford logical gates,
single-shot decoding strategies, parallelized decoding in the case of fracton codes as well as construction of fractal lattice codes.
Motivated by this, we tailor 3D topological codes for enhanced storage performance under biased Pauli noise. 
We present Clifford deformations of various 3D topological codes, such that they exhibit a threshold error rate of $50\%$ under infinitely biased Pauli noise. 
Our examples include the 3D surface code on the cubic lattice, the 3D surface code on a checkerboard lattice that lends itself
to a subsystem code with a single-shot decoder, the 3D color code, as well as fracton models such as the X-cube model, 
the Sierpiński model and the Haah code. 
We use the belief propagation with ordered statistics decoder (BP-OSD) to study threshold error rates at finite bias. 
We also present a rotated layout for the 3D surface code, which uses roughly half the number of physical qubits for the same code distance
under appropriate boundary conditions. 
Imposing coprime periodic dimensions on this rotated layout leads to logical operators of weight $O(n)$ at infinite bias and
a corresponding $\exp[-O(n)]$ subthreshold scaling of the logical failure rate, where $n$ is the number of physical qubits in the code. 
Even though this scaling is unstable due to the existence of logical representations with $O(1)$ low-rate Pauli errors, 
the number of such representations scales only polynomially for the Clifford-deformed code, leading to an enhanced effective distance.  
\end{abstract}

\maketitle

\section{Introduction}
Fault-tolerant quantum computation is a crucial ingredient for building a scalable quantum computer. 
Topological stabilizer codes are a highly-prized family of low-density parity-check (LDPC) codes due to their geometrically local parity checks, 
high storage threshold error rates and low-overhead fault-tolerant logical gate implementations. 
It has been found that topological codes can be tailored to noise to achieve higher success rates and
threshold error rates~\cite{AliferisPreskill_bias2008,Ultrahigh2018}. 
Generally, for evaluating the performance of a Pauli stabilizer code, a Pauli noise model is considered due to its efficient simulability. 
It has recently been suggested that Pauli noise can be biased towards dephasing in certain realistic laboratory qubits, 
or can be engineered to be so~\cite{aliferis2009fault, Nigg_2014,burkard2021,puri2020bias}. 
For biased Pauli noise, Clifford-deformed surface codes such as the XZZX, XY and (XYZ)$^2$ surface codes, the XYZ color code, and
families of randomly Clifford-deformed surface codes,
have been shown to exhibit high threshold error rates and enhanced subthreshold 
performance~\cite{XZZX2021, Ultrahigh2018,dua2022clifforddeformed,tiurevCorrectingNonindependentNonidentically2022, srivastava2021xyz, miguel2022cellular}.

Three-dimensional (3D) topological codes offer several advantages over all the known two-dimensional topological codes. 
Unlike 2D stabilizer codes, 3D stabilizer codes such as the 3D surface code allow for a transversal implementation of a non-Clifford gate and overall,
a fault-tolerant universal gate set~\cite{bombinTopologicalComputationBraiding2007,bombinGaugeColorCodes2015,kubicaUniversalTransversalGates2015,kubica2015unfolding,Vasmer2019}. 
Using 3D codes for storing logical information can also be advantageous in terms of decoding. 
For instance, the loop-like syndromes associated with 3D stabilizer codes such as the surface code and color code can be decoded using 
single-shot decoding strategies~\cite{breuckmann2017a,duivenvoordenRenormalizationGroupDecoder2019,kubica2019cellular,vasmer2021sweep,quintavalle2021single,higgottImprovedSingleshotDecoding2022}. 
Furthermore, 3D subsystem codes such as the gauge color code~\cite{bombinGaugeColorCodes2015,bombin2015a,brownFaulttolerantErrorCorrection2016} 
and the 3D subsystem surface code~\cite{kubica2021single} allow for a single-shot decoding strategy for general Pauli noise, 
where the noisy error syndrome need only be measured once
\footnote{
    This should be contrasted with 2D topological codes, where, in order to ensure fault-tolerance, 
    the noisy error syndrome must be measured $O(d)$ times for a distance $d$ code.
}. 
Such single-shot decoding strategies not only reduce the time overhead but also are more resilient to 
time-correlated noise~\cite{bombinResilienceTimeCorrelatedNoise2016}.
3D fracton topological codes such as the X-cube model (partially) allow for parallelized decoding in submanifolds of the lattice 
due to the mobility of syndromes being restricted to these submanifolds \cite{brown2020parallelized}.
Another recently discovered advantage of 3D codes such as the surface code is that they can be used to construct fractal lattice codes
by punching holes with appropriate boundary conditions~\cite{Zhu_fractal_2022,Dua_fractal_2022}.
Such fractal lattice surface codes allow for fault-tolerant universal quantum computation with a reduced space overhead 
and a single-shot decoding strategy that can be used for loop-like syndromes on the fractal geometry~\cite{Dua_fractal_2022}.
Even though designing a qubit architecture with a 3D connectivity is a serious experimental challenge in many quantum computing platforms, 
several recent advances have made the prospect of a 3D architecture more amenable to near-term experiments.
For instance, qubit shuttling has recently been shown to enable 3D connectivity on a 2D layout~\cite{cai2022looped}, 
and could for instance be implemented using silicon qubits~\cite{Buonacorsi_2019}, ion traps~\cite{akhtar2022high} or neutral atoms~\cite{lukingroupcoldatoms}.
Other platforms, such as 3D integrated superconducting qubits~\cite{mallek2021fabrication,2017_3DISC,IBM3D} 
and photonic qubits~\cite{bartolucci2021,bombin2021interleaving,Bourassa2021blueprintscalable,Tzitrin2021,bombin2DQuantumComputation2018} 
could also allow the realization of 3D codes.

Motivated by the advantages of 3D topological codes, we tailor them for enhanced storage performance in the presence of biased Pauli noise. 
We construct Clifford-deformed codes from the 3D surface code (cubic lattice)~\cite{Dennis2002}, 
the 3D color code~\cite{bombinExactTopologicalQuantum2007,bombinTopologicalComputationBraiding2007}, 
the X-cube fracton model~\cite{Vijay_2016}, the Sierpiński fracton model~\cite{Yoshida_2013} 
and the Haah code~\cite{Haah_2011}. 
We also propose a Clifford deformation of the 3D surface code on the checkerboard lattice, which lends itself 
to a subsystem code with a single-shot decoding strategy.
All of our Clifford-deformed codes allow for decoding strategies with a threshold error rate of $50\%$ at infinite dephasing bias. 
These Clifford-deformed codes are constructed to have materialized linear symmetries that allow for the first step of decoding to be a 
minimum-weight perfect matching (MWPM) decoder~\cite{Dennis2002,brown2022conservation} in submanifolds supporting those symmetries. 
The resulting models after this first step have further linear symmetries that allow for another round of the matching decoder. 
The combination of these steps gives the full matching decoder with a threshold error rate of $50\%$. 

For a subset of the codes mentioned above, we use the belief propagation with ordered statistics 
decoder~\cite{panteleev2019degenerate,roffe2020decoding} (BP-OSD) to numerically find the threshold error rates at finite bias. 
We discover that the threshold error rate increases with the bias for both original and Clifford-deformed codes. 
Beyond a critical value of bias, the threshold error rate of the Clifford-deformed codes exceeds that of the original codes. However, we also find some limitations of BP-OSD: for the X-cube model and the 3D surface code on the checkerboard lattice, 
the apparent threshold error rate tends to recede when increasing the system size, showing either a lower infinite-size threshold error rate
or no threshold at all. This effect, previously observed on 2D surface codes and color codes \cite{higgottImprovedSingleshotDecoding2022}, 
is analyzed in more detail in the context of 3D topological codes.
We also compare BP-OSD with the sweep-matching decoder---a combination of minimum-weight perfect matching
and the sweep decoder~\cite{kubica2019cellular,vasmer2021sweep}---for the 3D surface code (cubic lattice) 
and find better threshold error rates with BP-OSD.

Lastly,
we define a rotated layout for the 3D CSS surface code, which offers the same distances $(d_X,d_Z)$ 
for both Pauli $X$ and Pauli $Z$ logical operators as in the standard layout, while using roughly half the number of physical qubits,
in analogy with the 2D case~\cite{Tailoring2019}.
Using this rotated layout for the Clifford-deformed surface code and imposing coprime dimensions with appropriate boundary conditions 
leads to weight ${O}(n)$ logical operators at infinite dephasing bias. 
As a result, the subthreshold performance of the Clifford-deformed surface code is enhanced, with a logical error rate scaling 
as $e^{-{O}(n)}$ with the number of qubits $n$,
in comparison to $e^{-O(n^{1/3})}$ for the original code.

The paper is structured as follows: In \cref{sec:background}, we review
biased noise models and Clifford-deformed codes. We discuss the notions of materialized symmetries in the context 
of the 2D XZZX and XY surface codes, showing that both codes have a 50\% threshold error rate at infinite bias.
For the XY surface code, we introduce a technique called the weight reduction technique, 
that we also use for some of the 3D topological codes studied in this paper.
In \cref{sec:def3dcodes}, we present CSS and Clifford-deformed 3D topological codes, including the surface code on a cubic lattice,
the surface code on a checkerboard lattice, the color code, and fracton models (X-cube model, Sierpiński fracton model and Haah code). 
We describe the symmetries and prove the 50\% threshold error rates at infinite bias for each of our Clifford-deformed codes.
In \cref{sec:finbiasthresholds}, we present numerical threshold error rate estimates for some CSS and Clifford-deformed 3D codes
under finite bias, using BP-OSD and sweep-matching decoders. We also demonstrate some of the limitations of BP-OSD in the context
of 3D codes.
In \cref{sec:rotsts}, we explore further optimizations including a rotated
layout that reduces the number of physical qubits $n$,
and coprime lattice sizes that achieve logical error rates scaling as
$e^{-O(n)}$ in the subthreshold regime.
Finally in \cref{sec:discussion}, we discuss the implications of our findings. 

In \cref{sec:rhombic-thresholdproof-appendix},
we prove that the Clifford-deformed surface code on a
checkerboard lattice has a 50\% infinite-bias threshold error rate. 
In \cref{sec:decoders-appendix},
we discuss the decoders used in the numerical simulations for the threshold error rates at finite bias.
In \cref{sec:numerics-appendix},
we discuss the numerical methods used to estimate the threshold error rates at finite bias. We list the table of contents below. 

\tableofcontents

\section{Background}\label{sec:background}
\subsection{Biased Pauli noise}

In order to evaluate the performance of a code, it is conventional to choose a noise model
where each qubit is independently subjected to quantum noise. 
Upon the Pauli stabilizer measurements, such quantum noise is digitized to Pauli errors up to coherent rotation~\cite{bealeQuantumErrorCorrection2018,FernGeneralizedPerformance2006,GreenbaumCoherent2017,HuangCoherent2019}. 
To incorporate the coherent rotation in general is hard. 
However, the Pauli twirling approximation has been found to yield estimates of threshold error rates that are close to the ones 
found by including an efficiently tractable choice of coherent rotation~\cite{Bravyi_2018}. 
For our purposes, we consider a general single-qubit Pauli noise channel of the form

\begin{align}
    \rho \to (1 - p) \rho + p\left(r_X X\rho X + r_Y Y\rho Y + r_Z Z\rho Z\right)
\end{align}
where $p\in [0, 1]$ is the single-qubit \emph{physical error rate}
and the weights $r_X, r_Y, r_Z$ describe the relative probability of Pauli
errors $X$, $Y$, $Z$ respectively, such that $r_X, r_Y, r_Z\geq 0$ and $r_X + r_Y + r_Z = 1$. For
instance, a depolarising channel, with $r_X=r_Y=r_Z=1/3$, causes Pauli
errors $X$, $Y$, $Z$ with equal probabilities, and thus describes \emph{unbiased Pauli noise}.
On the other hand, a dephasing channel, with $r_X=r_Y=0$ and $r_Z=1$,
results in $Z$ errors alone and describes \emph{infinitely biased Pauli noise}.
The \emph{bias} $\eta_Z$ for dephasing is formally
defined~\cite{Tailoring2019} as the ratio
\begin{align}
    \eta_Z := \frac{r_Z}{r_X + r_Y},
\end{align}
such that (unbiased) depolarizing noise corresponds to $\eta_Z = 0.5$ while infinitely biased noise corresponds to $\eta_Z = \infty$.
In this work, we restrict our attention to noise channels with symmetric dephasing bias such that $r_X=r_Y$ and hence, 
the noise is characterized by physical error rate $p$ and dephasing bias $\eta_Z$.

\subsection{Clifford-deformed codes}
We consider a Pauli stabilizer code $\mathcal{C}$ with stabilizer group generated by stabilizers $\{S_i\}$ acting on the physical qubits $Q$. We can construct a new code $\mathcal{\widetilde{C}}$
by applying a Clifford circuit $U_C$ consisting of single-qubit Clifford operations on the physical qubits $Q$. We refer to this new code $\mathcal{\widetilde{C}}$ as a Clifford-deformed code. Under such a Clifford circuit, the generators $S_i$ are modified to
\begin{align}
    \widetilde{S_i} = U_C^\dag S_i U_C
\end{align}
which also form a set of commuting Pauli operators. For biased Pauli noise models, it can be advantageous to use a Clifford-deformed code.
Intuitively, if more stabilizer generators anticommute with the more common errors,
the resulting increase in nontrivial syndrome bits provides more information to the decoder to better estimate the correction operators.
Clifford deformations also have the effect of drastically reducing the number of $Z$-only or mostly-$Z$,
logical operators. As a result, this reduces the degeneracy of possible errors causing a given syndrome,
which can be exploited by decoders for superior performance.

Pioneering studies on Clifford-deformed codes for biased noise considered the XY surface code \cite{Ultrahigh2018} and the XZZX surface code
\cite{XZZX2021} in two spatial dimensions. The former is obtained from the CSS
surface code in 2D by replacing all Pauli $Z$s by $Y$s in the stabilizer
generators while the latter is obtained (from the CSS surface code) by applying
a Hadamard operation on half of the qubits such that all stabilizer generators become $X\otimes Z \otimes Z\otimes X$; see \cref{fig:xzzx-code}. 
Both these Clifford-deformed codes have threshold error rates that track the hashing bounds at finite dephasing bias and have a threshold error rate of 50\% at infinite dephasing bias. 
However, unlike the XY code, the XZZX code threshold error rates track the hashing bound for noise biased towards $X$ and $Y$ Pauli errors as well.
Recently, the performance of randomly Clifford deformations of the 2D surface code subjected to noise biased towards dephasing, has also been investigated~\cite{dua2022clifforddeformed}.
A phase of 50\% infinite-bias threshold error rates was found in the parameter space of
$(\Pi_{XZ},\Pi_{YZ})$, where $\Pi_{XZ} (\Pi_{YZ})$ is the probability of a Clifford operation that implements the permutation $X\leftrightarrow Z$ ($Y\leftrightarrow Z$).
This phase can be explained intuitively via a mapping to percolation problems. 
Moreover, certain randomly Clifford-deformed surface codes on $odd \times odd$ dimensions, outperformed the XZZX and XY codes with the same dimensions in the scaling of the subthreshold logical failure rate.

A suitable choice of Clifford deformation can result in a higher threshold error rate and can also improve subthreshold failure rates for finite system sizes. 
But subthreshold failure rates for finite system sizes are also sensitive to the dimensions and boundary conditions 
(note the mention of $odd \times odd$ dimensions in context of subthreshold failure rates in the previous paragraph). 
For instance, choosing coprime periodic dimensions for the XZZX code results in a subthreshold failure rate scaling of $e^{-{O}(n)}$ 
in comparison to that of $e^{-{O}(\sqrt{n})}$ for the CSS surface code.

For translation-invariant deformations such as the XZZX and XY surface code, the $50\%$ threshold error rates at infinite bias can be understood
in terms of the symmetries of the stabilizer group that appear in this noise regime.
We review this for the XZZX code and the XY code below.

\begin{figure}[t]
    \centering
    \subfloat[]{%
        \includegraphics[scale=1.1]{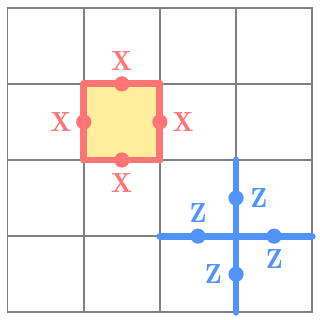}%
        \label{fig:xzzx-lattice-original}
    }
    \subfloat[]{%
        \includegraphics[scale=1.1]{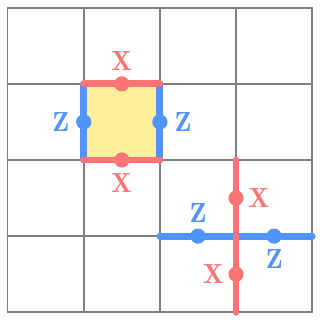}%
        \label{fig:xzzx-lattice-deformed}
    }
    \subfloat[]{%
        \includegraphics[scale=1.1]{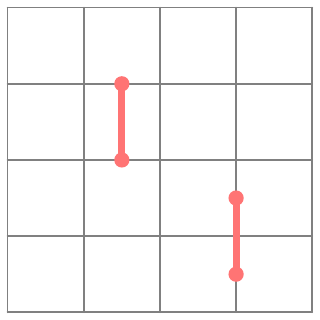}%
        \label{fig:xzzx-lattice-classical}
    }
    \subfloat[]{%
        \includegraphics[scale=1.1]{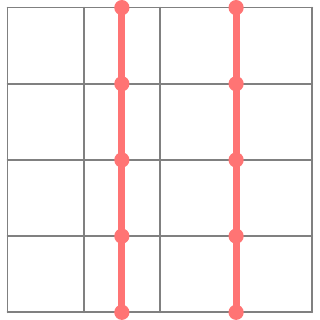}%
        \label{fig:xzzx-symmetries}
    }
    \caption{Illustration of the XZZX surface code.
    \textbf{(a)} Original surface code stabilizers, on a square lattice with periodic boundary conditions. 
    Qubits are on edges and stabilizers on faces and vertices.
    \textbf{(b)} XZZX surface code stabilizers, obtained by applying a Hadamard operation on all the vertical qubits.
    \textbf{(c)} At infinite $Z$ bias, we can ignore the $Z$ part of the stabilizers to form a classical code, whose
    parity-check operators (red edges) are supported on two qubits (red dots). 
    \textbf{(d)} The product of weight-2 parity checks along a vertical line is equal to the identity operator.
    This means that each vertical line contains an even number of excitations, which can be independently decoded by matching.
    We call this type of relation a materialized linear symmetry.
    }%
    \label{fig:xzzx-code}
\end{figure}

\begin{figure}
    \centering
    \subfloat[]{%
        \includegraphics[scale=1.2]{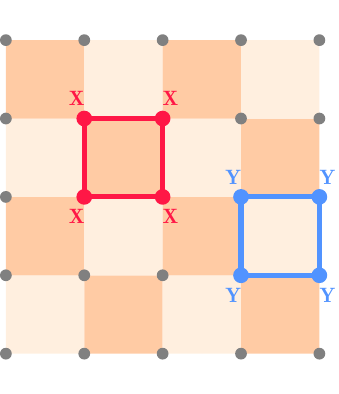}%
        \label{fig:xy-lattice}
    }\hspace{4mm}
    \subfloat[]{%
        \includegraphics[scale=1.2]{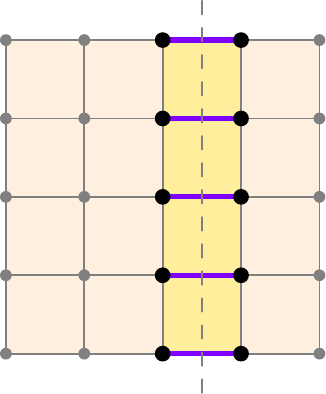}%
        \label{fig:xy-symmetry-1}
    }
    \hspace{4mm}
    \subfloat[]{%
        \includegraphics[scale=1.2]{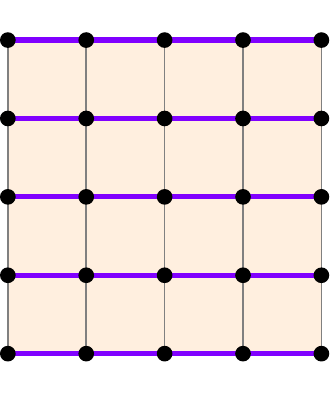}%
        \label{fig:xy-symmetry-2}
    }\\
    \vspace{5mm}
    \subfloat[]{%
        \includegraphics[scale=1.2]{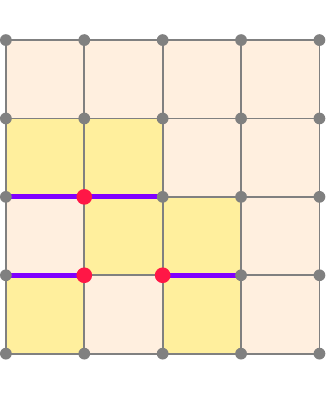}%
        \label{fig:xy-example-step-1}
    }\hspace{4mm}
    \subfloat[]{%
        \includegraphics[scale=1.2]{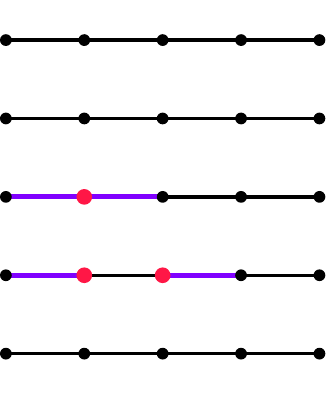}%
        \label{fig:xy-example-step-2}
    }\\
        \vspace{5mm}
    \caption{Illustration of the weight-reduction technique on the XY code.
    \textbf{(a)} XY code lattice. Qubits are on the vertices. Dark (resp.
    light) plaquettes are stabilizer generators made of $X$ (resp. $Y$) operators.
    This code can be obtained by applying a phase gate and a Hadamard gate to every qubit of the CSS surface code.
    \textbf{(b)} Linear symmetry of the code at infinite $Z$ bias, assuming periodic boundary conditions.
    In this regime, the XY code becomes a classical code with the same four-body checks on every square.
    We can interpret each column as a repetition code (see dashed line for an example), where the purple edges are the variables and the yellow squares the checks.
    Indeed, the product of the yellow checks along the dashed line is effectively identity for pure $Z$ noise.
    Decoding this repetition code gives us the parity of the two qubits on each purple edge.
    \textbf{(c)} Second level of repetition codes. Once matching has been performed on all the columns, we obtain new two-body checks on all
    the horizontal edges (purple). They themselves form linear symmetries along each row.
    Matching along these symmetries allows one to decode errors on all the qubits. 
    \textbf{(d)} Example of decoding using the weight-reduction strategy. Errors correspond to red vertices and face excitations to yellow squares.
    In the first decoding step, we use minimum-weight perfect matching between squares along each vertical line. 
    The resulting ``corrections'' are the purple edges. This means that one of the two qubits of each purple edge
    is predicted to have an error.
    \textbf{(e)} In the second decoding step, the purple edges are reinterpreted as excitations of some horizontal repetition codes.
    Using matching on each repetition code gives the desired correction.
    }%
    \label{fig:xy-code}
\end{figure}

\subsection{XZZX code: materialized symmetries and conserved quantities under biased noise}

The XZZX surface code can be obtained by applying a Hadamard operator on every vertical qubit of the CSS surface code.
The resulting stabilizers are shown in \cref{fig:xzzx-lattice-deformed}.
At infinite $Z$ bias, the action of noise on the qubits where a stabilizer acts as $Z$ results in no syndrome.
Hence, the stabilizer effectively acts as identity on these qubits for infinite-bias noise.
As a result, the code becomes equivalent to a classical code made of two-body parity-check operators $B_f$ and $A_v$, 
as illustrated in \cref{fig:xzzx-lattice-classical}.

This resulting effective model has the following relations on each vertical line $\ell$ of the lattice:
\begin{aligns}
    \label{eq:xzzx-bf-relation}
    \prod_{f \in \ell} B_f = I, \\
    \label{eq:xzzx-av-relation}
    \prod_{v \in \ell} A_v = I.
\end{aligns}
These relations, represented in \cref{fig:xzzx-symmetries}, are referred to as \emph{materialized subsystem symmetries} \cite{brown2020parallelized, brown2022conservation}.
Because of these symmetries, the syndrome values $b_f$ and $a_v$ of all the stabilizers under infinite-bias noise obey
\begin{aligns}
    \prod_{f \in \ell} b_f = 1, \\
    \prod_{v \in \ell} a_v = 1,
\end{aligns}
which are the conservation laws associated with the materialized subsystem symmetries.
This implies that at infinite $Z$ bias, the number of excitations of the XZZX stabilizer generators along any vertical line is even.
In other words, single stabilizer generator syndromes can only ``move'' along
vertical lines at infinite bias under application of $Z$ errors.
This leads to a simple decoding strategy for an XZZX code subjected to $Z$ errors: we match the syndromes along each line independently.
This is equivalent to decoding $L$ independent classical repetition codes, each of size $L$.
Since the repetition code has a threshold error rate of $50\%$, the success rate of this infinite-bias decoder is lower-bounded by:
\begin{align}
    p_{\text{success}} \geq \left(1- A e^{-\alpha L}\right)^L \approx 1 - A L e^{-\alpha L} \xrightarrow[L \to \infty]{} 1
\end{align}
for any fixed physical error rate below $50\%$, where $A$ and $\alpha$ are two positive constants. The reason this expression is a lower bound and not an equality is that failing to decode an even number of repetition codes also results in successful decoding.
Thus this strategy results in a threshold error rate of $50\%$~\cite{XZZX2021}.

\subsection{XY code: weight-reduction technique}
\label{sec:xy-code}

The XY surface code \cite{Ultrahigh2018} is another example of a Clifford-deformed surface code that has a threshold error rate of $50\%$ at infinite bias.
It is formed by applying an $S$ gate and a Hadamard gate on every qubit, having
the effect of turning all $Z$ stabilizers into $Y$ stabilizers,
as shown in \cref{fig:xy-lattice}.
As a result, at infinite $Z$ bias, the code becomes a classical code where the parity checks are four-body terms supported on every square.

To prove that this code has a $50\%$ infinite-bias threshold error rate, we need to generalize the technique developed for the XZZX surface code.
As can be seen in \cref{fig:xy-symmetry-1}, the checks form linear symmetries along both the vertical and horizontal directions.
However, in contrast to the XZZX code, these symmetries involve weight-4 checks, making the underlying decomposition into repetition codes
less obvious to see.

To decompose the code into repetition codes, we use a two-step decoding strategy that we call the \textit{weight-reduction technique},
which we use extensively in our study of 3D codes.

In the first step, we start by writing each square check as the parity of its two incident horizontal edge checks, 
where an edge check is defined as the parity of its two incident vertices.
In \cref{fig:xy-symmetry-1}, these horizontal edge checks, surrounding the top and bottom part of each square, are represented in purple.
Note that this is a purely formal manipulation, as the edge checks are not part of the syndrome at the moment.

We now use the linear symmetries consisting of the product of square checks along any vertical line.
An example of such a symmetry is highlighted in \cref{fig:xy-symmetry-1}.
These linear symmetries give rise to repetition codes, where the data bits are the horizontal edge checks and the parity checks are
the squares acting on two neighboring edges.
Successful decoding of these repetition codes allows us to obtain the value of all the horizontal edge checks, 
which therefore become part of the syndrome.
In other words, assuming a successful matching, we turn the weight-4 checks into weight-2 checks 
supported on the horizontal edges of the code.
This is the core of the weight-reduction technique.

The second decoding step starts by noticing that the new horizontal edge checks form a linear symmetry
on each horizontal line: the product of edges along any horizontal line is equal to the identity.
Each horizontal line can therefore be interpreted as a repetition code, and decoding all the repetition codes
allows us to correct errors on all the qubits. 
Those linear symmetries are illustrated in \cref{fig:xy-symmetry-2}.
An example of decoding with this two-step strategy is shown in \cref{fig:xy-example-step-1} and \ref{fig:xy-example-step-2}.

We now show that this decoding strategy leads to a threshold error rate of $50\%$.
For a given physical error rate below $50\%$, the probability that this decoder succeeds is lower-bounded by the probability that
both the $L$ 1D matchings of the first step and the $L$ 1D matchings of the second step are successful. 
More precisely, for a fixed physical error rate below $50\%$, we have
\begin{align}
    p_{\text{success}} \geq \left(1- A e^{-\alpha L}\right)^{2L} \xrightarrow[L \to \infty]{} 1,
\end{align}
where $A$ and $\alpha$ are positive constants. 
This shows that the threshold error rate of the $XY$ code under infinite $Z$ bias is $50\%$.

\section{3D Clifford-deformed topological codes}\label{sec:def3dcodes}

\begin{figure}[t]
    \centering
    \subfloat[]{
        \includegraphics[scale=1.3]{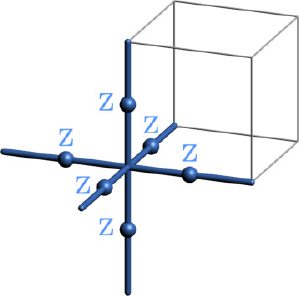}
        \includegraphics[scale=1.3]{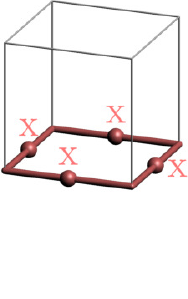}
        \includegraphics[scale=1.3]{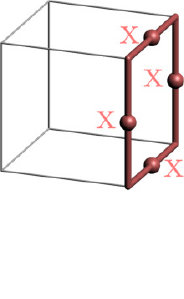}
        \includegraphics[scale=1.3]{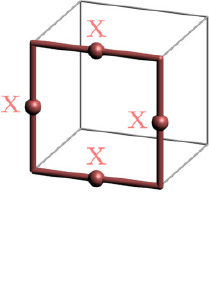}
        \label{fig:toric-3d-stabilizers-original}
    }
    \hfill
    \subfloat[]{
        \includegraphics[scale=1.3]{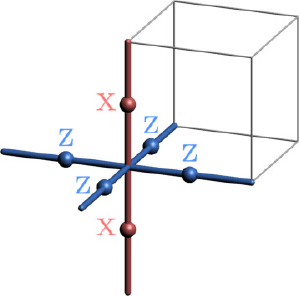}
        \includegraphics[scale=1.3]{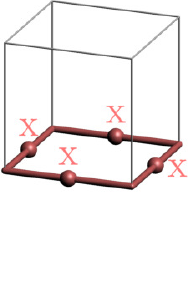}
        \includegraphics[scale=1.3]{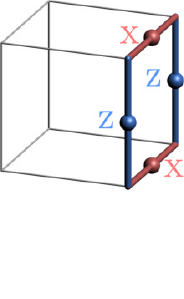}
        \includegraphics[scale=1.3]{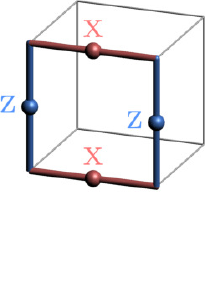}
        \label{fig:toric-3d-stabilizers-deformed}
    }
    \caption{%
    \textbf{(a)} Stabilizer generators in the CSS 3D surface code are
    vertex operators on each vertex and face operators for each face,
    for which there are three orientations.
    \textbf{(b)} Stabilizer generators in the Clifford-deformed surface code obtained via Hadamard operations on qubits along vertical edges.}
    \label{fig:toric-3d-stabilizers}
\end{figure}

In this section, we present Clifford deformations of 3D codes with a macroscopic number of materialized symmetries at infinite bias.
Having a macroscopic number of such symmetries leads to a macroscopic number of conservation laws obeyed by the syndromes, and this can lead 
to a decoder with a high threshold error rate.

Our general strategy for showing that a Clifford-deformed code has a threshold error rate of $50\%$ at infinite $Z$ bias is the following.
We start by identifying the linear symmetries of the code when the $Z$ part of the stabilizers is ignored.
Each linear symmetry gives rise to a repetition code decoding problem.
We can therefore construct a decoder that starts by a round of minimum-weight perfect matching (MWPM) decoding~\cite{Dennis2002,brown2022conservation} 
on the one-dimensional submanifolds supporting the symmetries.
This results in a new model with parity-check operators of reduced weight, as demonstrated for the XY code in \cref{sec:xy-code}.
We then identify the linear symmetries of this new model with reduced-weight stabilizers, 
decode the corresponding repetition codes using MWPM, and repeat the process until a correction operator has been
assigned to all the qubits.
Due to the fact that each step of this decoding strategy consists of decoding repetition codes,
the existence of such a decoder for any Clifford-deformed code shows that the overall threshold error rate of the code is $50\%$.

We now present our Clifford-deformed codes and their explicit decoders one by one below. 

\subsection{3D surface code}

The conventional form of the 3D surface code is defined on a cubic lattice with qubits sitting on edges.
The stabilizer generators consist of the vertex operators
$A_v=\prod_{e\in v} Z_e$, made of Pauli $Z$ operators on each of the six edges $e$ adjacent to a vertex $v$,
and the face operators $B_f=\prod_{e\in f}X_e$, made of Pauli $X$ operators on each of the four edges adjacent to a face $f$,
as illustrated in \cref{fig:toric-3d-stabilizers-original}.
The syndromes associated with violations of vertex stabilizer generators are point-like
and created in pairs at the boundaries of strings of $X$ errors.
The syndromes associated with violations of face stabilizers are loop-like syndromes and created at the boundaries of membranes of $Z$ errors,
as shown in \cref{fig:toric-3d-errors}.
The logical $X$ operators are topologically nontrivial string operators and the logical $Z$ operators are topologically nontrivial membranes.
In particular, on a 3D torus, there are three pairs of inequivalent minimum-weight logical operators $\overline{X}_u, \overline{Z}_u$,
one for each axis $u \in \{ \hat{x},\hat{y},\hat{z} \}$, where $\overline{X}_u$ is a string of $X$ operators oriented along $u$,
and $\overline{Z}_u$ is a membrane orthogonal to $u$. Examples are shown in \cref{fig:toric-3d-logicals}.
Thus, the 3D surface code encodes three logical qubits and has code distance $\min(L_x, L_y, L_z)$, specified by the minimum weight of the string operators.
One can define a open boundary version of the code on a cubic lattice with rough boundaries on a pair of opposite faces and smooth boundaries
on remaining four sides. The logical string operator connects the rough boundaries while the logical membrane operator connects the smooth boundaries.
Hence, the code encodes one logical qubit.

\begin{figure}[t]
    \centering
    \subfloat[]{
        \includegraphics[scale=1.2]{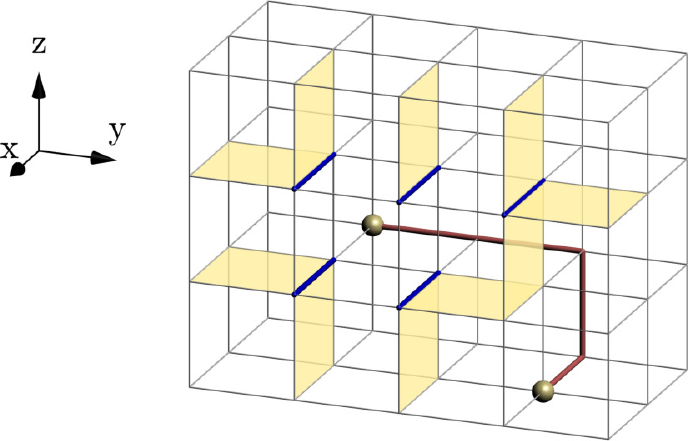}
        \label{fig:toric-3d-errors}
    }
    \subfloat[]{
        \includegraphics[scale=1.2]{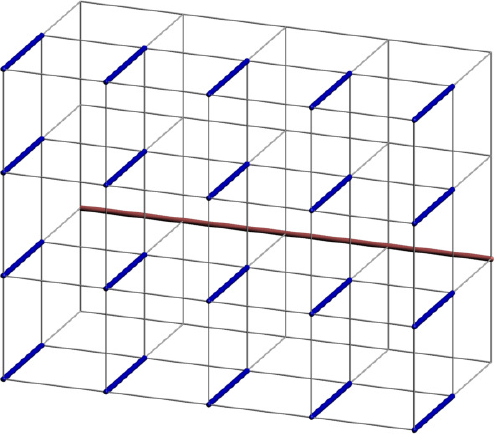}
        \label{fig:toric-3d-logicals}
    }
    \caption{Errors and logical operators of the CSS 3D surface code.
    \textbf{(a)} Errors in the 3D surface code create two types of syndromes:
    point-like syndromes at the boundary of chains of $X$ errors, and loop-like syndromes around membranes of $Z$ errors.
    \textbf{(b)} Examples of logical $X$ and $Z$ operators of the 3D surface code.
    } %
    \label{fig:toric-3d-errors-logicals}
\end{figure}
\begin{figure}[t]
    \centering
    \subfloat[]{
        \includegraphics[scale=1.2]{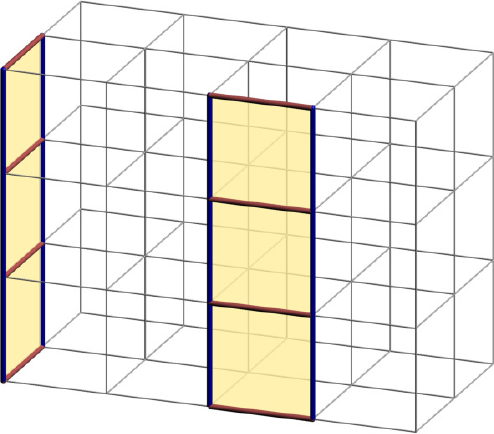}
        \label{fig:toric-symmetries-a}
    }
    \subfloat[]{
        \includegraphics[scale=1.2]{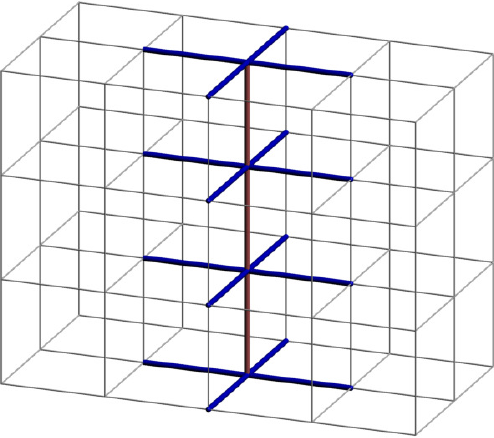}
        \label{fig:toric-symmetries-b}
    }
    \caption{Materialized linear symmetries of the Clifford-deformed 3D surface code.
    Products of plaquette/vertex operators along vertical lines are made only of $Z$ operators,
    and are therefore effectively equivalent to the identity in the purely $Z$
    infinite-bias noise regime.
    \textbf{(a)} Product of vertical plaquette operators (yellow) along two vertical lines.
    \textbf{(b)} Product of vertex operators along a vertical line
    } %
    \label{fig:toric-symmetries}
\end{figure}
\begin{figure}[t]
    \centering
    \subfloat[]{
        \includegraphics[scale=0.94]{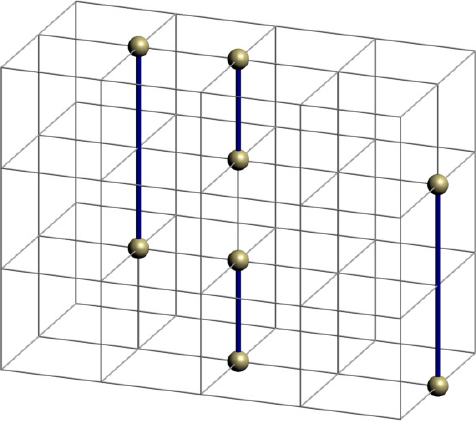}
        \label{fig:toric-3d-decoder-a}
    }
    \subfloat[]{
        \includegraphics[scale=0.94]{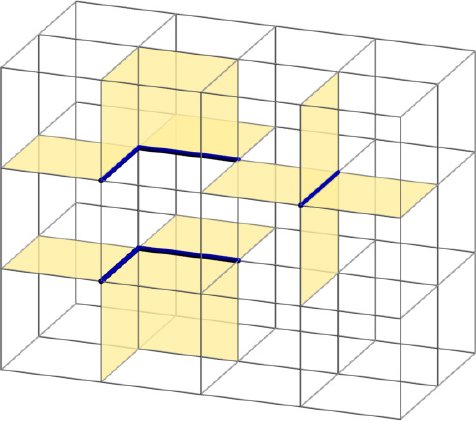}
        \label{fig:toric-3d-decoder-b}
    }
    \subfloat[]{
        \includegraphics[scale=0.94]{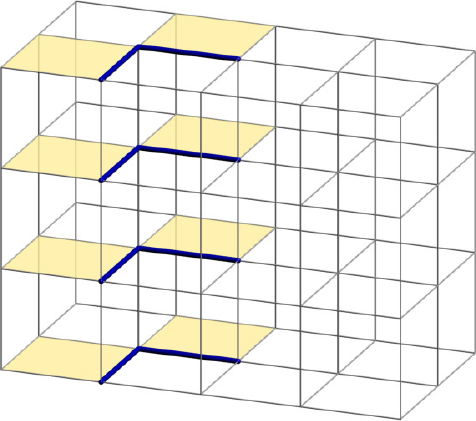}
        \label{fig:toric-3d-decoder-c}
    }
    \caption{Decoder used to prove that the Clifford-deformed 3D surface code has a threshold error rate of $50\%$ at infinite $Z$ bias.
    \textbf{(a)} Decoding of syndromes on vertices. Under local Pauli errors, vertex syndromes (yellow spheres) can only move in the vertical direction due to
    the linear symmetries of \cref{fig:toric-symmetries-b}.
    We can therefore decode errors on vertical qubits by matching vertex syndromes along this axis.
    \textbf{(b)} Decoding of syndromes on faces. Under local Pauli errors, $xz$ and $yz$ face syndromes (shown as yellow squares) can only move in the vertical direction due to
    the linear symmetries of \cref{fig:toric-symmetries-a}. We can therefore decode errors on the horizontal qubits by matching
    the face syndromes along the vertical axis.
    \textbf{(c)} Decoding of residual face syndromes. The decoder can be further improved by decoding residual errors coming from failed matchings in (b).
    A failed matching results in an identical $xy$ face syndrome on every $xy$ plane.
    These can be decoded using a 2D minimum-weight perfect matching algorithm.}%
    \label{fig:toric-3d-decoder}
\end{figure}

We consider a Clifford deformation of the 3D surface code where we apply a Hadamard operator on all the qubits on edges oriented along the $z$ direction,
which we call \emph{vertical qubits}.
On the contrary,
\emph{horizontal qubits} which reside on edges oriented along the $x$ or $y$
directions remain untouched by the Clifford deformation.
The resulting stabilizer generators are shown in \cref{fig:toric-3d-stabilizers-deformed}. The code has certain linear materialized symmetries for infinitely biased noise. Following the techniques that utilized materialized symmetries at biased noise to define decoding strategies for two-dimensional topological codes~\cite{Tailoring2019,XZZX2021,brown2022conservation,brown2020parallelized}, we prove the following theorem. 

\begin{theorem}
The Clifford-deformed 3D surface code has a threshold error rate of $50\%$ under pure $Z$ noise.
\end{theorem}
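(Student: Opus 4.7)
The plan is to construct an explicit decoder for pure $Z$ noise by executing the general weight-reduction strategy in two rounds of MWPM: first decoding $Z$ errors on vertical qubits from vertex syndromes, then decoding $Z$ errors on horizontal qubits from $xz$- and $yz$-face syndromes. Each round decomposes into independent one-dimensional repetition codes along the vertical axis, so a union bound over all of them gives a $50\%$ threshold.

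First I would determine the effective action of the deformed stabilizers on pure $Z$ errors. The Hadamard applied to every vertical edge sends each vertex operator $\widetilde{A}_v$ to $X$ on its two incident vertical qubits tensored with $Z$ on its four horizontal qubits, each $xz$- or $yz$-face operator to $X$ on its two horizontal qubits tensored with $Z$ on its two vertical qubits, and leaves each $xy$-face operator as $X$ on its four horizontal qubits. Since the $Z$ tensor factors never produce a syndrome against $Z$ errors, as a classical parity-check code the vertex operators act only on vertical qubits while the face operators act only on horizontal qubits.

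Next I would invoke the materialized symmetries of \cref{fig:toric-symmetries}: along any vertical line of vertices, $\prod_{v} \widetilde{A}_v$ has its vertical-qubit $X$-contributions cancelling telescopically and so reduces to a product of $Z$'s on horizontal qubits, effectively the identity under pure $Z$ noise. Analogously, along any vertical column of $xz$-faces (respectively $yz$-faces), the horizontal-qubit $X$-contributions cancel telescopically and the product is effectively trivial. These relations yield the conservation laws that vertex syndromes along any vertical line, and vertical-face syndromes along any vertical face column, each sum to zero mod $2$. I then define the decoder: Step 1 runs MWPM independently on each of the $L_x L_y$ vertical-line repetition codes whose $L_z$ bits are the vertical qubits and whose $L_z$ checks are the vertex operators, producing a correction on every vertical qubit. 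Conditioned on Step 1, Step 2 runs MWPM independently on each of the $L_x L_y$ columns of $xz$-faces (correcting all horizontal $x$-qubits) and on each of the $L_x L_y$ columns of $yz$-faces (correcting all horizontal $y$-qubits), each of which is again a length-$L_z$ repetition code.

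Finally I would bound the success probability. Because Step 1 acts only on vertical qubits, the residual error pattern entering Step 2 on horizontal qubits remains an i.i.d.\ Bernoulli$(p)$ process, so every repetition code invoked is below its $50\%$ threshold whenever $p<1/2$. Each of the $3 L_x L_y = O(L^2)$ 1D MWPM instances then fails with probability at most $A e^{-\alpha L}$ for positive constants $A,\alpha$, and joint success of all of them implies global decoding success, giving
\begin{align}
    p_{\text{success}} \;\geq\; \bigl(1 - A e^{-\alpha L}\bigr)^{c L^2} \;\xrightarrow[L \to \infty]{}\; 1,
\end{align}
which establishes the $50\%$ threshold. The main point to check carefully is that no further round is needed: the $xy$-face syndromes, which are not used above, are redundant with the $Z$-error pattern already determined by Step 2, so consistency is automatic and their values only enter the optional residual-correction refinement of \cref{fig:toric-3d-decoder-c} rather than the threshold bound itself.
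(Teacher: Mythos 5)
Your proposal is correct and follows essentially the same route as the paper: identify the materialized linear symmetries of the vertex and vertical-face checks at infinite $Z$ bias, decode each of the $3L_xL_y$ vertical lines as an independent repetition code with MWPM, and union-bound to get $p_{\mathrm{success}}\geq(1-Ae^{-\alpha L})^{3L^2}\to 1$. Your added observations — that the Hadamard sends vertex operators to $X^{\otimes 2}\otimes Z^{\otimes 4}$ and face operators to $X^{\otimes 2}\otimes Z^{\otimes 2}$ so the classical code decouples into vertical-qubit/vertex and horizontal-qubit/vertical-face sectors, that the telescoping cancellation produces the conservation laws, and that the $xy$-face syndromes are redundant and only useful as a refinement — are all correct and merely spell out details the paper leaves implicit.
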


\begin{proof}
This code has linear materialized symmetries as shown in \cref{fig:toric-symmetries}.
The first set of symmetry operators consists of products of vertex stabilizers along a one-dimensional closed cycle
in the $z$ direction.
These products consist solely of Pauli $Z$ operators and hence effectively act as identity at infinite $Z$ bias.
The other symmetry consists of products of XZZX face stabilizers in the $xz$ and $yz$ planes along vertical lines. 
Due to the conservation laws obeyed by the syndrome along these symmetry lines,
we can independently match excitations along these lines as explained below. 

At infinite bias, we have only Pauli $Z$ errors, which anticommute with only $X$ Pauli operators in the stabilizer generators. Hence one can ignore the $Z$ terms and consider only anticommutation between the
$Z$ errors and the $X$ stabilizer generators.
This is equivalent to considering a classical parity-check code where $Z$ errors
are detected by a parity-check matrix that denotes the location of $X$ terms in
the stabilizer generators.
Thus the Clifford-deformed 3D surface code becomes a classical code, with weight-2 checks on vertices, $xz$ faces, and $yz$ faces.
These checks form the linear symmetries discussed earlier and illustrated in \cref{fig:toric-symmetries}.
Errors on the qubits oriented in the $z$ direction can be decoded by performing matching on the vertices along the corresponding symmetry lines.
An example of decoding of the errors that create syndromes of vertex operators is illustrated in \cref{fig:toric-3d-decoder-a}.
Errors on qubits oriented in the $x$ and $y$ directions can be decoded by performing
matching on the $xz$ and $yz$ faces along each vertical line respectively.
An example of decoding of face syndromes is illustrated in \cref{fig:toric-3d-decoder-b}.
If all these $3L^2$ matchings succeed, by correctly identifying the position of the errors, we have succeeded in decoding all the qubits.
Therefore, the probability of success is lower-bounded by the probability of succeeding in all these $3L^2$ matchings.
Hence, for a fixed physical error rate below $50\%$, we have
\begin{align}
    p_{\text{success}} \geq \left(1- A e^{-\alpha L}\right)^{3L^2} \xrightarrow[L \to \infty]{} 1
\end{align}
where $\alpha$ and $A$ are positive constants. Therefore, the code has a threshold error rate of $50\%$ at infinite $Z$ bias.
\end{proof}

While the decoder described in the proof is sufficient to get a $50\%$ threshold error rate, note that it can be further improved by taking into account the $xy$ face syndrome information as well. Let us consider a failed matching during the face decoding step.
By definition, it results in a line of errors on the horizontal qubits along the vertical direction, as illustrated in \cref{fig:toric-3d-decoder-c}.
It means that all the $xy$ face syndromes are identical on every $xy$ plane. Since we have a 2D materialized symmetry on $xy$ planes, we can decode them using a 2D minimum-weight perfect matching algorithm to return to the codespace.

\subsubsection{3D surface code on the checkerboard lattice}
Topological codes can be defined on various lattices or triangulations of a manifold for the same topological order.
The 3D surface code on a checkerboard lattice is shown in \cref{fig:rhombic-stabilizers-original}.
The code is defined using $Z$ cube stabilizers on one sublattice and $X$ triangle stabilizers on the other sublattice.

\begin{figure}[t]

    \subfloat[]{
        \includegraphics[scale=1.15]{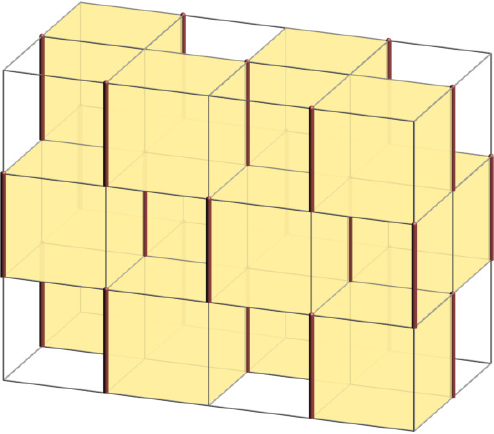}
        \label{fig:rhombic-checkerboard-deformation}
    }\\
    \subfloat[]{
        \includegraphics{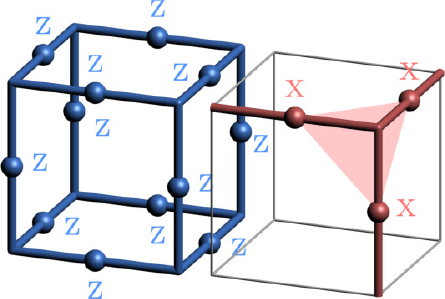}
        \label{fig:rhombic-stabilizers-original}
    }
    \subfloat[]{
        \includegraphics{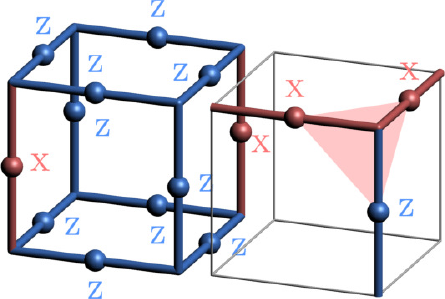}
        \label{fig:rhombic-stabilizers-def}
    }
    \caption{3D surface code on the checkerboard lattice.
    \textbf{(a)} Checkerboard lattice. 
    The yellow cubes represent the cube stabilizers and the empty cubes represent where triangle stabilizers would go.
    They have been omitted for clarity, but note that there are 8 of them in each empty cube.
    \textbf{(b)} Original CSS stabilizers.
    \textbf{(c)} Stabilizers of the Clifford-deformed code. The red edges highlighted in \textbf{(a)} are the edges where the $X\leftrightarrow Z$ Clifford deformation is applied.
    }
    \label{fig:rhombic-stabilizers}
\end{figure}

\begin{figure}[t]
    \centering
    \subfloat[]{
        \includegraphics[scale=1.15]{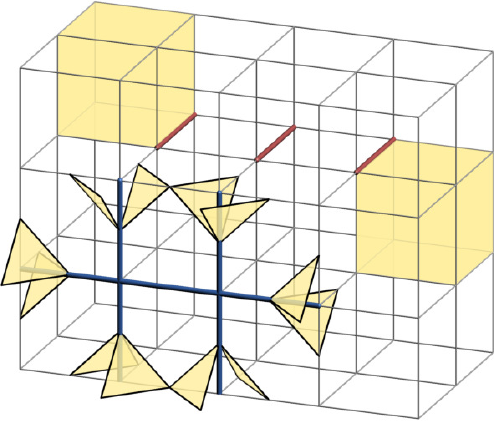}
        \label{fig:rhombic-errors}
    }
    \hspace{1cm}
    \subfloat[]{
        \includegraphics[scale=1.15]{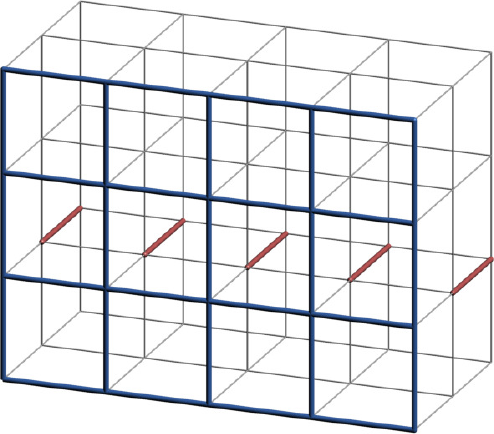}
        \label{fig:rhombic-logicals}
    }
    \caption{Errors and logical operators of the surface code on a
    checkerboard lattice.
    \textbf{(a)} Errors in the surface code on a checkerboard lattice with
    non-trivial syndromes (yellow).
    \textbf{(b)} Example of $X$ and $Z$ logical operators of the surface code
    on a checkerboard lattice.
    }
\end{figure}

Such a variant of the 3D surface code was used in the construction of 3D surface codes with a transversal CCZ gate~\cite{Vasmer2019}.
Moreover, these checkerboard lattice codes are instrumental in the construction of the CCZ gate for the 2D surface code~\cite{BrownCCZ},
and the 3D subsystem surface code~\cite{kubica2021single}.
This surface code variant is defined on a cubic lattice of even dimensions with qubits sitting on edges.
The cubic cells of the checkerboard lattice come in two colors.
Half of these cells (i.e.\ of one color) have a 12-body $Z$ cube stabilizer supported on them i.e.\ $A_c = \prod_{e \in c} Z_e$ is product of $Z$ operators over the twelve edges of the cube.
The other half of the cubic cells each have eight triangle-shaped stabilizer operators, associated with the eight vertices of the cell.
A triangle stabilizer on a vertex $v$ of a cubic cell $c$ is defined as the product of three $X$ operators adjacent to $v$ and contained in $c$, $B_{c,v}= \prod_{e \in c \cap v} X_e$.
The stabilizer generators are illustrated in \cref{fig:rhombic-stabilizers}.
Since the topological order is independent of the lattice details, the syndromes of this code also come in point-like and loop-like flavors.
The syndromes of the cube stabilizers are point-like and created at the ends of a string of Pauli $X$ errors.
The syndromes associated with the triangle stabilizers form a loop around membranes of $Z$ errors, as
shown in \cref{fig:rhombic-errors}.

The checkerboard lattice surface code
also encodes three logical qubits on an $\emph{even}\times \emph{even} \times \emph{even}$ torus with the logical operator pairs
consisting of nontrivial $X$ strings and $Z$ membranes along and orthogonal to three lattice directions respectively (see \cref{fig:rhombic-logicals}).
We consider a Clifford deformation of the checkerboard lattice surface code which consists of applying a Hadamard operation on half of the vertical qubits, 
in a three-dimensional checkerboard manner (see \cref{fig:rhombic-checkerboard-deformation,fig:rhombic-stabilizers-def}). This Clifford-deformed checkerboard lattice surface code has a $50\%$ threshold error rate under pure $Z$ noise. The proof is presented in \cref{sec:rhombic-thresholdproof-appendix}.

\subsection{3D color code}

The 3D color code can be defined on any 4-valent 3D lattice whose cells are 4-colorable, i.e.\ one should be able to assign one of four colors to each of the cells such that any two cells sharing a face have different colors.
Here, we study the 3D color code defined on the truncated octahedral lattice with periodic boundary conditions,
as shown in \cref{fig:3d-color-code-lattice}. In this lattice, each cell is a truncated octahedron, 
made of 24 vertices, 6 square faces and 8 hexagonal faces. $X$ stabilizer generators are defined on every 
cell, and $Z$ stabilizers on every face, as shown in \cref{fig:3d-color-code-stabilizers}. 
Coloring the cells using yellow, red, blue and green, we can describe the lattice as the interlacing of
a red-yellow and a blue-green sublattice, where cells of each sublattice are connected via square faces.
Cells belonging to different sublattices are connected via hexagonal faces.
Here, we use the convention of describing faces by the two colors of their adjacent cells. For instance,
a face at the intersection of a yellow cell and a red cell is called a yellow-red face.

\begin{figure}
    \centering
    \subfloat[]{
        \includegraphics{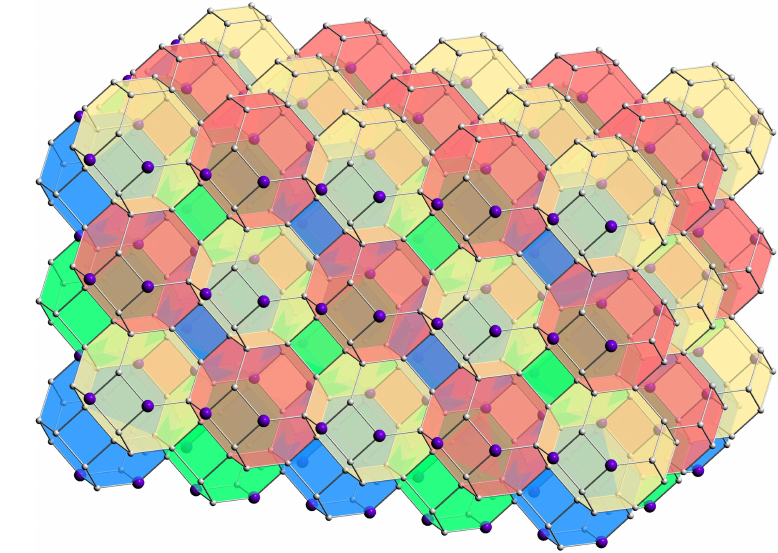}
        \label{fig:3d-color-code-lattice}
    }
    \subfloat[]{
        \includegraphics{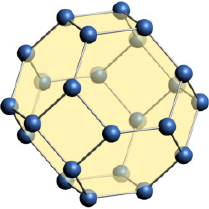}
        \includegraphics{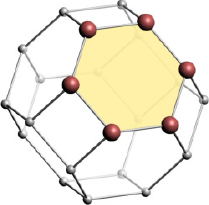}
        \includegraphics{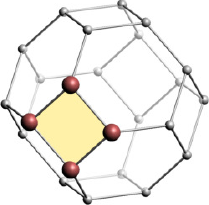}
        \label{fig:3d-color-code-stabilizers}
    }
    \caption{3D color code on a truncated octahedral lattice with periodic boundary conditions.
    \textbf{(a)} Truncated octahedral lattice, where qubits live on vertices and stabilizers live on cells
    and faces. The purple vertices correspond to qubits that we choose to Clifford-deform with a Hadamard operation.
    \textbf{(b)} Original stabilizers. $Z$ stabilizer generators are supported on the 24 qubits (blue vertices) of every cell.
    $X$ stabilizer generators are supported on every face, both hexagonal and square.
    }
    \label{fig:3d-color-code-lattice-stabilizers}
\end{figure}
\begin{figure}
    \centering
    \subfloat[]{
        \includegraphics{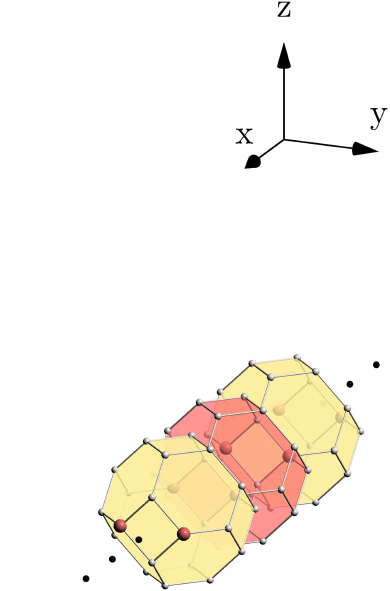}
        \label{fig:3d-color-code-cell-symmetries-a}
    }
    \subfloat[]{
        \includegraphics{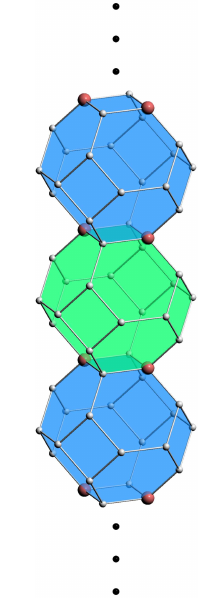}
        \label{fig:3d-color-code-cell-symmetries-b}
    }
    \subfloat[]{
        \includegraphics{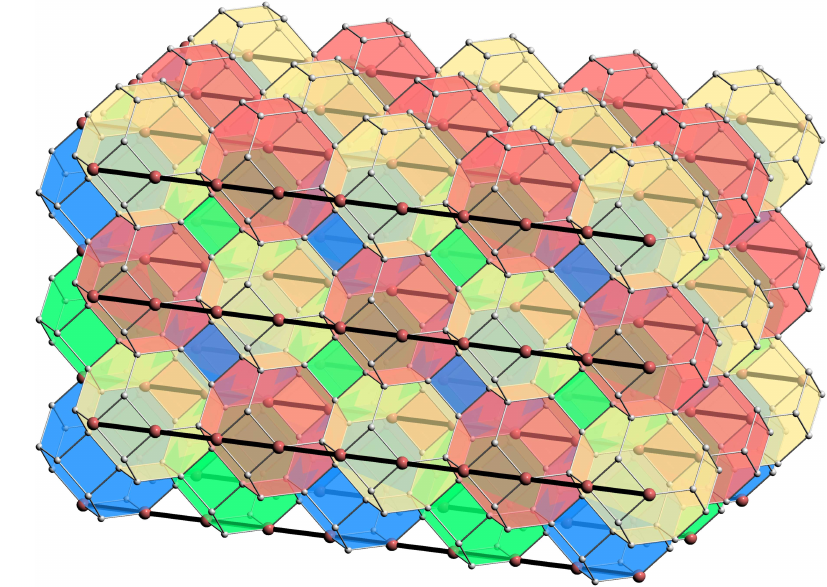}
        \label{fig:3d-color-code-cell-symmetries-c}
    }
    \caption{Linear symmetries of the cells.
    Red qubits are the Clifford-deformed ones, which effectively become the support of the cell stabilizers in the infinite $Z$ bias regime.
    \textbf{(a)} Symmetry of the yellow and red cells along the $x$ axis.
    \textbf{(b)} Symmetry of the blue and green cells along the $z$ axis.
    \textbf{(c)} Matching along the symmetries (a) and (b) gives rise
    to 2-body checks that form new linear symmetries along the $y$ axis (black lines)
    }
    \label{fig:3d-color-code-cell-symmetries}
\end{figure}
There exists a mapping between color codes and surface codes in any spatial dimension.  A string of $X$ errors also produces a pair of point-like syndromes on 3-cells,
and a membrane of $Z$ errors also produces a loop-like syndromes on 2-cells (faces). If we impose periodic boundary conditions on the lattice defined above, the code encodes 9 logical qubits, with three $X$ string logical operators on each direction, and three $Z$ membrane logical operators on each plane. The similarities between the two codes can be understood using a folding procedure, which maps three copies
of the 3D surface code to the 3D color code \cite{kubica2015unfolding,vasmerMorphingQuantumCodes2022}. 
However, the 3D color code has some unique properties, such its transversal $T$ gate and its flexible subsystem variant, making it a competitive candidate for a practical 3D code.

To tailor the 3D color code to biased noise, we select all yellow-red squares normal to the $x$ direction,
and apply a Hadamard to diagonally opposite qubits of each square, as illustrated in \cref{fig:3d-color-code-lattice}.
We now show that the resulting code has a $50\%$ threshold error rate at infinite $Z$ bias.

\begin{theorem}
    The Clifford-deformed 3D color code has a threshold error rate of $50\%$ under pure $Z$ noise
\end{theorem}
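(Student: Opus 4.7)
The plan is to replay the two-step weight-reduction argument used for the XY code in \cref{sec:xy-code} and the cubic-lattice 3D surface code. At infinite $Z$ bias only $Z$ errors occur, so every stabilizer is effectively collapsed to its $X$-support and the code reduces to a classical linear code. It suffices to design a decoder consisting of a sequence of one-dimensional minimum-weight perfect matchings, each of which is a repetition-code decoding problem, such that their simultaneous success determines a valid correction on every qubit; since each such matching succeeds with probability at least $1 - A e^{-\alpha L}$ below $p = 1/2$, only polynomially many of them are required to drive the overall failure probability to zero.

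The first step is to work out the effective $X$-support of each deformed cell and face stabilizer. The Hadamard acts on diagonally opposite qubits of every yellow-red $x$-normal square, so the $Z$-type cell stabilizer of any yellow or red cell has a pair of its $Z$ operators swapped for $X$ on exactly those deformed vertices, while the $X$-type face stabilizers acquire $Z$ operators there. The second step is to extract the materialized linear symmetries illustrated in \cref{fig:3d-color-code-cell-symmetries-a,fig:3d-color-code-cell-symmetries-b}: the product of yellow and red cell stabilizers along any $x$-directed line, and the product of blue and green cell stabilizers along any $z$-directed line, reduces to a product of $Z$ operators, hence acts trivially on syndromes in the infinite-bias regime. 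Each such symmetry supplies a one-dimensional conservation law that the decoder exploits by matching the corresponding cell syndromes along the line.

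The third step is weight reduction: as in the XY-code argument, the outcomes of these first-round matchings can be reinterpreted as the values of new weight-$2$ checks oriented along the $y$ axis (\cref{fig:3d-color-code-cell-symmetries-c}), and these effective checks themselves form linear symmetries along $y$-lines. A second round of one-dimensional matchings along $y$ then pins down the remaining corrections, with the face stabilizer syndromes handled by an analogous sequence of 1D matchings along their own symmetry lines. Summing, there are $O(L^2)$ one-dimensional matchings in total, each a repetition-code instance of length $O(L)$, so for any physical error rate below $1/2$,
\begin{equation}
    p_{\text{success}} \geq \left(1 - A e^{-\alpha L}\right)^{c L^2} \xrightarrow[L \to \infty]{} 1
\end{equation}
for positive constants $A$, $\alpha$, $c$, establishing the $50\%$ threshold. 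The main obstacle will be the combinatorial bookkeeping on the truncated octahedral lattice: one must verify that the chosen deformation really does produce the advertised symmetries along \emph{both} the $x$ and $z$ axes, and that the weight-reduced checks are genuinely weight-$2$ and organise into clean $y$-line symmetries. Treating the $X$-type face stabilizers on the same footing as the cell stabilizers is the subtlest part, since the faces come in several colors and both hexagonal and square shapes, and the compatibility of their deformed $X$-supports with the two-step matching scheme must be checked carefully before the counting argument above is allowed to close.
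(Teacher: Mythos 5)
Your overall strategy—collapse to the classical $X$-support, exhibit materialized linear symmetries, decode them as repetition codes, and multiply the per-line success probabilities—is exactly the paper's approach, and your treatment of the cell sector matches the paper: two symmetries along $x$ (yellow/red cells) and $z$ (blue/green cells), weight-reduce, then a final round along $y$. So far so good.

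The gap is in the face sector, which you essentially wave off as ``an analogous sequence of 1D matchings along their own symmetry lines'' to be ``checked carefully.'' That is precisely the part of the theorem that requires proof, and in the paper it is by far the longest and most delicate portion—three distinct decoding stages rather than the two you allude to. The effective face checks are not uniform: hexagonal faces reduce to weight-4 objects while square faces split by color (yellow-red versus blue-green) and by orientation ($xy$, $yz$, or $xz$ normal), and each behaves differently under the deformation. The first face stage requires noticing that the \emph{product of four hexagons around a red or yellow cell} is an effective weight-4 check with a $y$-direction symmetry, and that its weight-reduced residue must be stitched together with a specific alternating sequence of colored square faces to form the next-level line symmetry; the second stage uses a square–hexagon–hexagon repeating pattern along $y$, producing two \emph{different} kinds of weight-2 residues that must be routed into two separate $z$- and $y$-line symmetries; the third stage consumes the leftover square-hexagon residues. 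None of this is ``analogous'' to the cell decoding in a way that can be asserted without exhibiting the symmetries, and indeed it is not obvious a priori that the chosen deformation produces a closed sequence of repetition-code reductions that covers every qubit. Until you construct those face symmetries explicitly (or give an independent argument that such a cascade must exist), the counting bound $p_{\text{success}}\geq (1-Ae^{-\alpha L})^{cL^2}$ does not follow, because you have not yet shown that every qubit's error is determined by some polynomial collection of 1D matchings.
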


\begin{proof}
We start by decoding the syndromes on the 3-cells, effectively supported on four qubits at infinite $Z$ bias (the purple qubits in \cref{fig:3d-color-code-lattice}).
For this, we exploit two materialized linear symmetries, represented in \cref{fig:3d-color-code-cell-symmetries-a,fig:3d-color-code-cell-symmetries-b}, along the $x$ and $z$ directions respectively.
By the weight-reduction technique, matching cell syndromes along these two directions results in new weight-2 checks, which form linear symmetries along the $y$ axis,
as shown in \cref{fig:3d-color-code-cell-symmetries-c}.
Matching along these resulting linear symmetries completes the decoding of the syndromes on the 3-cells. Since all steps consist of decoding repetition codes, we have $50\%$ threshold error rate on the cell sector.

\begin{figure}
    \centering
    \subfloat[]{
        \includegraphics{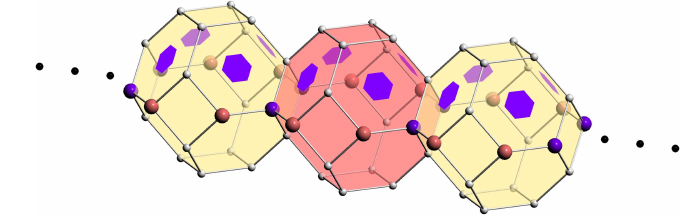}
        \label{fig:3d-color-code-face-symmetries-1a}
    }
    \subfloat[]{
        \includegraphics{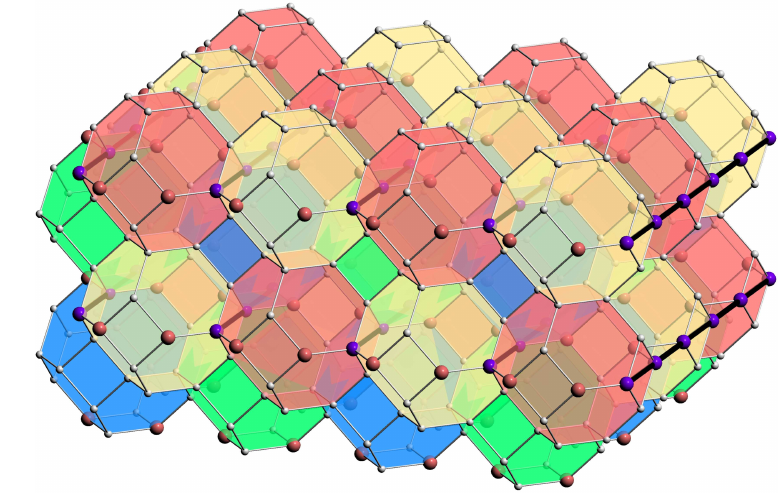}
        \label{fig:3d-color-code-face-symmetries-1b}
    }
    \caption{Step 1 of decoding of face syndromes.
    \textbf{(a)} Linear symmetry of the hexagonal faces: multiplying four hexagonal faces (marked by purple hexagons) on
    any red or yellow cell gives an eight-body check (red and purple vertices).
    Since the red vertices represent Clifford-deformed qubits that can effectively be removed from the stabilizers at infinite $Z$ bias,
    these eight-body checks effectively become four-body (purple vertices).
    The product of these checks along the $y$ axis forms a linear symmetry.
    \textbf{(b)} Successful matching along the linear symmetry in (a) gives rise to two-body checks
    supported on the yellow-red squares of the $xz$ plane.
    Combining them with the checks lying on the green-blue squares of the $xy$ plane,
    which are effectively two-body due to the Clifford deformation, we get a linear symmetry along the $x$ direction,
    represented by the black lines. Matching along these black lines allows us to decode errors on all purple qubits.
    }
    \label{fig:3d-color-code-face-symmetries-1}
\end{figure}

\begin{figure}
    \centering
    \begin{tabular}{cc}
        \subfloat[]{
            \makecell{
                \includegraphics[width=0.38\textwidth]{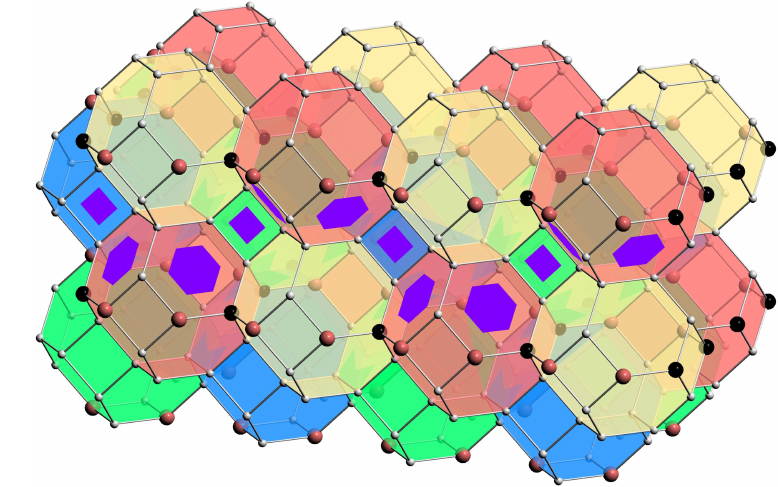}
                \includegraphics[width=0.38\textwidth]{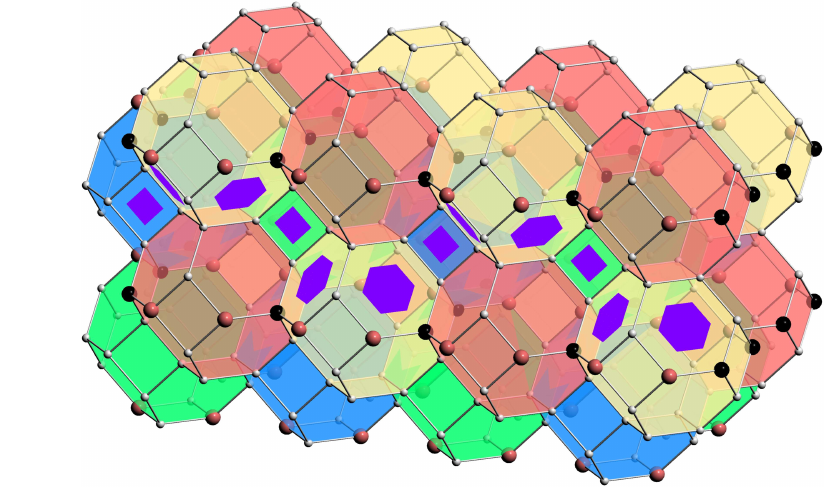}
                \label{fig:3d-color-code-face-symmetries-2a}
            }
        } &
        \subfloat[]{
            \makecell{
                \includegraphics{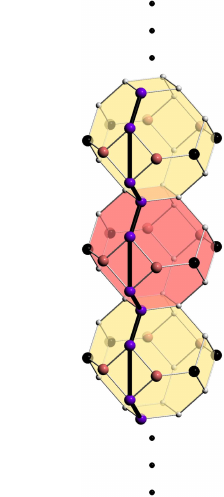}
                \label{fig:3d-color-code-face-symmetries-2b}
            }
        }
    \end{tabular}
    \caption{Step 2 of decoding of face syndromes. Red vertices represent the Clifford-deformed qubits (effectively excluded from the face stabilizers at infinite $Z$ bias),
    while black vertices represent the qubits on which errors have already been decoded in the previous step.
    \textbf{(a)} Linear symmetries in the $y$ direction, involving weight-4 checks sitting on both squares and hexagons.
    \textbf{(b)} Matching along the symmetries represented in (a) results in new
    weight-2 checks, supported either on ends of hexagon-square edges or on ends
    of hexagon-hexagon edges.
    Combining the new checks supported on ends of hexagon-hexagon edges with the checks supported on the red-yellow squares of the $yz$ plane
    (which are also weight-2 due to the Clifford deformation), we get a linear symmetry along the $z$ axis, represented by the black line.
    The remaining new weight-2 checks on ends of hexagon-square edges on the blue-green squares of the $yz$ plane are used in the next step.
    Note also that the yellow-red squares of the $xy$ plane effectively
    become weight-2 checks after this step.
    }
    \label{fig:3d-color-code-face-symmetries-2}
\end{figure}

Decoding of the face syndromes is done in three steps, each of which decodes errors on a different subset of the qubits.
In the first step, illustrated in \cref{fig:3d-color-code-face-symmetries-1}, we notice the existence of a linear symmetry
along the $y$ direction. The main unit of this symmetry is a four-body check constructed by taking the product of four adjacent hexagons
in a red or yellow cell, as shown in
\cref{fig:3d-color-code-face-symmetries-1a}. Matching along this symmetry
results in new weight-2 checks on the yellow-red faces of the $yz$ plane, which
can be combined with effectively weight-2 checks on the blue-green faces of the
$xy$ plane in an alternating fashion
to form a linear symmetry along the $x$ axis, as shown in \cref{fig:3d-color-code-face-symmetries-1b}.
Matching along such symmetries completes the decoding of errors on the purple qubits in \cref{fig:3d-color-code-face-symmetries-1b}.

\begin{figure}
    \centering
    \includegraphics[scale=0.99]{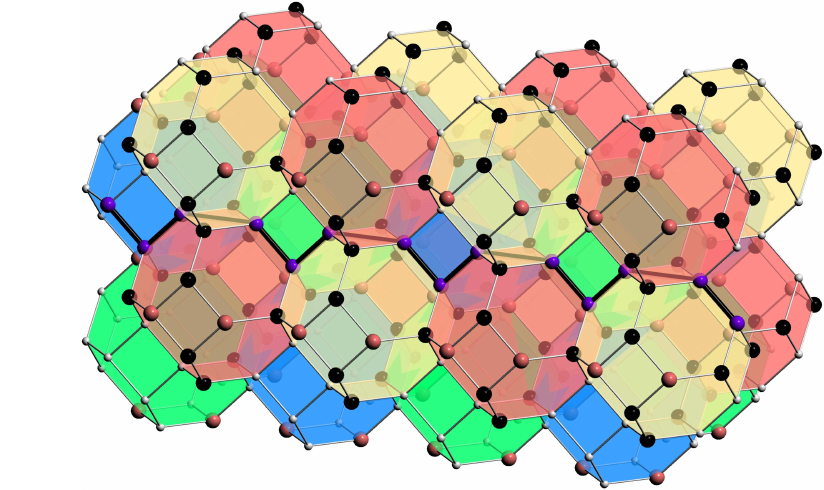}
    \includegraphics[scale=0.99]{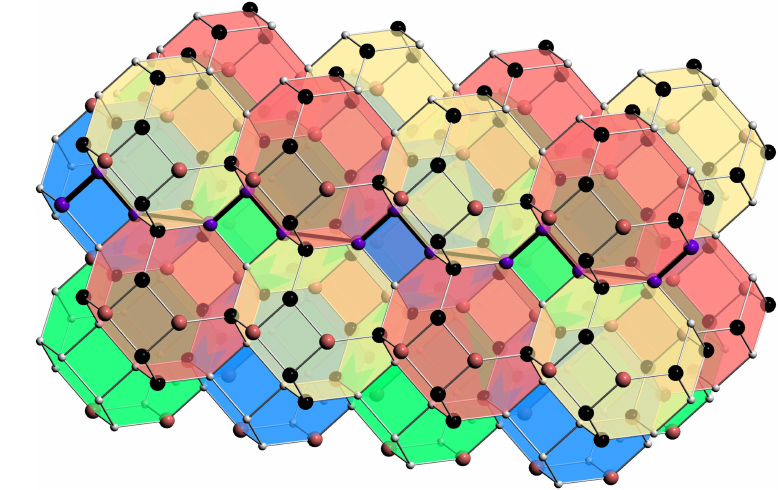}
    \caption{Step 3 of decoding of face syndromes.
    Using both the weight-2 checks supported on the blue-green squares of the $yz$ plane, obtained in the second decoding step,
    and the checks sitting on the yellow-red squares of the $xy$ plane (which are weight-2 due to the second step as well),
    we get new linear symmetries along the $y$ axis, shown with black lines.
    Matching along these black lines allows us to decode errors on all the
    remaining qubits shown in purple.
    }
    \label{fig:3d-color-code-face-symmetries-3}
\end{figure}
In the second step, illustrated in \cref{fig:3d-color-code-face-symmetries-2}, we notice that the hexagons are now effectively weight-4,
and form linear symmetries in the $y$ direction when combined with square faces
in a square-hexagon-hexagon repeating manner.
Matching along these symmetries, as shown in \cref{fig:3d-color-code-face-symmetries-2a}, gives us new weight-2 checks,
supported on the ends of either a hexagon-hexagon intersection edge, or a square-hexagon edge.
Combining the weight-2 checks on hexagon-hexagon edges with the weight-2 checks
supported on the red-yellow squares of the $yz$ plane, we get new linear symmetries in the $z$ direction, as shown in \cref{fig:3d-color-code-face-symmetries-2b}.
Matching along these completes the decoding of errors on the purple qubits of \cref{fig:3d-color-code-face-symmetries-2b}.

In the final step, illustrated in \cref{fig:3d-color-code-face-symmetries-3},
we observe that the remaining weight-2 checks on square-hexagon edges obtained
in the second step combine with effectively weight-2 checks on yellow-red
squares of the $xy$ plane to form new linear symmetries along the $y$ axis.
Matching along these symmetries decodes the errors on the remaining qubits.

Since errors on all the qubits have been decoded by performing matching on a polynomial number of repetition codes, our decoder has a threshold error rate of $50\%$ for the  Clifford-deformed 3D color code.
\end{proof}

\subsection{Fracton codes}
Fracton models offer an interesting set of models to study under biased noise because the models have intrinsically rigid logical operators. 
This means that under multiplication by stabilizer generators, the logical operators do not deform in a topological sense. 
For instance, under stabilizer multiplication, a rigid string-like logical operator may not deform into a string-like logical operator of the same width.

By choosing an appropriate Clifford deformation of the stabilizers, fracton models can have materialized subsystem symmetries 
with respect to infinite-bias noise. 
The combination of the intrinsic conservation laws in addition to the conservation laws associated with the materialized subsystem symmetries 
with respect to the noise, can lead to decoders with high threshold error rates.
Below, we discuss a few canonical examples of fracton models, the X-cube model (type-I fracton model), 
the Sierpiński fracton model (fractal type-I) and the Haah code (type-II) along with their Clifford-deformed codes~\footnote{Type-I fracton
models have string logical operators while type-II do not. Fractal type-I fracton models have fractal-shaped rigid logical operators.}.

\begin{figure}[t]
    \centering
    \subfloat[]{
        \includegraphics[scale=0.8]{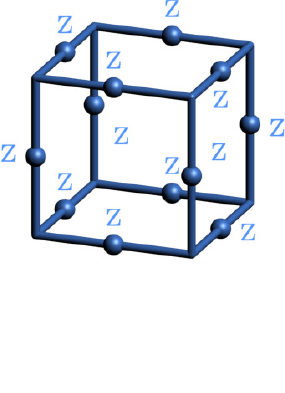}
        \includegraphics[scale=0.8]{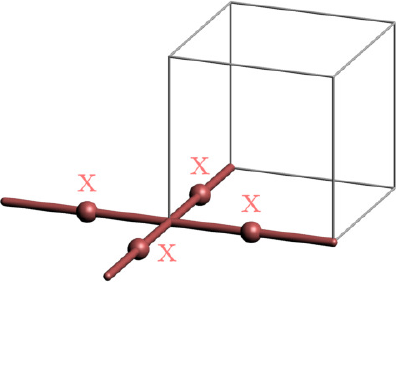}
        \includegraphics[scale=0.8]{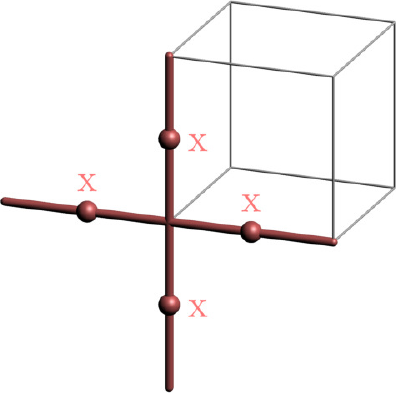}
        \includegraphics[scale=0.8]{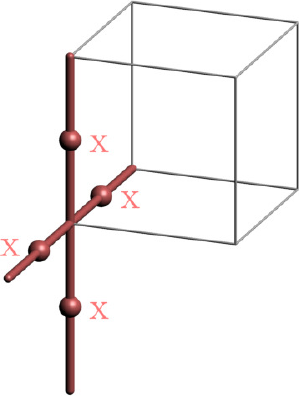}
        \label{fig:xcube-stabilizers-original}
    }
    \hfill
    \subfloat[]{
        \includegraphics[scale=0.8]{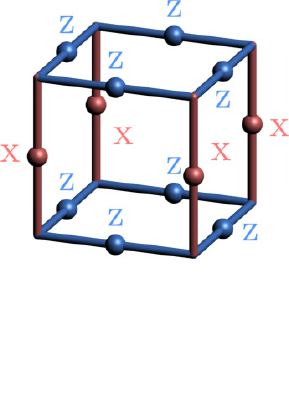}
        \includegraphics[scale=0.8]{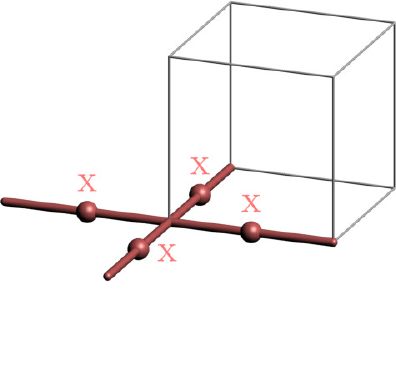}
        \includegraphics[scale=0.8]{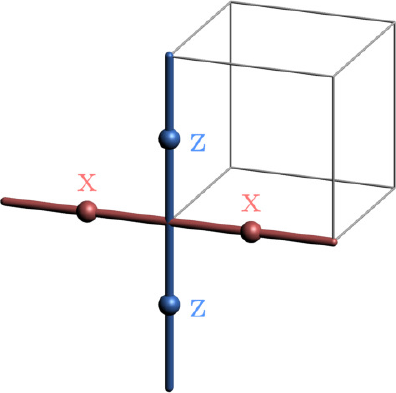}
        \includegraphics[scale=0.8]{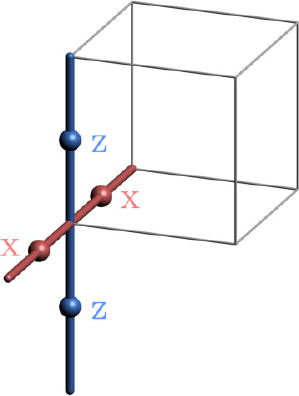}
        \label{fig:xcube-stabilizers-def}
    }
    \caption{X-cube model. \textbf{(a)} Original stabilizers. \textbf{(b)}
    Clifford-deformed stabilizers} %
    \label{fig:xcube-stabilizers}
\end{figure}
\begin{figure}[t]
    \centering
    \subfloat[]{
        \includegraphics[scale=1.2]{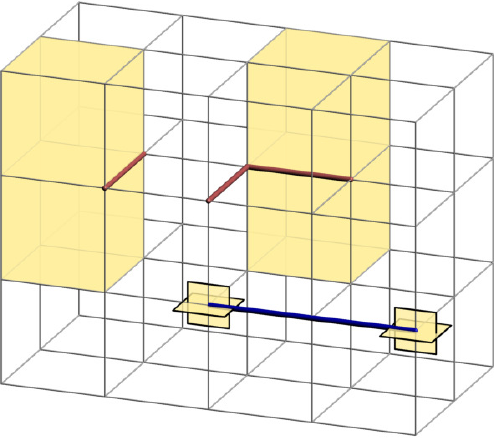}
        \label{fig:xcube-errors}
    }
    \subfloat[]{
        \includegraphics[scale=1.2]{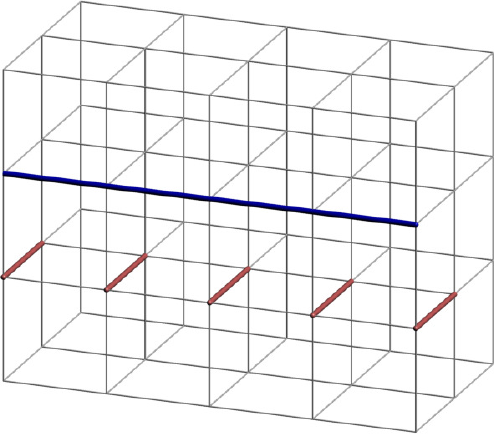}
        \label{fig:xcube-logicals}
    }
    \caption{
        \textbf{(a)} Errors in the X-cube model create two types of syndromes:
        planons (at cells) and lineons (at vertices with orientation).
        \textbf{(b)} Example of $X$ and $Z$ logical operators of the X-cube model in red
        and blue respectively.
    } %
    \label{fig:xcube-errors-logicals}
\end{figure}

\begin{figure}
    \centering
    \subfloat[]{
        \includegraphics{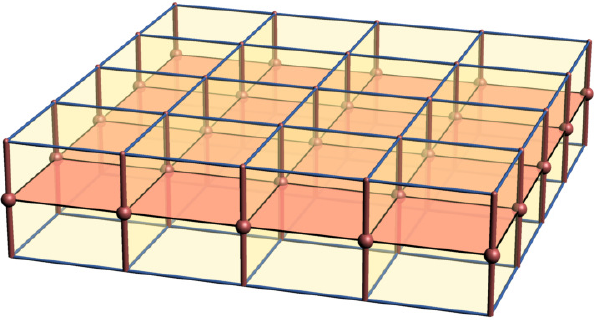}
        \label{fig:xcube-symmetries-a}
    }
    \subfloat[]{
        \includegraphics{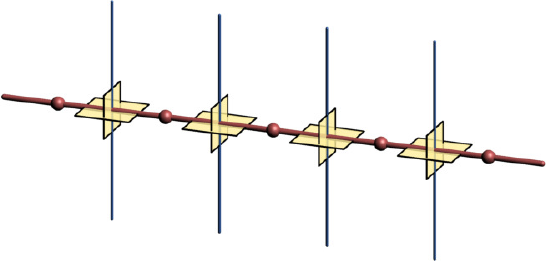}
        \includegraphics{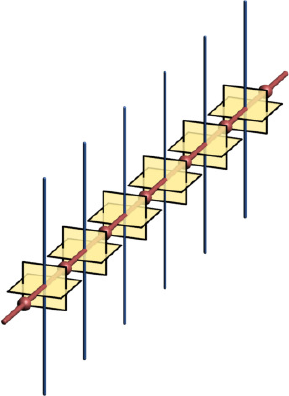}
        \label{fig:xcube-symmetries-b}
    }
    \caption{Symmetries of the Clifford-deformed X-cube model at infinite $Z$ bias.
    \textbf{(a)} Cube stabilizers reduce to four-body checks (red squares) between vertical qubits,
    forming an independent infinite-bias XY code on each layer. They therefore inherit the linear symmetries
    and the $50\%$-threshold error rate of the XY code (see \cref{sec:xy-code}).
    \textbf{(b)} Vertex stabilizers in the $yz$ and $xz$ planes effectively become two-body checks (red edges) between qubits on horizontal edges that form linear symmetries along the $x$ and $y$ axes respectively.
    }
    \label{fig:xcube-symmetries}
\end{figure}

\subsubsection{X-cube model}
\label{sec:xcube-deformation}

The X-cube fracton model is the canonical example of a (foliated) type-I fracton topological order which is defined by the 
presence of topological excitations with restricted mobility. 
It is characterized by a sub-extensive ground space degeneracy and rigid string logical operators. 
A foliated topological stabilizer model is defined by a foliation structure~\cite{shirley2017fracton} 
which implies that the model can be grown by stacking with a 2D topological state and applying a local unitary. 
The X-cube model is 3-foliated, which implies that stacks of surface codes can be extracted under a local unitary along all three lattice directions.

The X-cube fracton model \cite{vijay2016fracton} is defined on a
cubic lattice with qubits on edges. The stabilizer generators come in two types: the \emph{cube stabilizers}, defined on each cubic cell of the
lattice as the product of $Z$ operators over the twelve edges of the cube,
$A_c = \prod_{e \in c} Z_e$, and the \emph{vertex stabilizers}, defined for each
vertex $v$ and orientation $u \in \{ \hat{x}, \hat{y}, \hat{z} \}$ as the
product of the four $X$ operators adjacent to $v$ and orthogonal to $u$,
$B_{v,u} = \prod_{e \in v: e\perp u} X_e$. See \cref{fig:xcube-stabilizers-original}. 
Considering the X-cube model on an $L_x \times L_y \times L_z$ cuboid with periodic boundary conditions, 
the logical operator basis has independent rigid logical string operators that cannot be deformed into each other,
i.e.\ are inequivalent under stabilizer multiplication.
This leads to a macroscopic number of independent logical operator pairs and a linear growth of the number of
encoded qubits.
These logical operators can be expressed as $\bar{X}^{\hat{i}}_{\hat{k},\ell},\bar{Z}^{\hat{j}}_{\hat{k},\ell}$ on pairs of non-contractible loops, where
$\hat{i}\neq \hat{j} \neq \hat{k}$ run over $\{\hat{x},\hat{y},\hat{z}\}$ and $\ell = 0,\dots,L_k-1$. They are defined as
\begin{align}
    \bar{X}^{\hat{x}}_{\hat{z},\ell} = \prod_x X_{x,0,\ell,\hat{z}} \, ,
&&
\bar{Z}^{\hat{y}}_{\hat{z},\ell} = \prod_y Z_{0,y,\ell,\hat{z}}\, ,
\end{align}
and in a similar fashion for other permutations of $x,y,z$,
where $X_{x,y,z,\hat{k}}$ (resp. $Z_{x,y,z,\hat{k}}$) denotes
a Pauli $X$ (resp.  $Z$) operator on the edge adjacent to
the vertex at coordinates $(x,y,z)$ pointing in the $+\hat{k}$ direction for
$\hat{k}\in\left\{ \hat{x},\hat{y},\hat{z} \right\}$.
These string operators are not independent due to the three relations given by $\prod_{\ell} \bar{X}^{\hat{i}}_{\hat{k},\ell} = \prod_{\ell} \bar{X}^{\hat{k}}_{\hat{i},\ell}$ and $\bar{Z}^{\hat{i}}_{\hat{j},0} = \bar{Z}^{\hat{i}}_{\hat{k},0}$. 
Thus, there are overall $2(L_x+L_y+L_z)-3$ logical operator pairs. 
These string operators are rigid in nature as is characteristic of type-I models. 
The rigidity of the string operators directly corresponds to the restricted mobility of excitations. 
For example, particles that are pair-created by a completely rigid undeformable string operator are restricted to move in one dimension 
and are therefore \emph{lineons}. 
Truncations of logical string operators of $X$ errors on a lattice plane create syndromes of cube stabilizers at their end points, 
as shown in \cref{fig:xcube-errors}.
These syndromes cannot freely move to another plane (under arbitrary noise) without creating other syndromes.
Hence, the cube syndromes are referred to as \emph{planons}.
Similarly, the vertex syndromes are created at the ends of rigid strings of $Z$ errors.
Note that two of the vertex stabilizer generators are violated at each end of the string.
This composite syndrome at each end is referred to as the lineon since it cannot move (under arbitrary noise) to another line away from the rigid string,
without creating more syndromes.

We consider a Clifford deformation of the X-cube model where a Hadamard is applied on all vertical edges, similar to the Clifford deformation of the 3D surface code on a cubic lattice.
The Clifford-deformed stabilizer generators are represented in \cref{fig:xcube-stabilizers-def}. 
At infinite $Z$ bias, lineons on the Clifford-deformed X-cube model can only be created by $Z$ errors on $z$ edges, while planons can only be created by $Z$ errors on $x$ and $y$ edges. As a result, we have the following materialized symmetries: the product of vertical planons along a vertical line is effectively the identity, and the product of cubes along a horizontal line is effectively the identity. These symmetries are represented in \cref{fig:xcube-symmetries}. Using the conservation laws associated with these symmetries, we prove that Clifford-deformed X-cube model has a threshold error rate of $50\%$ at infinite $Z$ bias. Note that, using statistical-mechanical simulations, the optimal infinite bias thresholds for the CSS X-cube model with the cube (fracton) term made of Pauli $X$ operators, have been found to be 15.2\% and 7.5\% at infinite $X$ and $Z$ biases respectively~\cite{song_optimal_xcube}.

\begin{theorem}
The Clifford-deformed X-cube model has a $50\%$ threshold error rate under pure $Z$ noise.
\end{theorem}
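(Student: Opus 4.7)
The plan is to show that, at infinite $Z$ bias, the Clifford-deformed X-cube model decomposes cleanly into three independent classical decoding problems—one for each edge orientation—each of which inherits a $50\%$ threshold from either a 1D repetition code or from the 2D infinite-bias XY code already handled in \cref{sec:xy-code}. This mirrors the strategy used above for the CSS 3D surface code and color code.

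First, I would work out the effective parity-check structure. Under the Hadamard deformation on the vertical ($\hat z$) edges, each cube stabilizer $A_c$ picks up an $X$ on its four vertical edges and keeps $Z$ on its eight horizontal edges; since $Z$-only operators are invisible to $Z$ errors, every cube reduces to a weight-4 $X$ check supported only on its four vertical edges. The vertex stabilizer $B_{v,\hat x}$ reduces to a weight-2 $X$ check on the two $\hat y$-edges incident to $v$ (its two $\hat z$-edges having been turned into $Z$ operators), and symmetrically $B_{v,\hat y}$ reduces to a weight-2 $X$ check on the two $\hat x$-edges incident to $v$. The remaining check $B_{v,\hat z}$ still acts nontrivially on four horizontal edges, but I simply discard it—throwing syndrome information away can only weaken the decoder, which is harmless for a lower bound on the threshold. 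The crucial observation is that the three surviving check families live on disjoint sets of qubits—vertical, $\hat y$-, and $\hat x$-edges respectively—so the three resulting decoding subproblems are completely independent.

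Next I would dispatch each subproblem. The reduced $B_{v,\hat y}$ checks along any line in the $\hat x$ direction form a 1D repetition code on the $\hat x$-edges of that line, yielding $L_yL_z$ independent repetition codes; symmetrically the reduced $B_{v,\hat x}$ checks give $L_xL_z$ repetition codes for the $\hat y$-edges. For the vertical sector, I would note that each cube's four vertical edges lie in a single $xy$-slab and project to the four corners of one square face; identifying the vertical qubits in the slab between heights $k$ and $k{+}1$ with vertices of an $L_x\times L_y$ grid and the reduced cube checks with the $X^{\otimes 4}$ plaquette operators, this sector becomes precisely $L_z$ independent copies of the infinite-bias XY code of \cref{sec:xy-code}, which admits a $50\%$-threshold decoder via the weight-reduction technique. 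Each of the $O(L^2)$ sub-decoders succeeds with probability at least $1 - A e^{-\alpha L}$ for any $p < 1/2$, so
\begin{align}
    p_{\text{success}} \geq \bigl(1 - A e^{-\alpha L}\bigr)^{O(L^2)} \xrightarrow[L\to\infty]{} 1,
\end{align}
establishing the claimed $50\%$ threshold.

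The main obstacle is really bookkeeping rather than mathematics: I want to be careful that the $\hat z$-sector genuinely decouples from the horizontal ones after discarding $B_{v,\hat z}$, and that the reduced cube checks factor layer by layer so that each $xy$-slab is an \emph{independent} XY problem rather than a stack coupled through shared cubes. Once that is verified, the rest is a clean invocation of the repetition-code and XY-code results already proven.
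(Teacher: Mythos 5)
Your proposal is correct and takes essentially the same approach as the paper: reduce the cube stabilizers to weight-4 $X$ checks on vertical edges that factor into $L_z$ independent infinite-bias XY codes (one per $xy$-slab), and reduce $B_{v,\hat x}$ and $B_{v,\hat y}$ to weight-2 checks on $\hat y$- and $\hat x$-edges respectively that factor into $O(L^2)$ independent repetition codes, with $B_{v,\hat z}$ discarded. The paper presents the horizontal edges as a single "lineon sector" while you split them by orientation, but this is a cosmetic difference; the argument and the resulting lower bound $p_{\text{success}} \geq (1 - A e^{-\alpha L})^{O(L^2)}$ are the same.
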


\begin{proof}

Let us first consider the cube stabilizer generators. As illustrated in \cref{fig:xcube-symmetries-a}, at infinite bias, these stabilizer generators are now effectively
supported on four vertical qubits and form independent sheets of infinite-bias XY surface codes on each layer.
We showed in \cref{sec:xy-code} that this code has linear symmetries on all its rows and columns,
and by using the weight-reduction technique, we proved that it has a threshold error rate of $50\%$. 
Since decoding the cube stabilizers at infinite bias is equivalent to decoding $L$ different XY codes, 
we can infer that the cube sector has a threshold error rate of $50\%$.

Let us now consider the lineon sector. As illustrated in \cref{fig:xcube-symmetries-b}, at infinite bias, vertex stabilizers in the $yz$ and $xz$ planes effectively become two-body checks between qubits on horizontal edges that form linear symmetries along the $x$ and $y$ axes respectively.
The problem of decoding the lineon sector therefore becomes equivalent to decoding $O(L^2)$ repetition codes,
which also has a $50\%$ threshold error rate.
Therefore, the overall code has a threshold error rate of $50\%$.
\end{proof}


\subsubsection{Sierpiński fractal model}
The Sierpiński fractal model, due to Castelnovo, Chamon and Yoshida~\cite{doi:10.1080/14786435.2011.609152,yoshida2013exotic}, is the simplest example of fractal type-I topological order.
Fractcal type-I topological order is defined as type-I fracton topological order that is characterized by the presence of fractal-shaped logical operators and hence does not admit a foliation structure. The model is defined on a cubic lattice, where each vertex has two qubits.
The stabilizer generators are shown in \cref{fig:Sierpiński-stabilizers-original}. This model supports rigid string operators (corresponding to one-dimensional particles or lineons) in the $\hat{z}$ direction and a Sierpinski triangle fractal operator that moves topological excitations apart in 2D. 
Hence this model provides an example with no planons which is consistent with it not having a foliation structure~\cite{Dua_2019,Dua_2020}. 
\begin{figure}
    \centering
    \subfloat[]{
        \includegraphics{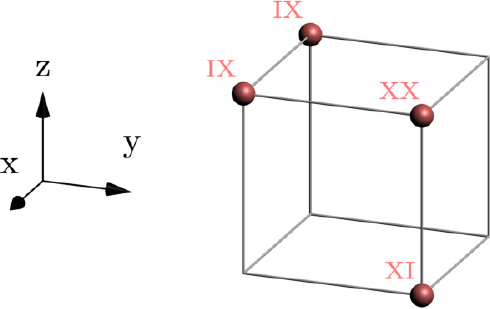}
        \includegraphics{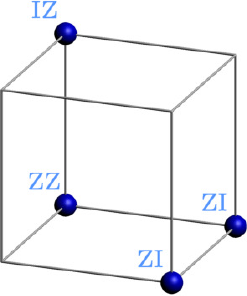}
        \label{fig:Sierpiński-stabilizers-original}
    }
    \hfill
    \subfloat[]{
        \includegraphics{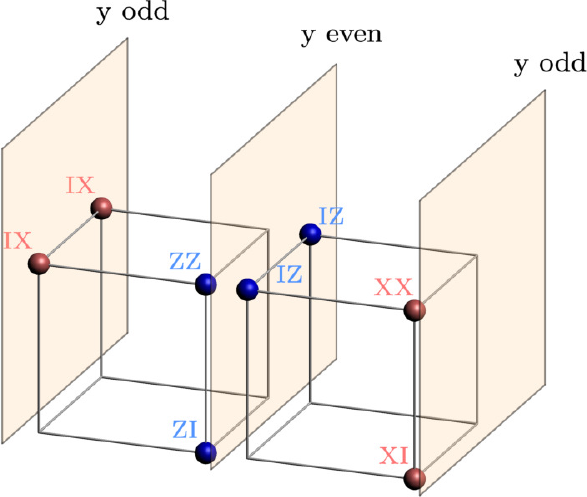}
        \label{fig:Sierpiński-stabilizers-def-1}
    }
    \subfloat[]{
        \includegraphics{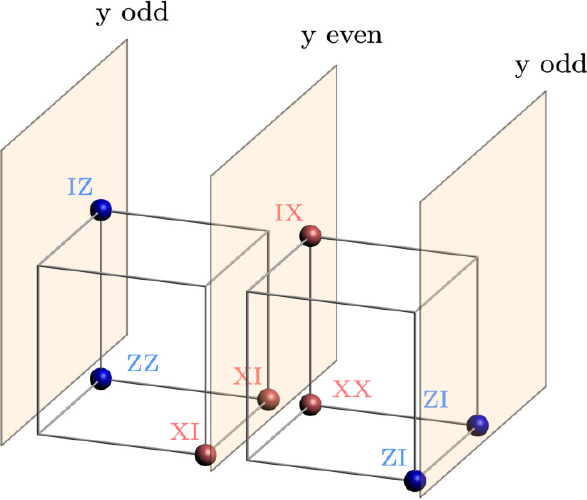}
        \label{fig:Sierpiński-stabilizers-def-2}
    }
    \caption{Sierpiński code, original and Clifford-deformed.
    \textbf{(a)} Original CSS stabilizers, defined on every cube of a cubic lattice with two qubits per vertex.
    There are two types: $X$ stabilizers (left) and $Z$ stabilizers (right).
    \textbf{(b)} Clifford-deformed $X$ stabilizers. 
    The Clifford deformation consists of applying a Hadamard on all qubits of
    vertices on $xz$ planes of even $y$.
    Therefore, stabilizers alternate between the left and the right versions depending on the parity of $y$
    \textbf{(c)} Clifford-deformed $Z$ stabilizers, using the Clifford deformation described in (b).
    }
    \label{fig:Sierpiński-stabilizers}
\end{figure}

\begin{figure}
    \centering
    \subfloat[]{
        \includegraphics{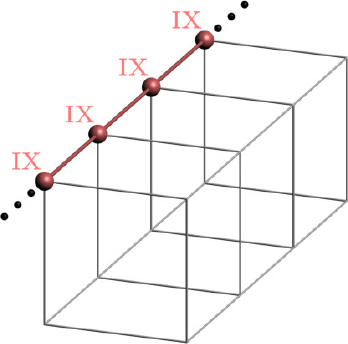}
        \label{fig:Sierpiński-symmetries-a}
    }\quad
    \subfloat[]{
        \includegraphics{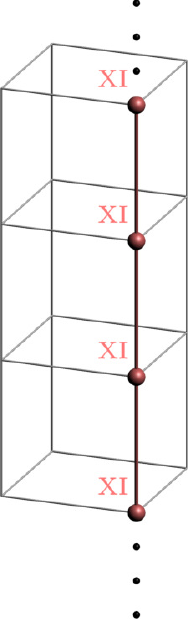}
        \label{fig:Sierpiński-symmetries-b}
    }\quad
    \subfloat[]{
        \includegraphics{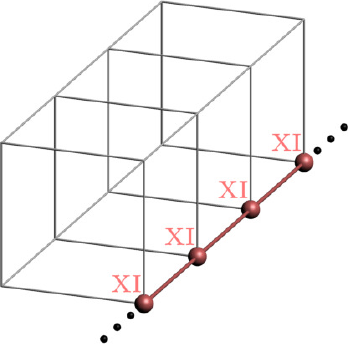}
        \label{fig:Sierpiński-symmetries-c}
    }\quad
    \subfloat[]{
        \includegraphics{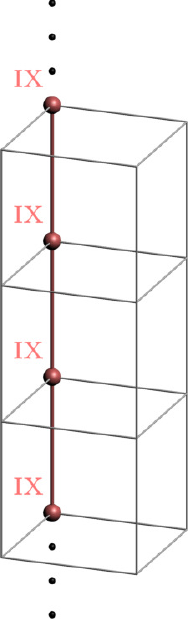}
        \label{fig:Sierpiński-symmetries-d}
    }
    \caption{Effective symmetries of the Clifford-deformed Sierpiński code under
    infinite bias.
    \textbf{(a)} Linear symmetry of Clifford-deformed $X$ stabilizers
    which allows us to decode errors on the second qubits of vertices on planes of odd $y$.
    \textbf{(b)} New linear symmetry of Clifford-deformed $X$ stabilizers obtained after decoding the errors on qubits in (a).
    Since errors on the second qubits of the red vertices have all been decoded,
    the $XX$ terms become $XI$ terms,
    allowing us to decode the errors on the first qubits of vertices on
    planes of odd $y$.
    \textbf{(c)} Linear symmetry of Clifford-deformed $Z$ stabilizers that
    allows us to decode errors on the first qubits of vertices on planes of even
    $y$.
    \textbf{(d)} New linear symmetry of Clifford-deformed $Z$ stabilizers, obtained after decoding the errors on qubits in (c)
    using the same argument as in (b). 
    This allows us to decode errors on the second qubits of vertices on planes of even $y$.
    }
    \label{fig:Sierpiński-symmetries}
\end{figure}

We present the Clifford deformation of this model
where a Hadamard is applied to all qubits of alternating planes,
such as on all $xz$ planes with an \emph{even} $y$ coordinate
as shown in
\cref{fig:Sierpiński-stabilizers-def-1,fig:Sierpiński-stabilizers-def-2}. The model has materialized symmetries which lead to a threshold error rate of $50\%$ at infinite bias as stated below. Note that for the original CSS model, one can use the relation from Ref.~\cite{Takeda_2005} involving the entropy function $h(p)$ for infinite bias threshold error rates $p_X$ ($p_Z$) at infinite $X$ bias ($Z$ bias) respectively as follows, 
\begin{align}
    h(p_X)+h(p_Z)\approx 1,
\end{align} 
where $h(p)=-p\log_2 (p)-(1-p)\log_2(1-p)$ is the binary entropy. Due to the invariance of the model stabilizers under $X\leftrightarrow Z$ permutation, permutation of two qubits on the sites, and inversion, we have 
$h(p_X)=h(p_Z)$. Together we get $h(p_Z)\approx 1/2$ which yields an optimal threshold estimate of $p_z\approx 0.11$ at infinite bias. 

\begin{theorem}
    The Clifford-deformed Sierpiński code has a threshold error rate of $50\%$ under pure $Z$ noise.
\end{theorem}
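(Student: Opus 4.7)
The plan is to follow the general recipe established in \cref{sec:def3dcodes} and already applied to the 3D surface code, the 3D color code and the X-cube model: at infinite $Z$ bias, reduce the problem to a polynomial number of one-dimensional repetition-code matchings by iteratively exploiting the materialized linear symmetries depicted in \cref{fig:Sierpiński-symmetries}. Since only $Z$ errors occur, only the $X$-content of each Clifford-deformed stabilizer acts as an effective parity check. Because the Hadamards are applied precisely on qubits living on $xz$-planes of even $y$, the effective $X$-support of the deformed $X$ stabilizers lies on odd-$y$ vertices, while that of the deformed $Z$ stabilizers lies on even-$y$ vertices. The two sectors therefore decouple at infinite bias and can be decoded independently.

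Next, I would decode the two qubits on odd-$y$ vertices in two rounds using only the deformed $X$ stabilizers. Round 1 uses the linear symmetry of \cref{fig:Sierpiński-symmetries-a}: the product of a column of deformed $X$ stabilizers reduces, at infinite bias, to a repetition-code check supported on the second qubits of odd-$y$ vertices along one lattice direction. Matching independently on each such line decodes all errors on the second qubits of odd-$y$ vertices. Substituting these now-known errors into the remaining $XX$ pairs as in \cref{fig:Sierpiński-symmetries-b} turns them into effective $XI$ terms, exposing a new 1D repetition-code symmetry whose matching decodes the first qubits of odd-$y$ vertices. The even-$y$ sector is treated by the mirror argument using the deformed $Z$ stabilizers and the symmetries of \cref{fig:Sierpiński-symmetries-c,fig:Sierpiński-symmetries-d}.

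To conclude, each of the four rounds consists of $O(L^2)$ independent length-$L$ repetition-code matchings, and each 1D repetition code has a $50\%$ threshold. Hence for any fixed $p<1/2$,
\begin{align}
    p_{\text{success}} \;\geq\; \bigl(1 - A\, e^{-\alpha L}\bigr)^{O(L^2)} \xrightarrow[L\to\infty]{} 1,
\end{align}
for positive constants $A,\alpha$, which establishes the $50\%$ threshold.

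The hard part is not the probabilistic bookkeeping, which is routine once the symmetries are in place, but rather certifying that the four symmetries of \cref{fig:Sierpiński-symmetries} genuinely hold. One must verify that each claimed product of deformed stabilizers along the indicated line collapses to the advertised weight-2 (or effective single-qubit) check once the Hadamard-induced $X\leftrightarrow Z$ swap on even-$y$ qubits is applied, and that after substituting the decoded errors from rounds 1 and 3 the leftover $X$ contributions in the neighbouring stabilizers align into a new linear symmetry in the declared direction. This amounts to a careful but direct enumeration on the explicit stabilizer supports of \cref{fig:Sierpiński-stabilizers-def-1,fig:Sierpiński-stabilizers-def-2}, and a check that no parity-mismatched ``boundary'' obstruction arises on an $L\times L\times L$ torus with $L$ even (so that the even-$y$/odd-$y$ partition is globally consistent).
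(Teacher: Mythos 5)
Your proposal is correct and takes essentially the same approach as the paper: reduce to the $X$-content of the deformed stabilizers at infinite $Z$ bias, observe the decoupling into odd-$y$ and even-$y$ sectors, use the \cref{fig:Sierpiński-symmetries-a} symmetry to decode second qubits on odd-$y$ vertices, substitute to obtain the weight-reduced symmetry of \cref{fig:Sierpiński-symmetries-b}, mirror for the even-$y$ sector, and conclude from the fact that every step is a 1D matching. The paper's proof also defers the verification of the symmetries to the figures rather than spelling out the enumeration, so your explicit acknowledgement of that gap matches the level of detail in the original.
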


\begin{proof}
    We first study the Clifford-deformed $X$ stabilizers in
    \cref{fig:Sierpiński-stabilizers-def-1}. 
    At infinite $Z$ bias, these effectively become two-vertex checks supported on planes of odd $y$, oriented either along the $x$ direction or the $z$ direction.
    The checks oriented along the $x$ direction have a term $IX$ on each vertex and form a linear symmetry,
    as shown in \cref{fig:Sierpiński-symmetries-a}.
    Matching along it allows us to decode the errors on the second qubits of the
    vertices living on planes of odd $y$. 
    Once the errors on these qubits have been decoded, we can use them to simplify the checks oriented in the $z$ direction. 
    More precisely, all the terms sitting on the second qubit of a vertex can now be removed from the check,
    turning $XX$ into $XI$. Those updated checks form a new linear symmetry, shown in \cref{fig:Sierpiński-symmetries-b}.
    Matching along this symmetry allows us to decode the first qubit of every vertex living on planes of odd $y$.
    The proof for the Clifford-deformed $Z$ stabilizers follows a similar pattern and is illustrated in
    \cref{fig:Sierpiński-symmetries-c,fig:Sierpiński-symmetries-d}. Overall, this decoding strategy is equivalent to decoding a polynomial number of repetition codes (in the lattice size $L$),
    showing that it has a threshold error rate of $50\%$.
\end{proof}

\subsubsection{The Haah code}

\begin{figure}
    \centering
    \subfloat[]{
        \includegraphics{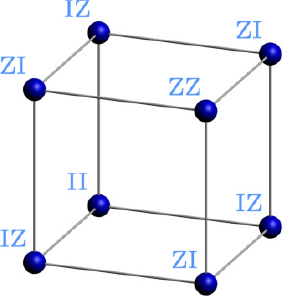}
        \includegraphics{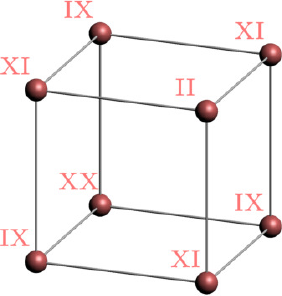}
        \label{fig:haah-stabilizers-original}
    }
    \hfill
    \subfloat[]{
        \includegraphics{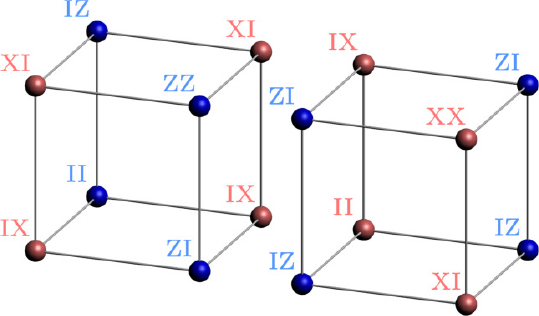}
        \label{fig:haah-stabilizers-def-1}
    }
    \subfloat[]{
        \includegraphics{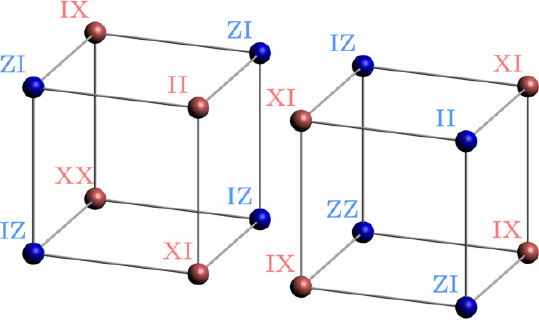}
        \label{fig:haah-stabilizers-def-2}
    }
    \caption{The Haah Code.
    \textbf{(a)} Original stabilizers. The code is defined on a cubic lattice, with two qubits per vertex. 
    Each cube of the lattice contains both an $X$ and a $Z$ stabilizer.
    \textbf{(b)} Clifford-deformed $Z$ stabilizer. A Hadamard is applied on the two qubits of half of the vertices,
    in a checkerboard manner on each layer. As a result, half of the cells contain the stabilizer on the left,
    and half of them contain the stabilizer on the right.
    \textbf{(c)} Clifford-deformed $X$ stabilizer, when applying the transformation described in (b).
    }
    \label{fig:haah-stabilizers}
\end{figure}

The Haah code is the canonical example of type-II fracton topological order, which is characterized by the absence of string logical operators, presence of fractal logical operators, and a sub-extensive ground space degeneracy that can fluctuate with the system size. The stabilizer generators of the Haah code (CSS model) are shown in \cref{fig:haah-stabilizers-original}. 

We present the Clifford-deformed Haah code in
\cref{fig:haah-stabilizers-def-1,fig:haah-stabilizers-def-2} which we prove below to have a 50\% threshold error rate at infinite bias. Note that, similar to the CSS Sierpiński model, for the CSS Haah code, one can also use the relation from Ref.~\cite{Takeda_2005} involving the entropy function $h(p)$ for infinite bias threshold error rates $p_X$ ($p_Z$) at infinite $X$ bias ($Z$ bias) respectively as follows, 
\begin{align}
    h(p_X)+h(p_Z)\approx 1,
\end{align} 
where $h(p)=-p\log (p)$. And again, due to the invariance of the model stabilizers under $X\leftrightarrow Z$ permutation, permutation of two qubits on the sites and inversion, we have 
$h(p_X)=h(p_Z)$. Together we get $h(p_Z)\approx1/2$ which yields an optimal threshold estimate of $p_Z\approx0.11$ at infinite $Z$ bias. 

We now state the theorem about the threshold and its proof. 

\begin{theorem}
    The Clifford-deformed Haah code on a periodic lattice with dimensions $(L_x, L_y, L_z)$, such that $L_z=2^k$, $L_x=L_y$, 
    and $\gcd(L_x,L_z)=2$, has a threshold error rate of $50\%$ under pure $Z$ noise.
\end{theorem}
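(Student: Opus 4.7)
My plan is to follow the template developed earlier in this section: at infinite $Z$ bias, reduce the code to a classical parity-check problem by discarding the $Z$ components of the stabilizers, identify a hierarchy of materialized linear symmetries, and apply the weight-reduction technique iteratively until decoding is equivalent to solving a polynomial number of one-dimensional repetition-code matchings, each of which has a $50\%$ threshold. If the hierarchy terminates after a finite number of rounds, the usual bound
\begin{align}
    p_{\text{success}} \geq \left(1- A e^{-\alpha L}\right)^{\mathrm{poly}(L)} \xrightarrow[L\to\infty]{} 1
\end{align}
will yield the claimed threshold for any physical error rate below $50\%$.

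First I would work out the effective classical checks induced by the Clifford-deformed stabilizers in \cref{fig:haah-stabilizers-def-1,fig:haah-stabilizers-def-2}. Because the checkerboard Hadamard deformation flips $X \leftrightarrow Z$ on both qubits of half the vertices, each deformed stabilizer loses the contributions sitting on the Hadamard-transformed sites, leaving a reduced-weight pattern of $X$ operators on a specific sublattice. I would then look for a first round of materialized linear symmetries along the $\hat{z}$ axis, the direction singled out by the condition $L_z = 2^k$. Encoding the Haah code's translation-invariant stabilizer algebra as polynomials in $\FF_2[x,y,z]/(x^{L_x}-1,\,y^{L_y}-1,\,z^{L_z}-1)$, products of $\hat{z}$-translates reduce to sums of binomial coefficients modulo $2$; by Lucas's theorem these vanish for translates of length $2^k$, so closed symmetries along $\hat{z}$ are expected to exist precisely when $L_z = 2^k$. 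Matching along these symmetries decodes one subset of qubits and produces new, shorter checks.

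Next I would iterate the weight-reduction procedure on the residual checks, looking for linear symmetries along $\hat{x}$ and $\hat{y}$ whose closure on the torus invokes the remaining conditions $L_x = L_y$ and $\gcd(L_x,L_z) = 2$. The gcd condition is the natural one to align the fractal translates between neighboring $\hat{z}$-layers without leaving stray terms; the $L_x = L_y$ condition makes the two in-plane symmetry directions compatible with each other. After each round, the new weight-2 checks should be combinable with the unmatched checks from the previous round to form fresh linear symmetries, until every qubit has been assigned a correction.

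The main obstacle will be establishing the existence and closure of this symmetry hierarchy. Unlike the X-cube or 3D surface code, where materialized symmetries come from short translates of stabilizers along a straight line, Haah-code stabilizer products have an intrinsic fractal (Sierpinski-triangle) structure, and symmetries along a lattice axis are only closed for very special lattice dimensions. I therefore expect most of the real work to be in the $\FF_2[x,y,z]$ polynomial calculation that checks systematically which combinations of deformed stabilizers collapse to pure $Z$ on the torus, and in verifying that the three dimension conditions in the theorem are exactly the ones making each successive round of the weight-reduction argument go through. Once this algebraic bookkeeping is in place, the probabilistic argument is identical to the one used for the X-cube and 3D surface code.
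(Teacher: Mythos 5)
Your high-level strategy (reduce to a classical parity-check code at infinite bias, find materialized linear symmetries, apply the weight-reduction technique iteratively, and close with the repetition-code bound) is indeed the same framework the paper uses, and your $\FF_2[x,y,z]$ and Lucas's-theorem intuition is a correct algebraic description of why $L_z=2^k$ ultimately matters. However, there is a genuine gap between your plan and a working proof, and it is not merely ``algebraic bookkeeping'': you have not identified the specific intermediate objects that make the weight-reduction hierarchy actually close.

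Concretely, the paper's proof does \emph{not} start by seeking a symmetry along $\hat z$ that invokes $L_z=2^k$, as you propose. It first exploits a materialized linear symmetry of the weight-4 check supported on four vertices (which exists independently of any condition on $L_z$); matching along it produces weight-2 ``XI-IX'' checks. The second check type, supported on three vertices, is then multiplied by two of these XI-IX checks to produce a new pure-``IX'' L-shaped weight-4 check — the ``L-check'' — and it is only \emph{this derived} operator that admits the fractal doubling: stacking four L-checks in an L-pattern yields an L-check with doubled spacing but still weight $4$, which over $\FF_2$ is exactly the Frobenius/Lucas collapse you anticipated, iterated to reach spacing $2^k$. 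Crucially, the family of L-checks lives on a 2D \emph{diagonal} slice of the torus of size $(L_x,L_z)$ (a consequence of $L_x=L_y$), not along a coordinate axis; the condition $L_z=2^k$ cancels two of the four L-check legs by periodicity, leaving a weight-2 check; and $\gcd(L_x,L_z)=2$ guarantees that multiplying these weight-2 checks along a diagonal line wraps the slice and returns to itself after covering exactly $L_x$ vertices, which is what turns the product into a closed linear symmetry. Your plan does not construct the L-check, does not identify the diagonal slice as the natural arena for the fractal doubling, and misattributes the role of the dimension conditions — $L_z=2^k$ is used for cancellation inside a derived check, not for a straight $\hat z$-axis symmetry of the raw stabilizers. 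Without these constructions, the polynomial calculation you anticipate would find no linear symmetries to match along, and the hierarchy would not terminate.
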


Such constraints ensure that the horizontal dimensions are even, which is required for our checkerboard-like Clifford deformation
to be well-defined, and that $\gcd(L_x, L_z)$ does not grow with the lattice size, which is needed for the linear symmetries considered in our proof. 
Because the number of encoded qubits in a fractal type-II fracton model like the Haah code can fluctuate wildly with system size, 
it is subtle to extract the threshold error rate from an arbitrary family of increasing system sizes. 
Nevertheless, there are families of codes
satisfying the above constraints whose number of encoded qubits is constant.
We checked this numerically for codes of size
$(2^k+2, 2^k+2, 2^k)$ for $k\geq 2$,
which have 6 encoded qubits,
and $(2\cdot 3^k, 2\cdot 3^k, 2^k)$ for $k\geq 1$,
which have 6 encoded qubits.

\begin{proof}

Let us call the Clifford-deformed qubits \emph{type-A qubits}, and the vertices they live on \emph{type-A vertices}.
We refer to the other half of the qubits (resp. vertices) as \emph{type-B qubits} (resp. \emph{type-B vertices}).
Those two types of qubits alternate in a 2D checkerboard manner on each horizontal layer of the lattice.
In this language, the Clifford deformation of the $Z$-type stabilizers gives stabilizers supported on type-A qubits only.
We call these stabilizers \emph{type-A stabilizers}. Similarly, \emph{type-B stabilizers} are the ones resulting from 
the Clifford deformation of the $X$-type stabilizers, and are supported on type-B qubits only.

\begin{figure}
    \centering
    \subfloat[]{
        \includegraphics{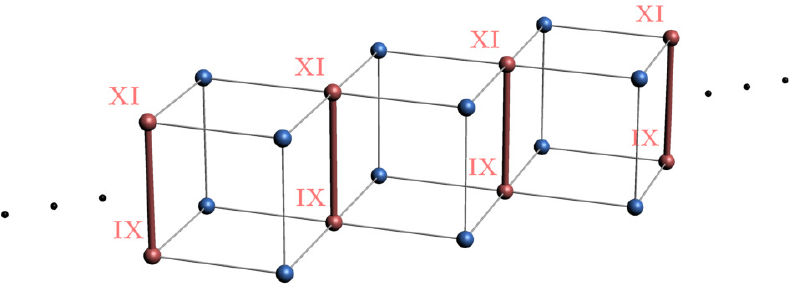}
        \label{fig:haah-linear-symmetry}
    }
    \hfill
    \subfloat[]{
        \includegraphics{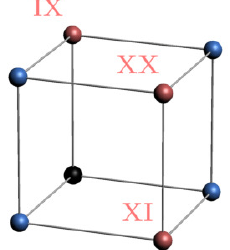}
        \includegraphics{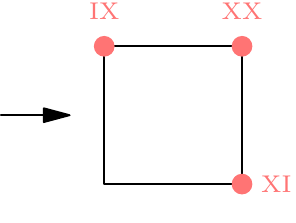}
        \label{fig:haah-projection}
    }
    \subfloat[]{
        \includegraphics{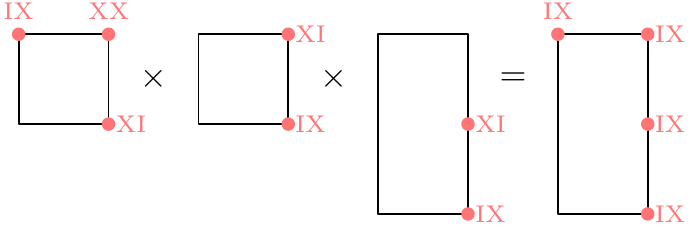}
        \label{fig:haah-new-stabilizer}
    }
    \label{fig:haah-proof-1}
    \caption{Deformed Haah code $50\%$ threshold error rate proof, step 1
    \textbf{(a)} Linear symmetry. Matching along it allows a weight reduction to weight-2 checks (red edges)
    \textbf{(b)} Visualization of the second stabilizer type on a 2D plane
    \textbf{(c)} New L-shaped pure ``IX'' stabilizer, obtained by multiplying the stabilizer in (b) by the weight-2 checks
    obtained in (a). We call this an L-check.
    }
\end{figure}

Let us start by considering only type-A stabilizers, which are shown in \cref{fig:haah-stabilizers-def-1}. 
The decoding strategy is the same when tackling type-B stabilizers.
In the infinite-bias regime, the Clifford-deformed code becomes a classical code, with two types of parity-check operators 
alternating in a checkerboard manner. 
Those two types of checks have weight 4, but one is supported on four vertices and the other on three.
We can observe the presence of a linear symmetry for the ones supported on four vertices, as shown in \cref{fig:haah-linear-symmetry}.
Matching along this symmetry results in the appearance of new weight-2 checks, with the term ``XI'' on one vertex and ``IX'' on the other vertex.
We call them ``XI-IX'' checks.

We then consider the other type of check supported on three vertices. 
Multiplying it by two ``XI-IX'' checks, we get a new L-shaped weight-4 check made only of ``IX'' terms, as represented in \cref{fig:haah-new-stabilizer}.
We call them ``L-checks''.

We now use a technique introduced to study the classical Fibonacci codes \cite{nixon2021correcting} and the XYZ color code \cite{miguel2022cellular}.
We first notice that applying four L-checks as an L results in a new L-check where the spacing between the non-zero qubits has doubled
but the weight is still 4.
Applying the same process recursively results in a fractal of original checks, forming an L-check whose size can be an arbitrary power of two.
This process is shown in \cref{fig:haah-fractal-evolve}.
Note that this family of L-checks always lives on a 2D diagonal slice of our lattice, which has dimensions $(L_x,L_z)$.
This is due to the choice of periodic boundary conditions and equal horizontal dimensions. 

We now use the fact that $L_z=2^k$. 
Applying the fractal process described previously, we can create an L-check where two terms are separated by a distance of exactly $2^k$.
Due to the periodic boundary conditions, these two terms cancel out, leaving only two qubits in the support of the check.
This new weight-2 check is shown in \cref{fig:haah-fractal-cancel-out}.

As $\gcd(L_x, L_z)=2$, we can multiply these weight-2 checks on a line to cover exactly $L_x$ qubits,
before the line comes back to itself, as shown in \cref{fig:haah-wrapping-around}. \begin{figure}
    \centering
    \subfloat[]{
        \includegraphics[scale=1.0]{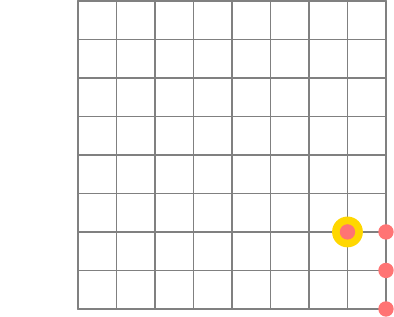}
        \includegraphics[scale=1.0]{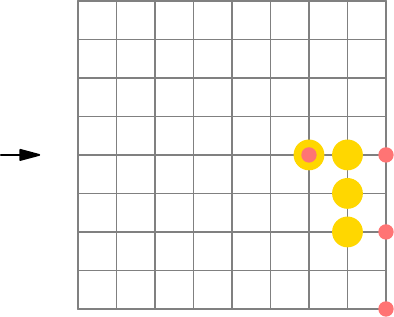}
        \includegraphics[scale=1.0]{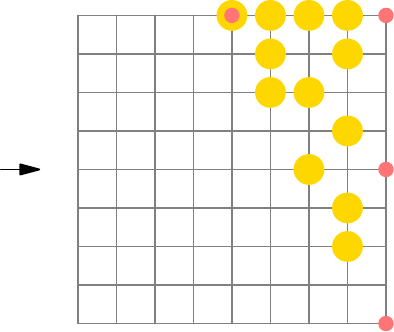}
        \label{fig:haah-fractal-evolve}
    }
    \hfill
    \subfloat[]{
        \includegraphics[scale=1.0]{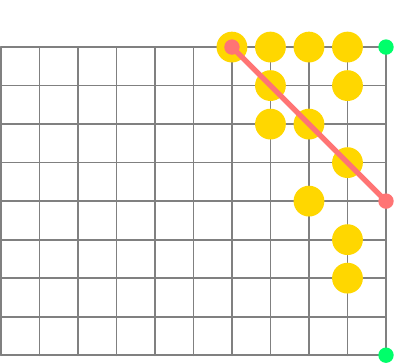}
        \label{fig:haah-fractal-cancel-out}
    }
    \subfloat[]{
        \includegraphics[scale=1.0]{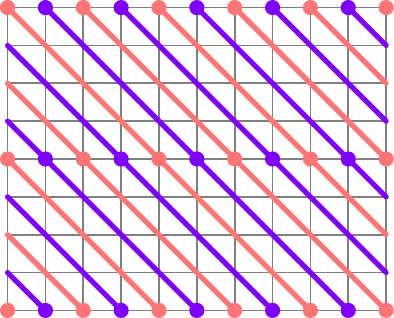}
        \label{fig:haah-wrapping-around}
    }
    \caption{Step 2 in the decoding of Clifford-deformed Haah code at infinite bias. Yellow dots on vertices represent the application of an L-check, 
    as shown on the left of (a), where red dots represent the qubits in its support.
    \textbf{(a)} By multiplying the checks in a fractal manner, we can obtain a similar weight-4 check with
    any power of two spacing between the qubits in its support.
    \textbf{(b)} Taking a periodic lattice with vertical dimension $L_z=2^k$, we can make two of the qubits cancel
    out (green) to obtain a new weight-2 check (red)
    \textbf{(c)} On any periodic lattice with dimensions $(L_x,L_y)$ such that $\gcd(L_x,L_y)=2$, diagonal lines wrapping around the torus
    cover half of the vertices (red or purple). Therefore, the product of these new weight-2 checks on a line 
    forms a linear symmetry (either red or purple depending on the starting point).
    }
    \label{fig:haah-proof-2}
\end{figure}

This is due to the fact that on a periodic lattice with $\gcd(L_x,L_y)=g$, 
there are $g$ distinct diagonal lines covering each a fraction $1/g$ of the vertices \cite{Tailoring2019}.
Since the dimensions of our diagonal slice is $(L_x,L_z)$, it contains $L_xL_z$ vertices, and the diagonal line
formed by the weight-2 checks covers $L_x L_z / 2$ vertices. Dividing this by the size of the weight-2 check,
$L_z/2$, we obtain that the product of weight-2 checks on the line covers exactly $L_x$ qubits.

Therefore, this product is equal to the identity operator and we get a linear symmetry.
By translating the large L-check, we can include any arbitrary qubit of the 2D diagonal slice in the support of this linear symmetry.
Matching along them on all the 2D slices therefore results in decoding the errors on second qubits of all the type-A vertices.
Using the ``XI-IX'' check allows us to decode the errors on the first qubits of these vertices as well.
Finally, applying the same decoder to the type-B stabilizer, we can decode errors on all type-B qubits.

Since this decoder only involved matching on a polynomial number of repetition codes, we can deduce that our Clifford-deformed Haah code 
has a $50\%$ threshold error rate at infinite bias.

\end{proof}

Note that the choice of $L_x = L_y$ and $\gcd(L_x,L_z)=2$, while simplifying the proof, are not strictly necessary 
to obtain the desired linear symmetries and the $50\%$ threshold error rate. 
Relaxing these constraints has the effect of changing the size of the 2D diagonal slice, as it can wrap around the torus several times
when the horizontal dimensions are not equal. The new size of the slice can be shown to be $\ell=L_x L_y / \gcd(L_x,L_y)=\lcm(L_x,L_y)$.
As a result, the condition to obtain a linear symmetry on the weight-2 checks is modified: we want the number of terms 
in the linear symmetry to grow polynomially with the system size. As the number of terms is given by the size of the diagonal line which supports
the symmetry, $\lcm(\ell,L_z)=\lcm(L_x,L_y,L_z)$
\footnote{Here we use the fact that $\lcm(\lcm(a,b),c)=\lcm(a,b,c)$.}
, divided by the size of the weight-2 check, $L_z/2$, the condition can be reformulated as
\begin{align}
    \frac{\lcm(L_x,L_y,L_z)}{L_z} = \Omega(\text{poly}(L_z))
    \label{eq:haah-general-condition}
\end{align}
Imposing \cref{eq:haah-general-condition} with $L_z=2^k$, and $L_x$, $L_y$ even (to guarantee that the Clifford deformation is well-defined)
is enough to have a $50\%$ threshold error rate for the Haah code.

\section{Threshold error rates at finite bias}\label{sec:finbiasthresholds}
Using the belief propagation with ordered statistics decoder
(BP-OSD)~\cite{panteleev2019degenerate,roffe2020decoding,quintavalle2021single},
described in \cref{sec:bposd-appendix},
we evaluate the threshold error rates of both the CSS and Clifford-deformed 3D surface code defined on cubic and checkerboard lattices as well as the X-cube model
at different bias ratios over several orders of magnitude and at infinite bias.
We also estimate threshold error rates for the surface code on the cubic lattice using the
sweep-matching decoder described in \cref{sec:sweepmatch-appendix}.
We plot the threshold error rate estimates for different values of bias in
\cref{fig:thresholdvsbias}.
The numerical values of the threshold error rate estimates and uncertainties are
listed in \cref{tab:toric}
for the surface code on a cubic lattice,
in \cref{tab:rhombic} for the surface code on a checkerboard lattice and in
\cref{tab:xcube} for the X-cube model.
The estimates are best-fit parameters to a finite-size scaling ansatz
and uncertainties are bootstrapped $1\sigma$ credible intervals that account for
the finite number of trials and choice of parameters.
An in-depth overview of how these are obtained is in
\cref{sec:numerics-appendix}.
Unless a lower numerical threshold error rate can be resolved by finite-size
scaling the theoretically proved $50\%$ threshold error rate is tabulated and
plotted for Clifford-deformed codes at infinite bias.

We note that at moderate biases, three-dimensional surface
codes such as the \defcubic code and the CSS surface code on a 3D checkerboard lattice, can have threshold error rates close to the hashing bound and to those of 2D codes like \XZZX and \XY.
This offers a noise regime in which one could consider a dimensional jump for
implementation of non-Clifford gates and be able to maintain at least the code
capacity threshold error rates.
For high bias $\eta_Z\gtrsim 100$, the Clifford-deformed surface code on a cubic
lattice beats the CSS surface code on a cubic lattice in threshold error rate performance,
with threshold error rates of 50\% and 21.37(4)\% respectively at infinite bias.
The Clifford-deformed code on the checkerboard lattice at infinite bias also boasts an
advantage at infinite bias.
Above modest values of bias $\eta_Z\gtrsim 30$,
the Clifford-deformed X-cube model outperforms its CSS counterpart.
This is owed to the rigid noise symmetries
which allow decoding in rigid submanifolds.

\paragraph{Limitations of BP-OSD} As discussed in \cref{sec:bposd-appendix}, the performance of BP-OSD
greatly depends on the characteristics of the code, particularly its \emph{girth} (size of the shortest cycle
in the Tanner graph) and its \emph{split-belief number} 
(weight of the smallest error that produces a degenerate syndrome). 
The latter can be calculated by taking the weight of the smallest even-weight stabilizer and dividing it by two.
We show these two numbers for different 3D codes in \cref{tab:girth-and-split-belief}.
\begin{table}[]
    \begin{tabular}{@{}lllll@{}}
    \toprule
    Code                           & $X$-girth & $Z$-girth & $X$-split-belief & $Z$-split-belief    \\                            &  &  & number & number \\ \midrule
    3D surface code (cubic)        & 8       & 8       & 2                     & 3                     \\
    3D surface code (checkerboard) & 8       & 6       & 2                     & 6                     \\
    X-cube model                   & 8       & 4       & 2                     & 6                     \\
    3D color code                  & 4       & 6       & 2                     & 12                    \\ \bottomrule
    \end{tabular}
    \caption{Girth and split-belief numbers for four of the 3D codes studied in this work.
    The girth is the size of the shortest cycle of the Tanner graph,
    while the split-belief number is the weight of the smallest error that causes a degenerate
    syndrome. We consider here the $X$ and $Z$ parts of the Tanner graph separately, as we are
    using independent BP-OSD decoders for each error type. Here, the $X$-girth (resp. $Z$-girth)
    corresponds to the girth when considering only the $X$ (resp. $Z$) stabilizers in the Tanner graph,
    and similarly so for the $X$ and $Z$-split-belief numbers.
    \label{tab:girth-and-split-belief}
    }
\end{table}
\begin{figure}
    \centering
    \includegraphics[width=0.7\textwidth,page=126]{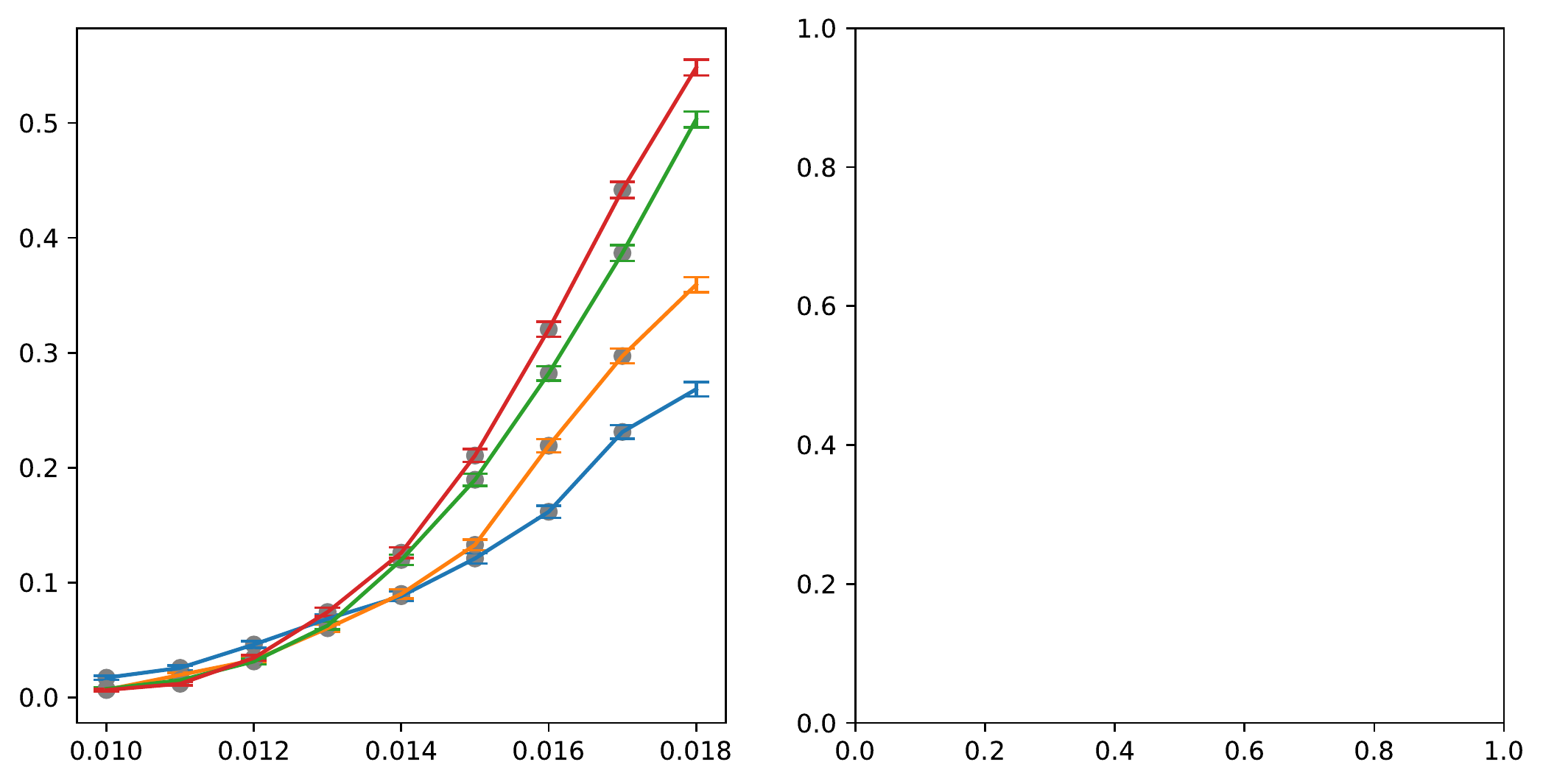}
    \caption{Size-dependent reduction in apparent threshold error rate using a
    BP-OSD decoder for increasing X-cube model lattice sizes under $Z$-biased
    noise with $\eta_Z=10$.
    (left) Rate of logical $Z$ errors vs the physical error rate $p$.
    Note that the apparent intersection of the curves for $L=13$ is somewhat
    different from that of the $L=21$.
    Despite the apparent threshold error rate shift,
    we still provide a threshold error rate estimate by regression with the finite-size
    scaling ansatz.
    (right) Logical $Z$ error rate vs rescaled physical error rate
    $x=\left( p-p_{\textrm{th}} \right)^{1/\nu}$ where 
    the threshold error rate $p_{\textrm{th}}$ and critical exponent $\nu$ have been
    estimated by fitting to a finite-sized scaling ansatz,
    for which the best-fit curve and $1\sigma$ fits are plotted against data
    points colored by code lattice size as detailed in
    \cref{sec:numerics-appendix}.}
    \label{fig:bposdshift}
\end{figure}

A common phenomenon that appears for codes with low girth or low split-belief number is 
the receding threshold error rate problem: the apparent threshold error rate decreases with increasing system sizes \cite{higgottImprovedSingleshotDecoding2022}.
We observed this finite-size effect for the 3D surface code on a checkerboard lattice and for the X-cube model,
while the 3D surface code on a cubic lattice did not have this issue for sizes up to 22.
An illustration of this phenomenon is shown in \cref{fig:bposdshift} for the X-cube model at bias $\eta=10$.
Consequently, on codes where BP-OSD suffers this limitation, there is greater uncertainty on the threshold error rate estimates upon using the
bootstrapped finite-size scaling method described in \cref{sec:numerics-appendix}, as can be seen for instance in \cref{tab:xcube}.

We also perform preliminary experiments on the 3D color code, but we observe a strong receding threshold effect with an apparent threshold error rate orders of magnitude below its optimal value. 
This can be explained by the particularly low girth of the 3D color code, and entices us not to pursue these experiments.

Note that while we have provided decoders with 50\% threshold error rate at infinite bias for all the Clifford-deformed codes studied in this paper,
BP-OSD does not always achieve that threshold. For example in the X-cube model as seen in \cref{fig:thresholdvsbias-c}
and \cref{tab:xcube} where the threshold error rate is only $17.2(3)\%$.
Nevertheless, the Clifford-deformed X-cube model still outperforms the CSS
X-cube model
which has a lower threshold error rate of $9.25(10)\%$ when decoding with
BP-OSD.

\begin{figure}[htbp]
    \centering
    \subfloat[]{
        \label{fig:thresholdvsbias-a}
        \includegraphics[width=0.7\textwidth,page=2]{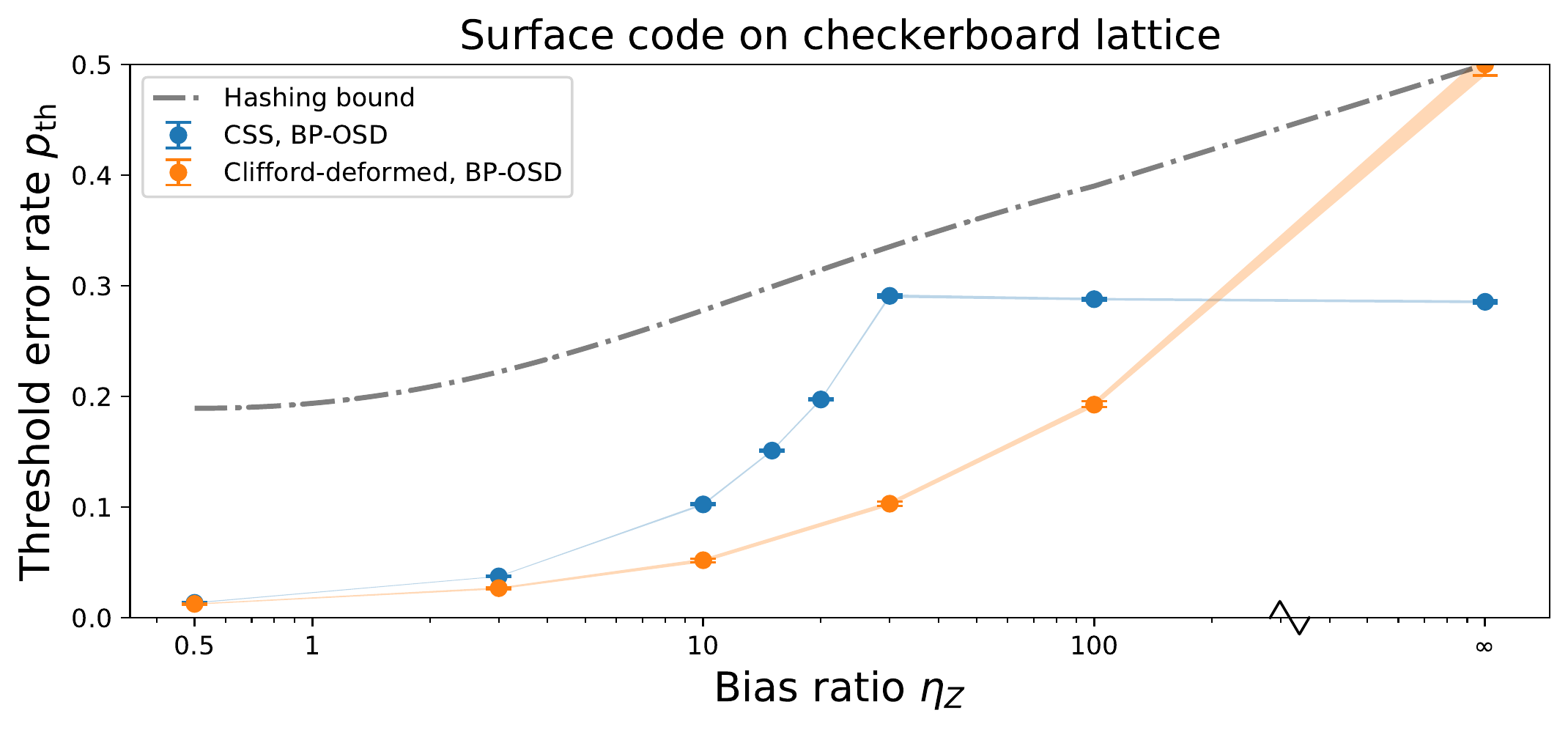}
    }
    \hfill
    \subfloat[]{
        \label{fig:thresholdvsbias-b}
        \includegraphics[width=0.7\textwidth,page=1]{figures/2022-10-15_thresholds-vs-bias.pdf}
    }
    \hfill
    \subfloat[]{
        \label{fig:thresholdvsbias-c}
        \includegraphics[width=0.7\textwidth,page=3]{figures/2022-10-15_thresholds-vs-bias.pdf}
    }
    \caption{Threshold error rate $p_{\textrm{th}}$ vs dephasing bias $\eta_Z$
    for some CSS and Clifford-deformed 3D topological codes.
    \textbf{(a)} BP-OSD and sweep-matching threshold error rates of the 3D CSS
    surface code and the Clifford-deformed surface code on a cubic lattice.
    \textbf{(b)} BP-OSD threshold error rates of the CSS and Clifford-deformed
    surface codes on a checkerboard lattice.
    \textbf{(c)} BP-OSD threshold error rates of the CSS and Clifford-deformed
    X-cube model.
    }%
    \label{fig:thresholdvsbias}
\end{figure}

\begin{table}[H]
\centering
\caption{Estimates and uncertainties of the threshold error rate
$p_{\textrm{th}}$ for CSS and Clifford-deformed surface codes on a cubic lattice using
different decoders under finite and infinite bias as plotted in
\cref{fig:thresholdvsbias-a}.}
\label{tab:toric}
\bgroup
\def\arraystretch{1.5}
\begin{tabular}{ccccc}
\toprule
 & \multicolumn{2}{c}{CSS} & \multicolumn{2}{c}{Deformed} \\
Bias $\eta_Z$ & BP-OSD & Sweep-matching & BP-OSD & Sweep-matching \\
\midrule
0.5 & $({5.95}\pm 0.03)\mathrm{\%}$ & ${4.0}^{+0.3}_{-0.5}\mathrm{\%}$ & ${5.99}^{+0.08}_{-0.1}\mathrm{\%}$ & ${4.42}^{+0.12}_{-0.32}\mathrm{\%}$ \\
1 & - & $({5.4}\pm 0.4)\mathrm{\%}$ & - & ${5.06}^{+0.06}_{-0.24}\mathrm{\%}$ \\
3 & ${12.25}^{+0.05}_{-0.06}\mathrm{\%}$ & ${11.48}^{+0.11}_{-0.19}\mathrm{\%}$ & $({7.94}\pm 0.03)\mathrm{\%}$ & $({6.7}\pm 0.6)\mathrm{\%}$ \\
10 & ${22.3}^{+0.04}_{-0.05}\mathrm{\%}$ & ${14.9}^{+0.5}_{-1.0}\mathrm{\%}$ & ${12.12}^{+0.12}_{-0.11}\mathrm{\%}$ & ${11.6}^{+0.3}_{-0.2}\mathrm{\%}$ \\
30 & $({21.7}\pm 0.04)\mathrm{\%}$ & ${14.6}^{+0.3}_{-0.7}\mathrm{\%}$ & ${17.76}^{+0.12}_{-0.1}\mathrm{\%}$ & $({19.3}\pm 0.3)\mathrm{\%}$ \\
100 & ${21.46}^{+0.02}_{-0.04}\mathrm{\%}$ & ${14.3}^{+0.4}_{-0.7}\mathrm{\%}$ & ${23.0}^{+0.6}_{-0.5}\mathrm{\%}$ & ${20.6}^{+0.5}_{-0.7}\mathrm{\%}$ \\
$\infty$ & ${21.37}^{+0.04}_{-0.03}\mathrm{\%}$ & ${14.59}^{+0.11}_{-0.18}\mathrm{\%}$ & $50\mathrm{\%}$ & ${20.7}^{+0.2}_{-0.3}\mathrm{\%}$ \\
\bottomrule
\end{tabular}
\egroup
\end{table}

\begin{table}[H]
\centering
\caption{Estimates and uncertainties of the threshold error rate
$p_{\textrm{th}}$ for CSS and Clifford-deformed surface codes on a checkerboard lattice
using a BP-OSD decoder under finite and infinite bias as plotted in
\cref{fig:thresholdvsbias-b}.}
\label{tab:rhombic}
\bgroup
\def\arraystretch{1.5}
\begin{tabular}{ccc}
\toprule
Bias $\eta_Z$ & CSS & Deformed \\
\midrule
0.5 & $({1.35}\pm 0.04)\mathrm{\%}$ & ${1.25}^{+0.03}_{-0.07}\mathrm{\%}$ \\
3 & $({3.74}\pm 0.03)\mathrm{\%}$ & ${2.67}^{+0.06}_{-0.1}\mathrm{\%}$ \\
10 & $({10.24}\pm 0.09)\mathrm{\%}$ & ${5.2}^{+0.14}_{-0.19}\mathrm{\%}$ \\
15 & $({15.1}\pm 0.08)\mathrm{\%}$ & - \\
20 & ${19.72}^{+0.09}_{-0.1}\mathrm{\%}$ & - \\
30 & ${29.09}^{+0.13}_{-0.16}\mathrm{\%}$ & $({10.3}\pm 0.2)\mathrm{\%}$ \\
100 & ${28.79}^{+0.12}_{-0.11}\mathrm{\%}$ & ${19.3}^{+0.3}_{-0.2}\mathrm{\%}$ \\
$\infty$ & $({28.55}\pm 0.13)\mathrm{\%}$ & $50.0\mathrm{\%}$ \\
\bottomrule
\end{tabular}
\egroup
\end{table}

\begin{table}
\centering
\caption{Estimates and uncertainties of the threshold error rate
$p_{\textrm{th}}$ for the CSS X-cube model and Clifford-deformed X-cube model using a BP-OSD decoder under
finite and infinite bias as plotted in \cref{fig:thresholdvsbias-c}.}
\label{tab:xcube}
\def\arraystretch{1.5}
\begin{tabular}{ccc}
\toprule
Bias $\eta_Z$ & CSS & Deformed \\
\midrule
0.5 & ${4.54}^{+0.05}_{-0.08}\mathrm{\%}$ & ${4.6}^{+0.3}_{-0.7}\mathrm{\%}$ \\
3 & ${10.64}^{+0.06}_{-0.07}\mathrm{\%}$ & ${5.9}^{+0.4}_{-0.8}\mathrm{\%}$ \\
10 & ${9.3}^{+0.2}_{-0.4}\mathrm{\%}$ & ${7.5}^{+0.4}_{-1.0}\mathrm{\%}$ \\
30 & ${8.9}^{+0.3}_{-0.6}\mathrm{\%}$ & $({10.7}\pm 0.4)\mathrm{\%}$ \\
100 & ${8.9}^{+0.2}_{-0.3}\mathrm{\%}$ & ${17.4}^{+0.3}_{-0.2}\mathrm{\%}$ \\
$\infty$ & $({9.25}\pm 0.08)\mathrm{\%}$ & ${17.2}^{+0.17}_{-0.12}\mathrm{\%}$ \\
\bottomrule
\end{tabular}
\end{table}

\section{Rotated layout and subthreshold scaling}\label{sec:rotsts}

\subsection{Rotated layout for the 3D surface code}

We now define a rotated layout for the 3D surface code.
The new lattice is obtained by rotating the coordinates about the vertical $z$
axis by $45^\circ$ such that the horizontal qubits
formerly living on $x$ and $y$ edges now live on vertices on
horizontal $xy$ planes,
while vertical qubits formerly living on $z$ edges now live on vertices floating
between horizontal $xy$ planes.
Nevertheless,
we continue refer to qubits living on $xy$ planes as \emph{horizontal} qubits
and qubits floating in between $xy$ planes as \emph{vertical} qubits.
Former vertex operators become octahedron stabilizers,
former vertical face stabilizers become diamond stabilizers,
and former horizontal face stabilizers become square stabilizers.
Such a rotated layout preserves the distances $(d_X,d_Z)$ for both Pauli $X$ and Pauli $Z$ logical operators,
while using roughly half the number of physical qubits compared to the regular layout.
This new lattice is illustrated in \cref{fig:rot3DSC-open,fig:rot3DSC-periodic}
for open and periodic boundary conditions respectively.

Note that for periodic boundary conditions with one odd horizontal dimension,
such as in \cref{fig:rot3DSC-periodic},
the horizontal planes are periodic in both directions,
but the vertical diamonds are not periodic in the odd-length direction,
resulting in a \emph{seam} across which the checkerboard pattern of the
horizontal planes of octahedron and horizontal square stabilizers are
incompatible,
and vertical qubits are not connected by diamond stabilizers.
To ensure that the octahedron and square stabilizers commute across this seam,
stabilizers touching the seam on only \emph{one} chosen side are modified by
introducing ``defects'' at every horizontal qubit on the seam,
where a Hadamard is applied to modify these stabilizer definitions
with $X\leftrightarrow Z$ at these defects,
as illustrated in \cref{fig:defect-1}.

\begin{figure}
\centering
\subfloat[]{
    \includegraphics[scale=0.85]{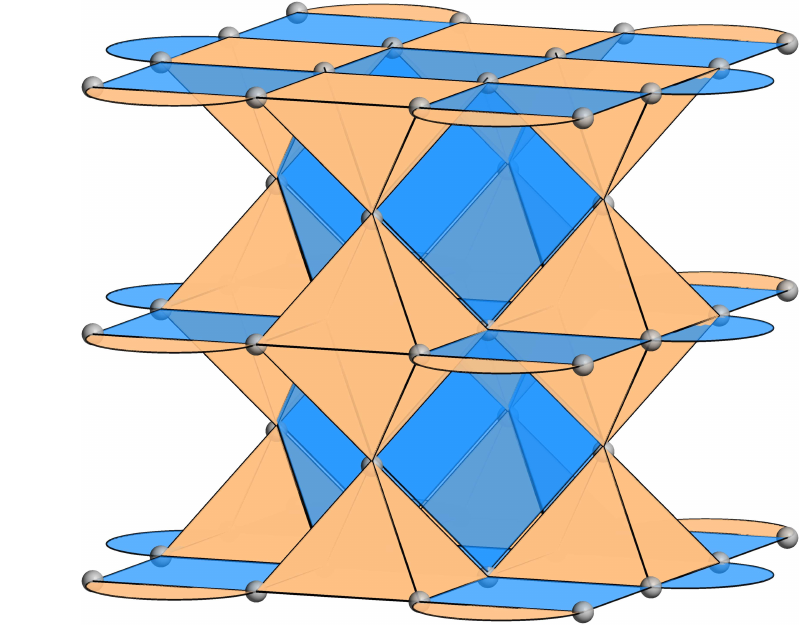}%
    \label{fig:rot3DSC-open}
}
\subfloat[]{
    \includegraphics[scale=0.85]{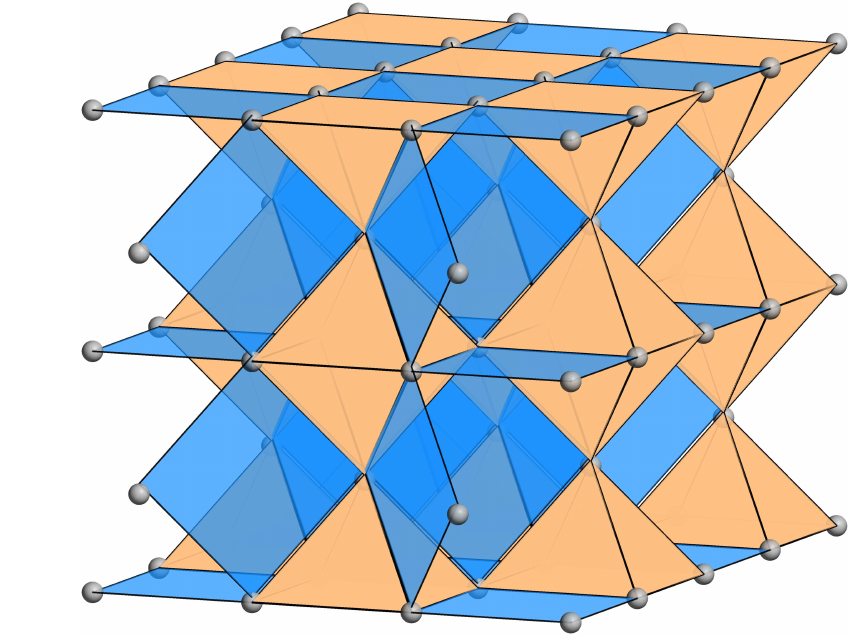}%
    \label{fig:rot3DSC-periodic}
}
\hfill
\subfloat[]{
    \includegraphics{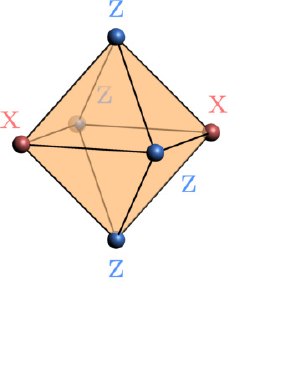}
    \includegraphics{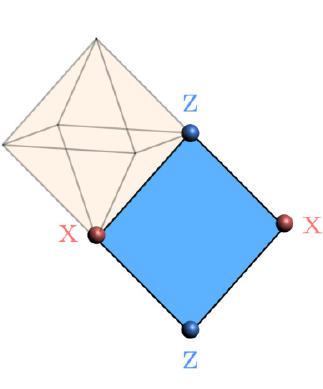}
    \includegraphics{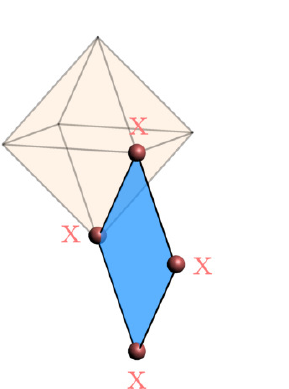}
    \includegraphics{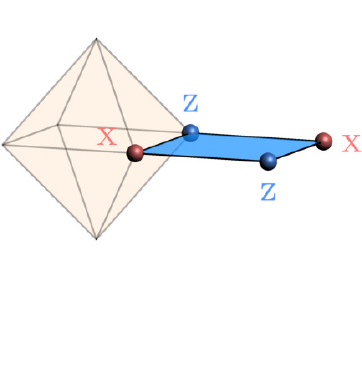}
    \label{fig:rot3DSC-stabilizers-def}
}
\caption{
    Clifford-deformed 3D surface code on a rotated layout with smooth boundaries
    top and bottom. 
    In this representation, qubits live on vertices, while stabilizers live on
    octahedra (orange) and faces (blue),
    which are either diamonds between $xy$ planes or squares on $xy$ planes.
    \textbf{(a)} $4\times 4\times 3$ rotated lattice with open boundaries.
    \textbf{(b)} $4\times 3\times 3$ rotated lattice with periodic
    boundaries.
    Note that although the horizontal $xy$ planes are always periodic,
    if one the of the dimensions along $x$ or $y$ is odd,
    then the diamonds in between $xy$ planes are not periodic along that
    direction,
    as illustrated here (blue diamonds are not periodic in the length-3
    direction).
    \textbf{(c)} Clifford-deformed stabilizer generators:
    (left to right) octahedron acting as $XZZX$ on horizontal
    qubits and $Z$ on vertical qubits above and below,
    diamond acting as $Z$ on horizontal qubits and $X$ on
    vertical qubits,
    diamond acting as $XXXX$,
    square acting as $XZZX$ on horizontal qubits.
} %
\label{fig:rot3DSC}
\end{figure}

To Clifford-deform the code, we apply a Hadamard only on
every second horizontal qubit on each $xy$ plane in a checkerboard manner,
while leaving vertical qubits untouched,
such that the resulting Clifford-deformed stabilizers are as shown in
\cref{fig:rot3DSC-stabilizers-def}.
In the case of an odd-length lattice,
even the stabilizers with defects on the seam
will be of this form after Clifford deformation.
Consequently,
all the octahedron and horizontal square stabilizers act as $XZZX$ when
restricted to horizontal planes,
as shown in \cref{fig:defect-2},
forming $L_z$ coupled layers of 2D \XZZX codes.%

\begin{figure}
    \subfloat[]{
    \includegraphics{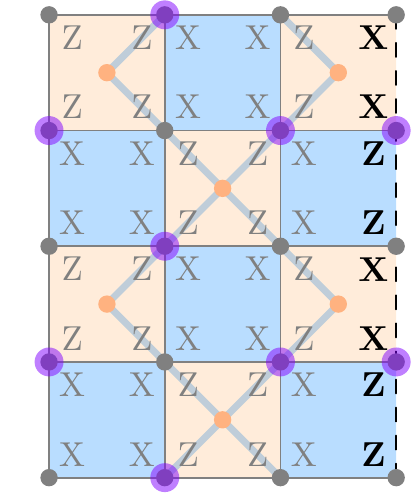}%
    \label{fig:defect-1}
    }
    \subfloat[]{
    \includegraphics{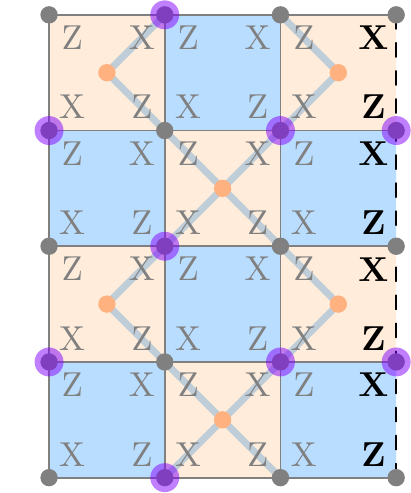}%
    \label{fig:defect-2}
    }
    \caption{Horizontal layer of 3D surface code on rotated
    layout with
    $\textit{odd}\times\textit{even}$
    horizontal dimensions and periodic boundary conditions with
    seam (dashed), like that of \cref{fig:rot3DSC-periodic}.
    The restricted action of octahedron stabilizers (orange)
    and horizontal square stabilizers (blue) on horizontal qubits (vertices)
    in the layer are labeled.
    The diamond stabilizers between adjacent layers,
    whose shadows are drawn as faint blue lines,
    are disconnected across the seam.
    \textbf{(a)} Original stabilizers.
    Stabilizer terms modified by defects on the seam are labeled in bold.
    Qubits where the Clifford deformation is applied as a Hadamard are
    highlighted in purple.
    \textbf{(b)} Clifford-deformed stabilizers.
    Note that all stabilizers restricted on the plane are of the form $XZZX$,
    including those on the seam.
    }
    \label{fig:defect}
\end{figure}

\subsection{Pure $Z$ logical operator}

The 2D \XZZX code on a rotated layout with periodic boundaries and coprime
dimensions has pure $Z$ logical operators supported on $O(L^2)={O}(n)$ physical qubits \cite{XZZX2021}.
The intuition behind this fact is that syndromes propagate on the diagonals, and when the dimensions of the lattice are coprime, 
strings of errors need to wrap around the whole torus in order to form a non-trivial loop. 
As a consequence, the logical error rate $\bar{p}$ for purely $Z$ biased noise scales as

\begin{align}
    \bar{p} \propto e^{-\alpha(p) d_Z} = e^{-\alpha(p) n}
\end{align}
when the physical error rate $p$ goes to zero, where $\alpha(p)$ is a polynomial in $p$ and $d_Z$ is the Z-distance of the code.

We establish a similar result for the Clifford-deformed 3D surface code, summarized in the following theorem:

\begin{restatable}[Lowest-weight $Z$-only logical]{theorem}{theoremsubthreshold}
\label{theorem:subthreshold-scaling}
Consider an $L\times (L+1)\times L_z$ Clifford-deformed 3D rotated surface code
with periodic boundary conditions.
If $L \equiv 1 \text{ or } 2 \mod 4$,
then the lowest-weight logical operator that consists of only
$I$ and $Z$ Pauli operators acts with $Z$ on all horizontal qubits.
\end{restatable}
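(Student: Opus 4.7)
The plan is to exploit the observation already made in the paper that after the Clifford deformation, every octahedron and every horizontal square restricts on its $xy$-plane to an \XZZX operator, so that each horizontal layer carries a copy of the 2D \XZZX code on the coprime torus $L\times(L+1)$. I would reduce the classification of pure-$Z$/$I$ logicals in the 3D code to (i) the 2D classification on each layer, and (ii) constraints coming from the diamond stabilizers that couple neighbouring layers, and then identify $\bar Z:=\prod_{q\in H}Z_q$, with $H$ the set of all horizontal qubits, as the unique non-trivial minimum-weight representative.

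First I would verify that $\bar Z$ commutes with every stabilizer generator. For each of the four stabilizer types in \cref{fig:rot3DSC-stabilizers-def} I would count the number of horizontal qubits on which the generator contains an $X$: octahedra and horizontal squares each contribute two (from the \XZZX pattern), diamond type 1 contributes zero, and diamond type 2 contributes an even number by inspection. Hence $\bar Z$ is in the normaliser. To show that $\bar Z$ is not itself a product of stabilizers I would use a parity argument on the Hadamarded checkerboard sublattice, which has $L(L+1)/2$ qubits per layer. This count is odd precisely when $L\equiv 1$ or $2\pmod 4$, and a stabilizer combination whose net action is a single $Z$ on every horizontal qubit while keeping the vertical action trivial forces a parity condition on the number of deformed versus undeformed generators used per layer, which can only be satisfied when $L(L+1)/2$ is even. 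The seam defect of \cref{fig:defect} enters this parity count and is the technical heart of this step.

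The lower bound on the weight of any non-trivial pure-$Z$/$I$ logical $P$ uses the reduction above. Restricting $P$ to an arbitrary horizontal layer $z$ yields a pure-$Z$ operator $P_z$ that must be syndrome-free for the 2D \XZZX code on $L\times(L+1)$. By the coprime result of \cite{XZZX2021}, the only such pure-$Z$ operators are the identity and the all-$Z$ operator on that plane (since 2D \XZZX stabilizer generators always carry an $X$ and therefore cannot contribute to a pure-$Z$ representative). Hence each layer is either trivial or fully covered, so $|P|=k\,L(L+1)$ for some $0\le k\le L_z$.

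The final and, I expect, most delicate step is to show that $k\in\{0,L_z\}$, i.e.\ the inter-layer diamonds forbid activating only a strict subset of planes. Commuting $P$ against a diamond between layers $z$ and $z+1$ gives a constraint tying $P_z$ and $P_{z+1}$ on each inter-layer column: if $P_z$ is trivial while $P_{z+1}$ is the all-$Z$ layer operator, at least one diamond sees an unbalanced horizontal $Z$ whose anticommutation with the diamond's $X$-content cannot be absorbed, and propagating this obstruction up and down the stack forces $P_z=P_{z+1}$ for every adjacent pair. The subtle point, where the mod-$4$ hypothesis is expected to reappear, is that near the seam of \cref{fig:defect} the diamonds are modified by defect Hadamards, and one must verify that this modification does not enable a seam cancellation that would otherwise permit an isolated-layer logical. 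Combining the three steps yields $P=\bar Z$ as the unique non-trivial pure-$Z$/$I$ logical, with weight $L_z\,L(L+1)=|H|$, which is the claim.
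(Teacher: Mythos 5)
Your overall decomposition---(i) constrain each horizontal layer separately, (ii) use inter-layer stabilizers to tie adjacent layers together, (iii) rule out vertical-only logicals---matches the paper's three-part structure, and Part (i) is sound (the paper rebuilds the coprime-diagonal parity-check argument directly rather than citing the 2D \XZZX result, but the content is the same). The genuine gap is in your step (ii), and you have also misplaced the role of the $\bmod\,4$ hypothesis.

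For the inter-layer constraint, arguing from a single diamond does not work: a diamond stabilizer has two horizontal qubits (one on layer $z$, one on layer $z+1$) \emph{and} two vertical qubits. A pure-$Z$ operator $P$ with $P_z$ trivial and $P_{z+1}$ all-$Z$ anticommutes with the horizontal part of a $1111$ diamond, but $P$ is free to carry $Z$ on one of the two vertical qubits, which absorbs the anticommutation. So ``propagating the obstruction up and down the stack'' does not force $P_z=P_{z+1}$ at the level of individual diamonds; one has to eliminate the vertical qubits first. The paper does this by multiplying diamonds in a zig-zag chain that wraps once around the $4k+2$-length horizontal direction: the vertical qubits cancel in pairs, and because the chain contains an \emph{odd} number ($2k+1$) of $1111$-type diamonds, the resulting purely-horizontal parity check has odd support on each of the two layers. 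Combined with the intra-layer pairwise checks of Part (i), this reduces to a weight-2 constraint with one qubit on each layer, which is what actually forces $P_z=P_{z+1}$. This is exactly where the hypothesis $L\equiv 1$ or $2\pmod 4$ enters: it guarantees one horizontal dimension equals $4k+2$, so the chain has odd parity; if that dimension were $4k$ the constraint would be vacuous. Your proposal attributes the $\bmod\,4$ condition to the seam defect and to the parity of $L(L+1)/2$ (the number of Hadamarded qubits per layer); neither of those plays the role you describe, and the latter observation, while numerically correct, is not used in the paper's proof. Finally, showing that $\bar Z=\prod_{q\in H}Z_q$ is a genuine logical is much simpler than the stabilizer-parity count you sketch: it commutes with all stabilizers by the weight count you give, and it anticommutes with the $Y$-membrane logical representative, so it cannot be in the stabilizer group.
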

This means that $Z$-distance $d_Z$ scales as $O\left( L^3\right)$, or in other words, the logical error rate for pure Z biased noise scales as
\begin{align}
    \bar{p} \propto e^{-\alpha(p) n}.
\end{align}

We show here that under pure $Z$ bias noise, the minimum distance of a rotated Clifford-deformed 3D surface code 
(with periodic boundary conditions) can scale as $O(L^3)$. Recall that our code has two types of qubits --
\emph{horizontal qubits} which live on horizontal planes,
and \emph{vertical qubits} which live on vertical edges.
We can establish the following theorem:
\begin{proof}
As we only consider $Z$ errors,
it suffices to work with classical parity-check operators
that detect $Z$ errors
rather than the full quantum stabilizers,
using the map $I\mapsto 0$, $X\mapsto 1$, $Y\mapsto 1$, $Z\mapsto 0$.
The horizontal parity-check operators are all horizontal squares of the form $0110$
while there are two types of vertical parity checks --
$1111$ and $0110$; see \cref{fig:rotated-classical-parity-checks}.
We divide our proof into 3 distinct parts.
In Part 1, we show that if a $Z$-logical has support in a horizontal qubit, then it has support in the whole horizontal layer containing this qubit.
In Part 2, we show that if a $Z$-logical has support in a horizontal layer and $L \equiv 1 \text{ or } 2 \mod 4$, then it has support in all horizontal layers.
Finally, we show in Part 3 that a $Z$-logical cannot have support uniquely in vertical qubits. It shows that if $L \equiv 1 \text{ or } 2 \mod 4$, the minimum-weight logical is the one with all horizontal qubits activated.

\begin{figure}[t]
\centering
\includegraphics[scale=0.9]{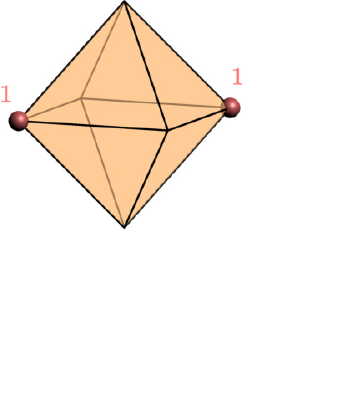}
\includegraphics[scale=0.9]{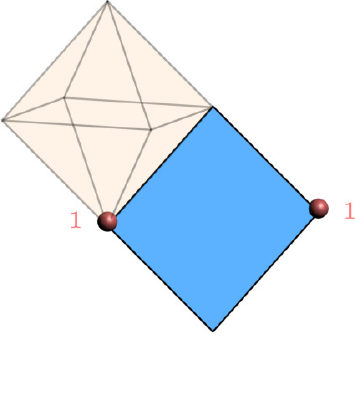}
\includegraphics[scale=0.9]{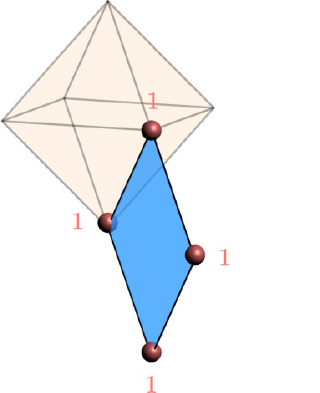}
\includegraphics[scale=0.9]{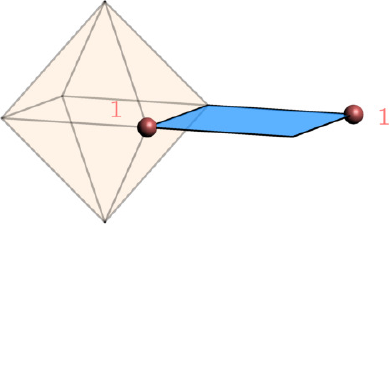}
\caption{%
    Stabilizers can be converted into binary parity checks for $Z$ errors.
}
\label{fig:rotated-classical-parity-checks}
\end{figure}

\begin{figure}[t]
    \centering

    \subfloat[]{
        \includegraphics[scale=0.85]{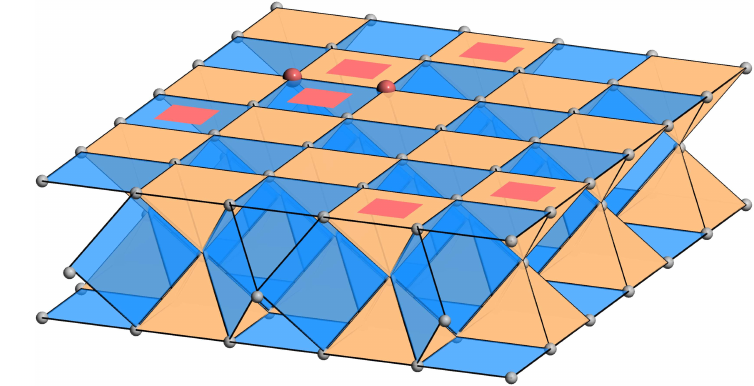}
        \label{fig:rotated-proof-1}
    }
    \subfloat[]{
        \includegraphics[scale=0.85]{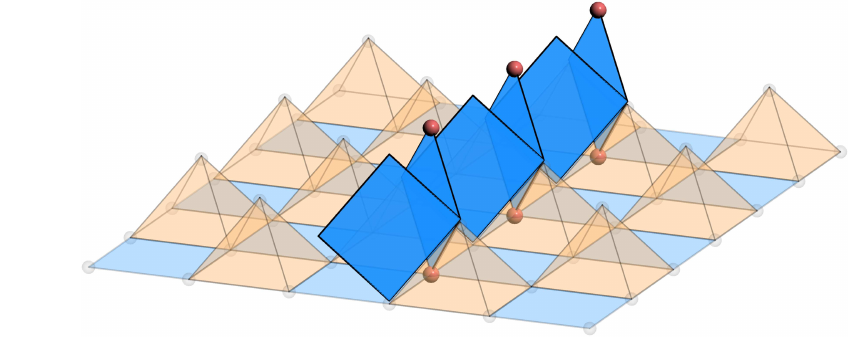}
        \label{fig:rotated-proof-2}
    }
    \caption{Proof that the pure Z logical of our tailored 3D rotated surface code is supported on all the qubits of the code.
        \textbf{(a)} A parity check between any two horizontal qubits (red dots) can be constructed by wrapping around the coprime plane along a diagonal (red squares).
        \textbf{(b)} Product of zig-zag parity-check operators along the $4k + 2$ periodic direction, which consists of $2k+1$ $0110$-type checks and $2k+1$ $1111$-type checks.
    }%
    \label{fig:rotated-proof}
\end{figure}

\begin{enumerate}[(1)]
\item
Let $\mathcal{L}$ be a pure-$Z$ logical with a horizontal qubit in its support. 
We show that $\mathcal{L}$ is supported on the whole horizontal layer containing this qubit.
This problem reduces to showing that there is always a weight-2 parity check between every pair of qubits in a chosen layer.
Indeed, if that is the case, it means that the parity of every pair of qubit must be $0$, which eliminate the possibility of layer
not entirely composed of $0$ or $1$.
To prove this, we notice that every diagonal line in the 2D coprime lattice forms a parity check with a $1$ at its boundary,
as illustrated in \cref{fig:rotated-proof-1}.
Since the lattice has coprime dimensions, there exists a diagonal line that go through all the qubits before looping to itself.
In particular, this line goes through every pair of qubits, which achieves our proof.

\item
If $L \equiv 1 \text{ or } 2 \mod 4$, it means that one of the direction contains $4k + 2$ cells, for some integer $k \geq 0$.
Consider a parity-check operator that consists of a product
of parity-check operators in a zig-zag fashion that wraps
through the periodic boundaries of the vertical qubit lattice, as shown in \cref{fig:rotated-proof-2}.
Such a chain would be made up of $2k + 1$ checks of type $1111$
and $2k + 1$ checks of type $0110$.
Their product is an operator that has support on $2k + 1$ horizontal qubits on the horizontal layer below and
$2k + 1$ horizontal qubits on the horizontal layer above in an identical manner.

Recall from Part 1 that there exists a parity operator that acts on
any two horizontal qubits on the same layer.
By composing this pair-wise parity check between every pair of qubits
on a layer,
the layers can be removed pairwise,
leaving only one remaining qubit on that layer.
The exact procedure can be applied to the other layer,
resulting in a parity-check operator that acts only on two qubits,
one in each layer.

This provides a constraint that the $Z$-only logical must be
the same between layers of horizontal qubits.
That is, either all horizontal qubits are $I$ or all horizontal qubits are $Z$.

\item
We show here that there cannot be a logical operator that has support only on vertical edges.
We first notice that there is only one logical qubit encoded in the Clifford-deformed 3D rotated surface code.
One pair of anti-commuting logical operators has a string of horizontal qubits of the form $XZXZ\cdots$,
and a membrane of horizontal qubits made of $Y$ operators.

Since both the string and the membrane logical operators do not act on vertical qubits,
they would commute with an only-vertical operator.
And since there is only one logical qubit, such an operator cannot be a logical operator.

\end{enumerate}
The operator that acts with $Z$ on all horizontal qubits is a valid logical
since it anticommutes with the $Y$ membrane logical and commutes with all the
stabilizers.
We showed that there cannot be a pure-$Z$ logical of lower weight that acts non-trivially
on a horizontal qubit, or that is supported only on vertical qubits.
Therefore, the minimum weight $Z$-only logical is one that acts as $Z$ on all horizontal qubits.
\end{proof}

Note that the proof fails if the even dimension is not $4k + 2$
for some integer $k$, as the constraint between layers (see Part 2) does not apply,
in which case the scaling becomes $O(L^2)$ instead.

\subsection{Robustness of the $Z$-weight scaling}

\begin{figure}[t]
    \centering
    \subfloat[]{%
        \includegraphics{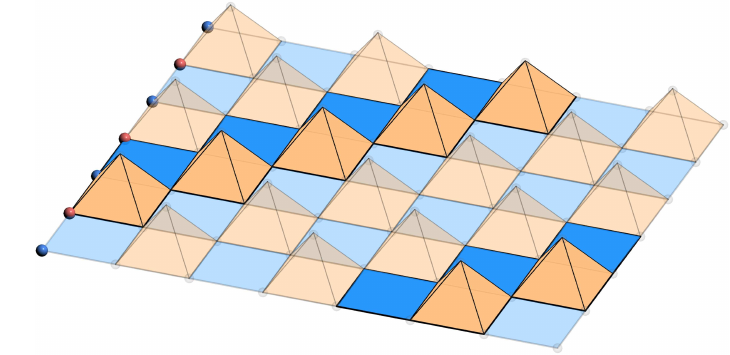}%
        \label{fig:rotated-fragile-membrane-1}
    }%
    \subfloat[]{%
        \includegraphics{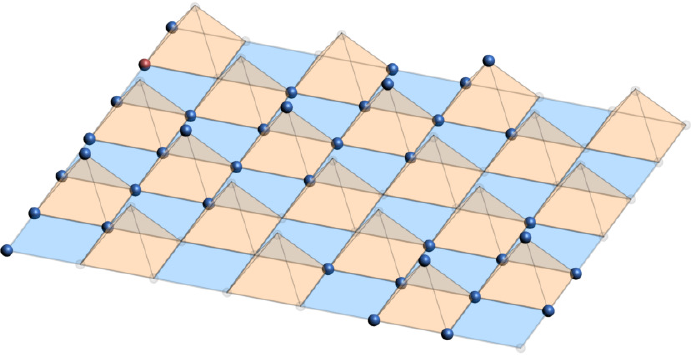}%
        \label{fig:rotated-fragile-membrane-2}
    }
    \caption{Example of membrane logical operator made of $O(1)$ $X$s and $\Theta(L^2)$ $Z$s.
        \textbf{(a)} The membrane can be obtained by starting from a string logical made of 
        alternating $X$s (red vertices) and $Z$s (blue vertices), and applying stabilizers (highlighted in the figure) 
        along the diagonal lines connecting pairs of $X$s. 
        This has the effect of annihilating all the connected pairs of $X$s,
        leaving only one unpaired $X$ in the logical operator,
        and a trail of $\Theta(L^2)$ $Z$s.
        \textbf{(b)} Resulting membrane logical operator.
    }%
    \label{fig:rotated-fragile-membrane}
\end{figure}

\begin{figure}[t]
    \centering
    \subfloat[]{
        \includegraphics{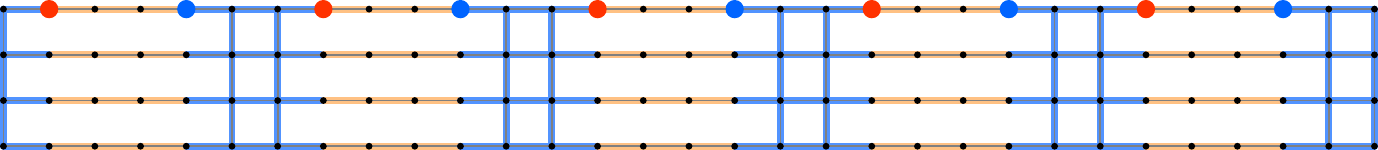}
        \label{fig:rotated-no-string-proof-1}
    }
    \\
    \subfloat[]{
        \includegraphics{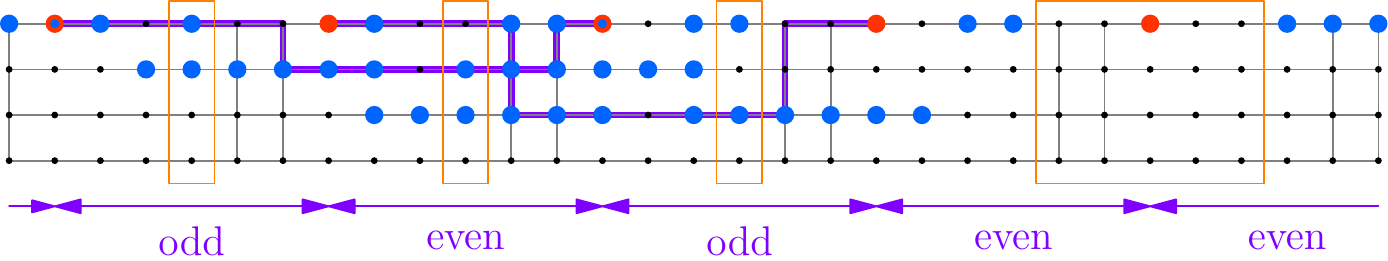}
        \label{fig:rotated-no-string-proof-2}
    }
    \caption{Proof that there is no string logical made of $O(1)$ Xs and $O(L)$ Zs.
        \textbf{(a)} All the horizontal qubits can be placed on a long diagonal slice of size $(L_x \cdot L_y, L_z)$.
        Considering only $X$s acting on horizontal qubits,
        all stabilizers become 2-body terms.
        The lattice shown is the diagonal slice of a 3D rotated surface code with dimensions $(10,3,4)$. 
        Horizontal qubits are represented as black vertices, octahedron stabilizers as orange edges and face stabilizers as blue edges.
        The leftmost and rightmost vertices are identified to represent periodic
        boundary conditions.
        The support of the $XZ$-string logical operator is represented by red ($X$) and blue ($Z$) vertices.
        \textbf{(b)}
        Example of a matching solution for two pairs of $X$s belonging to
        the $XZ$-string logical operator.
        The purple edges represent the stabilizers used in the matching solution. 
        This results in eliminating all but one of the $X$s (red),
        while introducing new $Z$s (blue) on horizontal qubits
        located $L_y+1$ vertices to the left and to the right of every horizontal
        stabilizer used in the matching solution.
        Every column of horizontal stabilizers either has an odd or even
        number used in the matching solution,
        which is its parity.
        All $2L_y - 1$ columns of horizontal stabilizers between any two
        neighboring $X$s have the same parity,
        as annotated in purple.
        Adjacent sections have different parities,
        except for the sections around unmatched $X$s,
        which are instead of equal parity
        (the two rightmost sections).
        As a result of this parity alternation,
        there are an odd number of $Z$s introduced in every column
        of horizontal qubits,
        except in columns of horizontal qubits at the midpoint between
        neighboring $X$s,
        and the columns of horizontal qubits around unmatched $X$s.
        Since there are only $O(L)$ such exceptions (boxed in orange),
        we deduce that $\Theta(L^2)$ $Z$s have been introduced in the process.
    }%
    \label{fig:no-string-proof}
\end{figure}

We note that, as in the 2D case~\cite{higgotFragile2022}, the above statement is not robust, 
in the sense that allowing for a single $X$ into our logical operator drops the effective scaling down from $O(L^3)$ to $O(L^2)$. 
This can be seen in \cref{fig:rotated-fragile-membrane},
where we present an example of a logical operator that has $O(1)$ Pauli $X$s and
$\Theta(L^2)$ Pauli $Z$s.

However, we show that there is no string logical operator that contains $O(L)$ $Z$s and $O(1)$ $X$s as follows:

\begin{theorem} \label{theorem:rotated-no-string}
    In a 3D rotated surface code
    whose dimensions satisfy the assumptions of
    \cref{theorem:subthreshold-scaling}, any logical operator containing $O(1)$ $Xs$ also contains $\Theta(L^2)$ $Zs$.
\end{theorem}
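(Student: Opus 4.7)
The plan is to leverage the $2$-body structure that horizontal stabilizers display when one restricts attention to their action on horizontal qubits, as already exploited in the proof of \cref{theorem:subthreshold-scaling}. First I would observe that any logical operator is equivalent, up to stabilizers, to a product of a fixed reference logical together with some stabilizer group element. Choosing the reference to be the $XZXZ\cdots$ string logical of length $\Theta(L)$ along a horizontal direction, the task reduces to: if a stabilizer-equivalent representative carries only $O(1)$ $X$s, how many $Z$s must it necessarily host?

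Next I would project the configuration onto the long $(L_x L_y, L_z)$ diagonal slice that supports all horizontal qubits, exactly as in \cref{fig:rotated-no-string-proof-1}. On this slice, every stabilizer acts on horizontal qubits as a $2$-body term joining two neighboring horizontal qubits. A stabilizer product that kills all but $O(1)$ of the $\Theta(L)$ Pauli $X$ endpoints of the reference logical therefore corresponds to a matching of those endpoints by paths of $2$-body stabilizers. The problem thus becomes a purely combinatorial one: count the $Z$s that any such matching is forced to place on horizontal qubits.

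The core of the argument, sketched in \cref{fig:rotated-no-string-proof-2}, is a column-by-column parity count of the horizontal stabilizers used in any admissible matching. I would first show that within each contiguous section of the slice lying between two consecutive (paired or unpaired) $X$-endpoints, the parity of the number of horizontal stabilizers per column is constant across the $2L_y-1$ columns of the section. Then I would establish that this parity alternates between adjacent sections, with the single exception that the two sections flanking an \emph{unmatched} $X$ share the same parity. Because each column of odd parity is forced to carry at least one Pauli $Z$ on a horizontal qubit (appearing $L_y+1$ positions to the side of each stabilizer used, and not cancellable with its neighbor because of the parity), and since only $O(L)$ columns are exceptional (one midpoint per section, plus $O(1)$ sections near each of the $O(1)$ unmatched $X$s), one forces $\Theta(L^2)$ Pauli $Z$s.

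The main obstacle will be the parity bookkeeping: making precise why sections flanking an unpaired $X$ behave oppositely from sections flanking a paired $X$, and verifying that the $Z$s deposited by different stabilizers cannot conspire to cancel in a way that collapses the quadratic count to something smaller. A secondary subtlety is the possibility that some stabilizers in the matching act as $X$ on vertical qubits (via diamond stabilizers of type $XXXX$ or the vertical $X$ action of octahedra), which could in principle be used to annihilate some horizontal $Z$s; I would argue that any such use either creates new $X$s on vertical qubits that do not count toward the $O(1)$ $X$ budget only if paired and that each such pairing removes only $O(L)$ horizontal $Z$s, leaving the $\Theta(L^2)$ lower bound intact.
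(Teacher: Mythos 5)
Your proposal follows essentially the same route as the paper's proof: reduce to the $XZ$-string coset, project onto the long $(L_xL_y,L_z)$ diagonal slice where the horizontal-qubit restrictions of the stabilizers are all weight-2, recast the elimination of $X$-endpoints as a matching problem, and run a column-by-column parity count whose alternation breaks only at the $O(1)$ unmatched $X$s. The one place you work harder than necessary is the vertical-qubit subtlety: the paper sidesteps it entirely by proving the $\Theta(L^2)$ bound under the weaker hypothesis that only $O(1)$ $X$s appear \emph{on horizontal qubits}, so diamond or octahedron action on vertical qubits never needs to enter the bookkeeping.
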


\begin{proof}
    To prove this theorem, we adopt the following strategy. 
    We first show that if such a logical operator exists, it must belong to the same coset as the logical string operator 
    that comes from the Clifford deformation of an $X$ string logical operator.
    This logical operator, which we call the $XZ$-string, by virtue of it having alternating $X$ and $Z$s along its length, is represented in \cref{fig:rotated-fragile-membrane-1}.
    We then derive all the transformations of this string, through the application of stabilizers, that 
    results in $O(1)$ $X$s on horizontal qubits. 
    For that, we show that this is equivalent to finding all the solutions of a matching problem on a 2D lattice,
    and prove that all such solutions necessarily result in creating $\Theta(L^2)$ $Z$s on horizontal qubits.

    Let us start by showing that our string logical operator must be logically equivalent to the $XZ$-string. 
    Since the $XZ$-string is free to move in 3D by application of stabilizer generators, any other string logical operator must commute with at least
    one of its instances by avoidance. Moreover, the 3D rotated surface code with the dimensions of \cref{theorem:subthreshold-scaling} 
    only encodes one qubit. 
    Therefore, any other string logical operator must either be trivial or logically equivalent to the $XZ$-string.

    We now show that any logical operator equivalent to the $XZ$-string with $O(1)$ $X$s on the horizontal qubits 
    also has $\Theta(L^2)$ $Z$s on the horizontal qubits.
    This statement, while restricted on the horizontal qubits, implies the stronger result that requiring $O(1)$ $X$s on
    both the horizontal and vertical qubits leads to the presence of $\Theta(L^2)$ $Z$s in the operator.
    Therefore, we ignore the vertical qubits in the rest of the proof.

    The goal is now to show that, by applying stabilizers on the $XZ$-string, we can eliminate all $X$s except $O(1)$ of them.
    To see how $X$s can be moved and eliminated, let us focus on the $X$ part of the stabilizers. 
    This corresponds to the parity-check operators of \cref{fig:rotated-classical-parity-checks}. 
    Since we choose to ignore the vertical qubits, we consider the restriction of these stabilizers to the
    the horizontal qubits. The final restricted operators are all weight-2.
    We can therefore represent all the horizontal qubits and stabilizers on a 2D lattice, constructed by taking a diagonal slice
    of the 3D rotated surface code lattice that is vertical and runs parallel to
    the line connecting a pair of $X$s on a horizontal square stabilizer. Since the code has coprime dimensions, there is only a single such diagonal slice
    on which all the horizontal qubits sit. This 2D lattice is represented in \cref{fig:rotated-no-string-proof-1}.

    The problem can now be formulated as a matching problem on this 2D lattice.
    Indeed, since the stabilizers have $X$-weight 2 restricted on horizontal qubits, the $X$s can only be annihilated in pairs, by applying a chain
    of stabilizers that connects the pair.
    Since we allow $O(1)$ $X$s to remain in the logical operator, the more precise problem is to match
    all but $O(1)$ $X$s. Any logical operators with $O(1)$ $X$s can then be seen as a different solution
    to this matching problem. 

    The next step is to count how many $Z$s are created for each matching solution.
    When applying a horizontal stabilizer,
    which is either an octahedron stabilizer or a horizontal square stabilizer,
    two $Z$s are introduced on the horizontal qubits,
    located $L_y+1$ vertices to the left and to the right of the stabilizer
    as viewed on the 2D diagonal slice.
    Note that additional $Z$s are also introduced on the vertical qubits for
    octahedron stabilizers.
    An example of a matching solution with its introduced $Z$s is shown in \cref{fig:rotated-no-string-proof-2}.
    To count them, we note that any matching solution has an alternation of even-parity and odd-parity sections,
    where a section is defined as the space between two original $X$s, and its parity is defined as the number of horizontal stabilizers
    applied on each column of the section, modulo two.
    Since we can choose $O(1)$ $X$s that are not matched, this alternating
    parity pattern
    breaks at these unmatched $X$s, where either two even-parity sections or two
    odd-parity sections follow one another.
    This can be seen on the rightmost sections of \cref{fig:rotated-no-string-proof-2}.
    However, since there are only $O(1)$ unmatched $X$s, the number of such breaks in alternation is also $O(1)$.

    We can then use this last fact to prove that the number of $Z$s is $\Theta(L^2)$.
    Indeed, the number of $Z$s on a given column of horizontal qubits is, by construction, equal to the number of stabilizers
    applied $L_y+1$ columns to the left and to the right.
    Those two columns of stabilizers are $2L_y - 1$ edges apart, and since the size of a section is $2L_y$ edges,
    they belong in different sections as long the column of $Z$s is not precisely in the middle of a section.
    Excluding these $O(L)$ columns, as well as the $O(L)$ columns located $L_y$
    edges to the left and to the right of a parity-alternation breaking point,
    of which there are only $O(1)$,
    we can see that the number of $Z$s applied to a given column is equal to the sum of the number of $Z$s in an odd-parity and in an even-parity section.
    Therefore, in $\Theta(L^2)$ columns, there is an odd number of $Z$s,
    where the number of $Z$s must be at least 1.
    The logical operator therefore contains $\Theta(L^2)$ $Z$s.
\end{proof}

This theorem shows that under high dephasing bias, with a low number of $X$ errors, the subthreshold error rate
scales as $O(e^{-\alpha L^2})$, similarly to the original 3D rotated surface code. 
However, the number of logical operators with a low number of $X$s is expected to be lower in the Clifford-deformed code,
and we therefore expect the coefficient $\alpha$ to be higher. 
This can be seen through the following heuristic argument.
In any transformation of the $XZ$-string considered in the proof of \cref{theorem:rotated-no-string}, 
applying a vertical stabilizer creates some $X$s on the vertical qubits. 
Those $X$s cannot be eliminated through the application of other stabilizers, 
so vertical stabilizers necessarily increase the number of $X$s in the operator.
In order to keep only one $X$ in our operator, the matching must therefore be performed on the horizontal plane.
But, for a fixed choice of which $X$ to keep, there are only two possible matching solutions confined to the plane.
By moving the remaining $X$ on the plane, or choosing different planes, we can deduce that the number of membrane operators
with a single $X$ scales as $\Theta(L^3)$.
This can be compared to the original 3D rotated surface code, where the number of such logical operators scales exponentially
in the system size.
This reasoning can be generalized to logical operators containing $O(1)$ $X$s. 
When more than one $X$ is present, each logical operator is characterized by the planes
in which the $X$s are supported, and the positions of these $X$s within these planes.
This is due to the confinement property discussed for the case of a single $X$.
The number of such choices still scales polynomially with the system size, and 
hence remains an exponential improvement compared to the original code.

\section{Discussion}\label{sec:discussion}

In this work, we presented Clifford deformations of many 3D topological codes with high quantum memory threshold error rates for biased Pauli noise. One important question following our study is whether it is always possible to design a Clifford deformation of a topological stabilizer code such that there exists a decoding strategy with 50\% threshold error rate at infinitely biased noise. On the basis of the wide range of examples we present in this work, we conjecture that this is true. We also presented a rotated layout of the surface code for which choosing appropriate dimensions and boundary conditions leads to a subthreshold scaling of $\exp{-O(n)}$, for infinitely biased noise. We showed that in the regime of large finite bias, which we model as the presence of $O(1)$ $X$ errors, this subthreshold scaling becomes $\exp{-O(L^2)}$.  It would be interesting to consider how such geometrical optimizations can improve the code performance for other 3D codes such as the 3D color code. 

Families of random Clifford-deformed surface codes in two dimensions have been shown to exhibit high threshold error rates and subthreshold scaling better than XZZX and XY surface codes~\cite{dua2022clifforddeformed}. The performance of the random codes at infinite bias can be intuitively explained via a mapping to percolation problems. One could consider random Clifford-deformed 3D surface codes and color codes for which we expect a similar mapping to percolation problems and a phase diagram containing a phase of 50\% threshold error rate analogous to the random Clifford-deformed surface codes in 2D. It would be also interesting to study how random Clifford deformations affect the memory performance of fracton codes at infinite bias, which have intrinsically rigid logical operators irrespective of the bias.

A natural next step is to extend the code capacity results in our work to the phenomenological fault-tolerant scenario as well as the more realistic circuit-level scenario. In the circuit-level scenario, it becomes important to use bias-preserving gates to maintain the performance advantage found for biased noise.

In three dimensions, surface codes have been defined on fractal lattices with Hausdorff dimension $D_H=2+\epsilon$~\cite{Zhu_fractal_2022,Dua_fractal_2022}. Our Clifford deformation of the 3D surface code naturally applies to such fractal surface codes that can be created by punching holes in the 3D surface code.

\paragraph{Code availability}
The source code for the numerical simulations of the quantum error correcting
codes, noise models and decoders described in this paper are publicly available
on a GitHub repository at
\href{https://github.com/panqec/panqec}{github.com/panqec/panqec}.
This is the source code repository for the PanQEC Python package
(pronounced ``pancake'').
The vision for PanQEC is to be a collection of
quantum error correcting codes, noise models and decoders, which are amenable to numerical
simulation and interactive three-dimensional visualization.
PanQEC's documentation is available at
\href{https://panqec.readthedocs.io/en/latest/}{panqec.readthedocs.io},
which also includes tutorials on usage.
Furthermore, an online demonstration of its 3D visualization capabilities are
available at
\href{https://gui.quantumcodes.io/}{gui.quantumcodes.io}.

\vspace{-2mm}
\paragraph{Acknowledgments}
We thank Benjamin Brown for showing how to prove a nontrivial part of Theorem 1 i.e.\ the decoding of the plaquette syndromes. 
We thank Steve Flammia for comments, especially for asking whether our rotated layout has the property proven in Theorem~\ref{theorem:rotated-no-string}. 
We thank Dan Browne, Michael Gullans, Oscar Higgott, Armanda Quintavalle, Joschka Roffe and George Umbrarescu for comments on the manuscript. 
AP is supported by the Engineering and Physical Sciences Research Council (EP/S021582/1).
EH was supported by the Perimeter Scholars International scholarship and the
Fulbright Future Scholarship.
EH acknowledges support from the National Science Foundation (QLCI grant
OMA-2120757)
Research at Perimeter Institute is supported in part by the Government of Canada through the Department of Innovation, Science and Economic Development Canada and by the Province of Ontario through the Ministry of Colleges and Universities.
CTC acknowledges support from the Swiss National Science Foundation through the Sinergia grant CRSII5-186364, and for the NCCRs QSIT and SwissMAP. AD is supported by the Simons Foundation through the collaboration
on Ultra-Quantum Matter (651438, AD) and by the Institute for Quantum
Information and Matter, an NSF Physics Frontiers Center (PHY-1733907).

\vspace{-2mm}
\paragraph{Correspondence:} adua@caltech.edu

\bibliographystyle{quantum}  
\bibliography{bib,chubbbib}

\begin{thebibliography}{10}

\bibitem{AliferisPreskill_bias2008}
Panos Aliferis and John Preskill.
\newblock ``Fault-tolerant quantum computation against biased noise''.
\newblock \href{https://dx.doi.org/10.1103/PhysRevA.78.052331}{Phys. Rev. A
  {\bf 78}, 052331}~(2008).

\bibitem{Ultrahigh2018}
David~K. Tuckett, Stephen~D. Bartlett, and Steven~T. Flammia.
\newblock ``Ultrahigh error threshold for surface codes with biased noise''.
\newblock \href{https://dx.doi.org/10.1103/physrevlett.120.050505}{Physical
  Review Letters {\bf 120}, 050505}~(2018).
\newblock  \href{http://arxiv.org/abs/1708.08474}{arXiv:1708.08474}.

\bibitem{aliferis2009fault}
Panos Aliferis, Frederico Brito, David~P DiVincenzo, John Preskill, Matthias
  Steffen, and Barbara~M Terhal.
\newblock ``Fault-tolerant computing with biased-noise superconducting qubits:
  a case study''.
\newblock New Journal of Physics {\bf 11}, 013061~(2009).
\newblock
  url:~\href{http://dx.doi.org/10.1088/1367-2630/11/1/013061}{http://dx.doi.org/10.1088/1367-2630/11/1/013061}.

\bibitem{Nigg_2014}
D.~Nigg, M.~Muller, E.~A. Martinez, P.~Schindler, M.~Hennrich, T.~Monz, M.~A.
  Martin-Delgado, and R.~Blatt.
\newblock ``Quantum computations on a topologically encoded qubit''.
\newblock \href{https://dx.doi.org/10.1126/science.1253742}{Science {\bf 345},
  302–305}~(2014).

\bibitem{burkard2021}
Guido Burkard, Thaddeus~D. Ladd, John~M. Nichol, Andrew Pan, and Jason~R.
  Petta.
\newblock ``Semiconductor spin qubits''~(2021).
\newblock  \href{http://arxiv.org/abs/2112.08863}{arXiv:2112.08863}.

\bibitem{puri2020bias}
Shruti Puri, Lucas St-Jean, Jonathan~A Gross, Alexander Grimm, Nicholas~E
  Frattini, Pavithran~S Iyer, Anirudh Krishna, Steven Touzard, Liang Jiang,
  Alexandre Blais, et~al.
\newblock ``Bias-preserving gates with stabilized cat qubits''.
\newblock \href{https://dx.doi.org/10.1126/sciadv.aay590}{Science advances {\bf
  6}, eaay5901}~(2020).

\bibitem{XZZX2021}
J.~Pablo Bonilla~Ataides, David~K. Tuckett, Stephen~D. Bartlett, Steven~T.
  Flammia, and Benjamin~J. Brown.
\newblock ``The {XZZX} surface code''.
\newblock \href{https://dx.doi.org/10.1038/s41467-021-22274-1}{Nature
  Communications {\bf 12}, 2172}~(2021).
\newblock  \href{http://arxiv.org/abs/2009.07851}{arXiv:2009.07851}.

\bibitem{dua2022clifforddeformed}
Arpit Dua, Aleksander Kubica, Liang Jiang, Steven~T. Flammia, and Michael~J.
  Gullans.
\newblock ``Clifford-deformed surface codes''~(2022)
  \href{http://arxiv.org/abs/2201.07802}{arXiv:2201.07802}.

\bibitem{tiurevCorrectingNonindependentNonidentically2022}
Konstantin Tiurev, Peter-Jan H.~S. Derks, Joschka Roffe, Jens Eisert, and
  Jan-Michael Reiner.
\newblock ``Correcting non-independent and non-identically distributed errors
  with surface codes''~(2022).
\newblock  \href{http://arxiv.org/abs/2208.02191}{arXiv:2208.02191}.

\bibitem{srivastava2021xyz}
Basudha Srivastava, Anton~Frisk Kockum, and Mats Granath.
\newblock ``The xyz$^2$ hexagonal stabilizer code''~(2021).
\newblock
  url:~\href{https://arxiv.org/abs/2112.06036}{arxiv.org/abs/2112.06036}.

\bibitem{miguel2022cellular}
Jonathan F~San Miguel, Dominic~J Williamson, and Benjamin~J Brown.
\newblock ``A cellular automaton decoder for a noise-bias tailored color
  code''~(2022).

\bibitem{bombinTopologicalComputationBraiding2007}
H.~Bomb{\'\i}n and M.~A. {Martin-Delgado}.
\newblock ``Topological {{Computation}} without {{Braiding}}''.
\newblock \href{https://dx.doi.org/10.1103/PhysRevLett.98.160502}{Phys. Rev.
  Lett. {\bf 98}, 160502}~(2007).

\bibitem{bombinGaugeColorCodes2015}
H{\'e}ctor Bomb{\'\i}n.
\newblock ``Gauge color codes: Optimal transversal gates and gauge fixing in
  topological stabilizer codes''.
\newblock \href{https://dx.doi.org/10.1088/1367-2630/17/8/083002}{New J. Phys.
  {\bf 17}, 083002}~(2015).

\bibitem{kubicaUniversalTransversalGates2015}
Aleksander Kubica and Michael~E. Beverland.
\newblock ``Universal transversal gates with color codes - a simplified
  approach''.
\newblock \href{https://dx.doi.org/10.1103/PhysRevA.91.032330}{Phys. Rev. A
  {\bf 91}, 032330}~(2015).

\bibitem{kubica2015unfolding}
Aleksander Kubica, Beni Yoshida, and Fernando Pastawski.
\newblock ``Unfolding the color code''.
\newblock \href{https://dx.doi.org/10.1088/1367-2630/17/8/083026}{New Journal
  of Physics {\bf 17}, 083026}~(2015).

\bibitem{Vasmer2019}
Michael Vasmer and Dan~E. Browne.
\newblock ``Three-dimensional surface codes: Transversal gates and
  fault-tolerant architectures''.
\newblock \href{https://dx.doi.org/10.1103/physreva.100.012312}{Physical Review
  A{\bf 100}}~(2019).
\newblock  \href{http://arxiv.org/abs/1801.04255}{arXiv:1801.04255}.

\bibitem{breuckmann2017a}
Nikolas~P. Breuckmann, Kasper Duivenvoorden, Dominik Michels, and Barbara~M.
  Terhal.
\newblock ``Local decoders for the {{2D}} and {{4D}} toric code''.
\newblock Quantum Information \& Computation {\bf 17}, 181--208~(2017).
\newblock  \href{http://arxiv.org/abs/1609.00510}{arXiv:1609.00510}.

\bibitem{duivenvoordenRenormalizationGroupDecoder2019}
Kasper Duivenvoorden, Nikolas~P. Breuckmann, and Barbara~M. Terhal.
\newblock ``Renormalization group decoder for a four-dimensional toric code''.
\newblock \href{https://dx.doi.org/10.1109/TIT.2018.2879937}{IEEE Trans.
  Inform. Theory {\bf 65}, 2545--2562}~(2019).

\bibitem{kubica2019cellular}
Aleksander Kubica and John Preskill.
\newblock ``Cellular-automaton decoders with provable thresholds for
  topological codes''.
\newblock \href{https://dx.doi.org/10.1103/PhysRevLett.123.020501}{Physical
  Review Letters {\bf 123}, 020501}~(2019).
\newblock  \href{http://arxiv.org/abs/1809.10145}{arXiv:1809.10145}.

\bibitem{vasmer2021sweep}
Michael Vasmer, Dan~E. Browne, and Aleksander Kubica.
\newblock ``Cellular automaton decoders for topological quantum codes with
  noisy measurements and beyond''.
\newblock \href{https://dx.doi.org/10.1038/s41598-021-81138-2}{Scientific
  Reports {\bf 11}, 2027}~(2021).
\newblock  \href{http://arxiv.org/abs/2004.07247}{arXiv:2004.07247}.

\bibitem{quintavalle2021single}
Armanda~O. Quintavalle, Michael Vasmer, Joschka Roffe, and Earl~T. Campbell.
\newblock ``Single-shot error correction of three-dimensional homological
  product codes''.
\newblock \href{https://dx.doi.org/10.1103/PRXQuantum.2.020340}{PRX Quantum
  {\bf 2}, 020340}~(2021).
\newblock  \href{http://arxiv.org/abs/2009.11790}{arXiv:2009.11790}.

\bibitem{higgottImprovedSingleshotDecoding2022}
Oscar Higgott and Nikolas~P. Breuckmann.
\newblock ``Improved single-shot decoding of higher dimensional hypergraph
  product codes''~(2022).
\newblock  \href{http://arxiv.org/abs/2206.03122}{arXiv:2206.03122}.

\bibitem{bombin2015a}
H{\'e}ctor Bomb{\'\i}n.
\newblock ``Single-{{Shot Fault-Tolerant Quantum Error Correction}}''.
\newblock \href{https://dx.doi.org/10.1103/PhysRevX.5.031043}{Physical Review X
  {\bf 5}, 031043}~(2015).
\newblock  \href{http://arxiv.org/abs/1404.5504}{arXiv:1404.5504}.

\bibitem{brownFaulttolerantErrorCorrection2016}
Benjamin~J. Brown, Naomi~H. Nickerson, and Dan~E. Browne.
\newblock ``Fault-tolerant error correction with the gauge color code''.
\newblock \href{https://dx.doi.org/10.1038/ncomms12302}{Nat Commun {\bf 7},
  12302}~(2016).

\bibitem{kubica2021single}
Aleksander Kubica and Michael Vasmer.
\newblock ``Single-shot quantum error correction with the three-dimensional
  subsystem toric code''~(2021).
\newblock  \href{http://arxiv.org/abs/2106.02621}{arXiv:2106.02621}.

\bibitem{bombinResilienceTimeCorrelatedNoise2016}
H{\'e}ctor Bomb{\'\i}n.
\newblock ``Resilience to {{Time-Correlated Noise}} in {{Quantum
  Computation}}''.
\newblock \href{https://dx.doi.org/10.1103/PhysRevX.6.041034}{Phys. Rev. X {\bf
  6}, 041034}~(2016).

\bibitem{brown2020parallelized}
Benjamin~J. Brown and Dominic~J. Williamson.
\newblock ``Parallelized quantum error correction with fracton topological
  codes''.
\newblock \href{https://dx.doi.org/10.1103/physrevresearch.2.013303}{Physical
  Review Research{\bf 2}}~(2020).

\bibitem{Zhu_fractal_2022}
Guanyu Zhu, Tomas Jochym-O'Connor, and Arpit Dua.
\newblock ``Topological order, quantum codes, and quantum computation on
  fractal geometries''.
\newblock \href{https://dx.doi.org/10.1103/PRXQuantum.3.030338}{PRX Quantum
  {\bf 3}, 030338}~(2022).

\bibitem{Dua_fractal_2022}
Arpit Dua, Tomas Jochym-O'Connor, and Guanyu Zhu.
\newblock ``Quantum error correction with fractal topological codes''~(2022)
  \href{http://arxiv.org/abs/2201.03568}{arXiv:2201.03568}.

\bibitem{cai2022looped}
Zhenyu Cai, Adam Siegel, and Simon Benjamin.
\newblock ``Looped pipelines enabling effective 3d qubit lattices in a strictly
  2d device''~(2022).
\newblock  \href{http://arxiv.org/abs/2203.13123}{arXiv:2203.13123}.

\bibitem{Buonacorsi_2019}
Brandon Buonacorsi, Zhenyu Cai, Eduardo~B Ramirez, Kyle~S Willick, Sean~M
  Walker, Jiahao Li, Benjamin~D Shaw, Xiaosi Xu, Simon~C Benjamin, and Jonathan
  Baugh.
\newblock ``Network architecture for a topological quantum computer in
  silicon''.
\newblock \href{https://dx.doi.org/10.1088/2058-9565/aaf3c4}{Quantum Science
  and Technology {\bf 4}, 025003}~(2019).
\newblock  \href{http://arxiv.org/abs/1807.09941}{arXiv:1807.09941}.

\bibitem{akhtar2022high}
M.~Akhtar, F.~Bonus, F.~R. Lebrun-Gallagher, N.~I. Johnson, M.~Siegele-Brown,
  S.~Hong, S.~J. Hile, S.~A. Kulmiya, S.~Weidt, and W.~K. Hensinger.
\newblock ``A high-fidelity quantum matter-link between ion-trap microchip
  modules''~(2022).
\newblock  \href{http://arxiv.org/abs/2203.14062}{arXiv:2203.14062}.

\bibitem{lukingroupcoldatoms}
Dolev Bluvstein, Harry Levine, Giulia Semeghini, Tout~T. Wang, Sepehr Ebadi,
  Marcin Kalinowski, Alexander Keesling, Nishad Maskara, Hannes Pichler, Markus
  Greiner, Vladan Vuleti{\'c}, and Mikhail~D. Lukin.
\newblock ``A quantum processor based on coherent transport of entangled atom
  arrays''.
\newblock \href{https://dx.doi.org/10.1038/s41586-022-04592-6}{Nature {\bf
  604}, 451--456}~(2022).
\newblock  \href{http://arxiv.org/abs/2112.03923}{arXiv:2112.03923}.

\bibitem{mallek2021fabrication}
Justin~L. Mallek, Donna-Ruth~W. Yost, Danna Rosenberg, Jonilyn~L. Yoder,
  Gregory Calusine, Matt Cook, Rabindra Das, Alexandra Day, Evan Golden,
  David~K. Kim, Jeffery Knecht, Bethany~M. Niedzielski, Mollie Schwartz, Arjan
  Sevi, Corey Stull, Wayne Woods, Andrew~J. Kerman, and William~D. Oliver.
\newblock ``Fabrication of superconducting through-silicon vias''~(2021).
\newblock  \href{http://arxiv.org/abs/2103.08536}{arXiv:2103.08536}.

\bibitem{2017_3DISC}
D.~Rosenberg, D.~Kim, R.~Das, D.~Yost, S.~Gustavsson, D.~Hover, P.~Krantz,
  A.~Melville, L.~Racz, G.~O. Samach, and et~al.
\newblock ``3d integrated superconducting qubits''.
\newblock \href{https://dx.doi.org/10.1038/s41534-017-0044-0}{npj Quantum
  Information {\bf 3}, 42}~(2017).
\newblock  \href{http://arxiv.org/abs/1706.04116}{arXiv:1706.04116}.

\bibitem{IBM3D}
Jerry Chow, Oliver Dial, and Jay Gambetta.
\newblock ``$\text{IBM Quantum}$ breaks the 100‑qubit processor barrier''.
\newblock
  \url{https://research.ibm.com/blog/127-qubit-quantum-processor-eagle}~(2021).

\bibitem{bartolucci2021}
Sara Bartolucci, Patrick Birchall, H\'{e}ctor Bomb\'{i}n, Hugo Cable, Chris
  Dawson, Mercedes {Gimeno-Segovia}, Eric Johnston, Konrad Kieling, Naomi
  Nickerson, Mihir Pant, Fernando Pastawski, Terry Rudolph, and Chris Sparrow.
\newblock ``Fusion-based quantum computation''~(2021).
\newblock  \href{http://arxiv.org/abs/2101.09310}{arXiv:2101.09310}.

\bibitem{bombin2021interleaving}
H\'{e}ctor Bomb\'{i}n, Isaac~H Kim, Daniel Litinski, Naomi Nickerson, Mihir
  Pant, Fernando Pastawski, Sam Roberts, and Terry Rudolph.
\newblock ``Interleaving: Modular architectures for fault-tolerant photonic
  quantum computing''~(2021).
\newblock  \href{http://arxiv.org/abs/2103.08612}{arXiv:2103.08612}.

\bibitem{Bourassa2021blueprintscalable}
J.~Eli Bourassa, Rafael~N. Alexander, Michael Vasmer, Ashlesha Patil, Ilan
  Tzitrin, Takaya Matsuura, Daiqin Su, Ben~Q. Baragiola, Saikat Guha, Guillaume
  Dauphinais, Krishna~K. Sabapathy, Nicolas~C. Menicucci, and Ish Dhand.
\newblock ``Blueprint for a {S}calable {P}hotonic {F}ault-{T}olerant {Q}uantum
  {C}omputer''.
\newblock \href{https://dx.doi.org/10.22331/q-2021-02-04-392}{{Quantum} {\bf
  5}, 392}~(2021).
\newblock  \href{http://arxiv.org/abs/2010.0290}{arXiv:2010.0290}.

\bibitem{Tzitrin2021}
Ilan Tzitrin, Takaya Matsuura, Rafael~N. Alexander, Guillaume Dauphinais,
  J.~Eli Bourassa, Krishna~K. Sabapathy, Nicolas~C. Menicucci, and Ish Dhand.
\newblock ``Fault-tolerant quantum computation with static linear optics''.
\newblock \href{https://dx.doi.org/10.1103/PRXQuantum.2.040353}{PRX Quantum
  {\bf 2}, 040353}~(2021).
\newblock  \href{http://arxiv.org/abs/2104.03241}{arXiv:2104.03241}.

\bibitem{bombin2DQuantumComputation2018}
H{\'e}ctor Bomb{\'\i}n.
\newblock ``{{2D}} quantum computation with {{3D}} topological codes''~(2018).

\bibitem{Dennis2002}
Eric Dennis, Alexei Kitaev, Andrew Landahl, and John Preskill.
\newblock ``{Topological quantum memory}''.
\newblock \href{https://dx.doi.org/10.1063/1.1499754}{Journal of Mathematical
  Physics {\bf 43}, 4452--4505}~(2002).
\newblock
  \href{http://arxiv.org/abs/quant-ph/0110143}{arXiv:quant-ph/0110143}.

\bibitem{bombinExactTopologicalQuantum2007}
H.~Bomb{\'\i}n and M.~A. {Martin-Delgado}.
\newblock ``Exact topological quantum order in {{D}} = 3 and beyond:
  {{Branyons}} and brane-net condensates''.
\newblock \href{https://dx.doi.org/10.1103/PhysRevB.75.075103}{Phys. Rev. B
  {\bf 75}, 075103}~(2007).

\bibitem{Vijay_2016}
Sagar Vijay, Jeongwan Haah, and Liang Fu.
\newblock ``Fracton topological order, generalized lattice gauge theory, and
  duality''.
\newblock \href{https://dx.doi.org/10.1103/physrevb.94.235157}{Physical Review
  B{\bf 94}}~(2016).

\bibitem{Yoshida_2013}
Beni Yoshida.
\newblock ``Exotic topological order in fractal spin liquids''.
\newblock \href{https://dx.doi.org/10.1103/physrevb.88.125122}{Physical Review
  B{\bf 88}}~(2013).

\bibitem{Haah_2011}
Jeongwan Haah.
\newblock ``Local stabilizer codes in three dimensions without string logical
  operators''.
\newblock \href{https://dx.doi.org/10.1103/physreva.83.042330}{Physical Review
  A{\bf 83}}~(2011).

\bibitem{brown2022conservation}
Benjamin~J. Brown.
\newblock ``Conservation laws and quantum error correction: towards a
  generalised matching decoder''~(2022).
\newblock  \href{http://arxiv.org/abs/2207.06428}{arXiv:2207.06428}.

\bibitem{panteleev2019degenerate}
Pavel Panteleev and Gleb Kalachev.
\newblock ``Degenerate quantum {LDPC} codes with good finite length
  performance''.
\newblock \href{https://dx.doi.org/10.22331/q-2021-11-22-585}{{Quantum} {\bf
  5}, 585}~(2021).
\newblock  \href{http://arxiv.org/abs/1904.02703}{arXiv:1904.02703}.

\bibitem{roffe2020decoding}
Joschka Roffe, David~R. White, Simon Burton, and Earl~T. Campbell.
\newblock ``Decoding across the quantum low-density parity-check code
  landscape''.
\newblock \href{https://dx.doi.org/10.1103/PhysRevResearch.2.043423}{Physical
  Review Research {\bf 2}, 043423}~(2020).
\newblock  \href{http://arxiv.org/abs/2005.07016}{arXiv:2005.07016}.

\bibitem{Tailoring2019}
David~K. Tuckett, Andrew~S. Darmawan, Christopher~T. Chubb, Sergey Bravyi,
  Stephen~D. Bartlett, and Steven~T. Flammia.
\newblock ``Tailoring surface codes for highly biased noise''.
\newblock \href{https://dx.doi.org/10.1103/physrevx.9.041031}{Physical Review X
  {\bf 9}, 041031}~(2019).
\newblock  \href{http://arxiv.org/abs/1812.08186}{arXiv:1812.08186}.

\bibitem{bealeQuantumErrorCorrection2018}
Stefanie~J. Beale, Joel~J. Wallman, Mauricio Guti{\'e}rrez, Kenneth~R. Brown,
  and Raymond Laflamme.
\newblock ``Quantum {{Error Correction Decoheres Noise}}''.
\newblock \href{https://dx.doi.org/10.1103/PhysRevLett.121.190501}{Phys. Rev.
  Lett. {\bf 121}, 190501}~(2018).

\bibitem{FernGeneralizedPerformance2006}
J.~Fern, J.~Kempe, S.N. Simic, and S.~Sastry.
\newblock ``Generalized performance of concatenated quantum codes—a dynamical
  systems approach''.
\newblock \href{https://dx.doi.org/10.1109/TAC.2006.871942}{IEEE Transactions
  on Automatic Control {\bf 51}, 448--459}~(2006).

\bibitem{GreenbaumCoherent2017}
Daniel Greenbaum and Zachary Dutton.
\newblock ``Modeling coherent errors in quantum error correction''.
\newblock \href{https://dx.doi.org/10.1088/2058-9565/aa9a06}{Quantum Science
  and Technology {\bf 3}, 015007}~(2017).

\bibitem{HuangCoherent2019}
Eric Huang, Andrew~C. Doherty, and Steven Flammia.
\newblock ``Performance of quantum error correction with coherent errors''.
\newblock \href{https://dx.doi.org/10.1103/PhysRevA.99.022313}{Phys. Rev. A
  {\bf 99}, 022313}~(2019).

\bibitem{Bravyi_2018}
Sergey Bravyi, Matthias Englbrecht, Robert König, and Nolan Peard.
\newblock ``Correcting coherent errors with surface codes''.
\newblock \href{https://dx.doi.org/10.1038/s41534-018-0106-y}{npj Quantum
  Information{\bf 4}}~(2018).

\bibitem{BrownCCZ}
Benjamin~J. Brown.
\newblock ``A fault-tolerant non-{C}lifford gate for the surface code in two
  dimensions''.
\newblock \href{https://dx.doi.org/10.1126/sciadv.aay4929}{Science Advances
  {\bf 6}, eaay4929}~(2020).
\newblock  \href{http://arxiv.org/abs/1903.11634}{arXiv:1903.11634}.

\bibitem{vasmerMorphingQuantumCodes2022}
Michael Vasmer and Aleksander Kubica.
\newblock ``Morphing {{Quantum Codes}}''.
\newblock \href{https://dx.doi.org/10.1103/PRXQuantum.3.030319}{PRX Quantum
  {\bf 3}, 030319}~(2022).

\bibitem{shirley2017fracton}
Wilbur Shirley, Kevin Slagle, Zhenghan Wang, and Xie Chen.
\newblock ``{Fracton Models on General Three-Dimensional Manifolds}''.
\newblock \href{https://dx.doi.org/10.1103/PhysRevX.8.031051}{Physical Review
  X{\bf 8}}~(2018).
\newblock  \href{http://arxiv.org/abs/1712.05892}{arXiv:1712.05892}.

\bibitem{vijay2016fracton}
Sagar Vijay, Jeongwan Haah, and Liang Fu.
\newblock ``Fracton topological order, generalized lattice gauge theory, and
  duality''.
\newblock \href{https://dx.doi.org/10.1103/PhysRevB.94.235157}{Physical Review
  B {\bf 94}, 235157}~(2016).
\newblock  \href{http://arxiv.org/abs/1603.04442}{arXiv:1603.04442}.

\bibitem{song_optimal_xcube}
Hao Song, Janik Schönmeier-Kromer, Ke~Liu, Oscar Viyuela, Lode Pollet, and
  M.~A. Martin-Delgado.
\newblock ``Optimal thresholds for fracton codes and random spin models with
  subsystem symmetry''~(2021).

\bibitem{doi:10.1080/14786435.2011.609152}
Claudio Castelnovo and Claudio Chamon.
\newblock ``{Topological quantum glassiness}''.
\newblock \href{https://dx.doi.org/10.1080/14786435.2011.609152}{Philosophical
  Magazine {\bf 92}, 304--323}~(2012).
\newblock  \href{http://arxiv.org/abs/1108.2051}{arXiv:1108.2051}.

\bibitem{yoshida2013exotic}
Beni Yoshida.
\newblock ``Exotic topological order in fractal spin liquids''.
\newblock \href{https://dx.doi.org/10.1103/PhysRevB.88.125122}{Phys. Rev. B
  {\bf 88}, 125122}~(2013).
\newblock  \href{http://arxiv.org/abs/1302.6248}{arXiv:1302.6248}.

\bibitem{Dua_2019}
Arpit Dua, Isaac~H. Kim, Meng Cheng, and Dominic~J. Williamson.
\newblock ``Sorting topological stabilizer models in three dimensions''.
\newblock \href{https://dx.doi.org/10.1103/physrevb.100.155137}{Physical Review
  B{\bf 100}}~(2019).

\bibitem{Dua_2020}
Arpit Dua, Pratyush Sarkar, Dominic~J. Williamson, and Meng Cheng.
\newblock ``Bifurcating entanglement-renormalization group flows of fracton
  stabilizer models''.
\newblock \href{https://dx.doi.org/10.1103/physrevresearch.2.033021}{Physical
  Review Research{\bf 2}}~(2020).

\bibitem{Takeda_2005}
Koujin Takeda, Tomohiro Sasamoto, and Hidetoshi Nishimori.
\newblock ``Exact location of the multicritical point for finite-dimensional
  spin glasses: a conjecture''.
\newblock \href{https://dx.doi.org/10.1088/0305-4470/38/17/004}{Journal of
  Physics A: Mathematical and General {\bf 38}, 3751--3774}~(2005).

\bibitem{nixon2021correcting}
Georgia~M Nixon and Benjamin~J Brown.
\newblock ``Correcting spanning errors with a fractal code''.
\newblock \href{https://dx.doi.org/10.1109/TIT.2021.3068359}{IEEE Transactions
  on Information Theory {\bf 67}, 4504--4516}~(2021).

\bibitem{higgotFragile2022}
Oscar Higgott, Thomas~C. Bohdanowicz, Aleksander Kubica, Steven~T. Flammia, and
  Earl~T. Campbell.
\newblock ``Fragile boundaries of tailored surface codes and improved decoding
  of circuit-level noise''~(2022).
\newblock  \href{http://arxiv.org/abs/2203.04948}{arXiv:2203.04948}.

\bibitem{fossorier1995soft}
Marc P.~C. Fossorier and Shu Lin.
\newblock ``Soft-decision decoding of linear block codes based on ordered
  statistics''.
\newblock \href{https://dx.doi.org/10.1109/18.412683}{IEEE Transactions on
  Information Theory {\bf 41}, 1379--1396}~(1995).

\bibitem{fossorier2001iterative}
Marc P.~C. Fossorier.
\newblock ``Iterative reliability-based decoding of low-density parity check
  codes''.
\newblock \href{https://dx.doi.org/10.1109/49.924874}{IEEE Journal on selected
  Areas in Communications {\bf 19}, 908--917}~(2001).

\bibitem{mackay2003information}
David J.~C. MacKay.
\newblock ``Information theory, inference and learning algorithms''.
\newblock Cambridge University Press. ~(2003).
\newblock
  url:~\href{http://www.inference.org.uk/mackay/itila/book.html}{http://www.inference.org.uk/mackay/itila/book.html}.

\bibitem{Raveendran2021trappingsetsof}
Nithin Raveendran and Bane Vasi{\'{c}}.
\newblock ``Trapping {S}ets of {Q}uantum {LDPC} {C}odes''.
\newblock \href{https://dx.doi.org/10.22331/q-2021-10-14-562}{{Quantum} {\bf
  5}, 562}~(2021).

\bibitem{poulin2006optimal}
David Poulin.
\newblock ``Optimal and efficient decoding of concatenated quantum block
  codes''.
\newblock \href{https://dx.doi.org/10.1103/PhysRevA.74.052333}{Physical Review
  A {\bf 74}, 052333}~(2006).
\newblock
  \href{http://arxiv.org/abs/quant-ph/0606126}{arXiv:quant-ph/0606126}.

\bibitem{poulin2008iterative}
David Poulin and Yeojin Chung.
\newblock ``On the iterative decoding of sparse quantum codes''.
\newblock Quantum Information \& Computation {\bf 8}, 987–1000~(2008).
\newblock  \href{http://arxiv.org/abs/0801.1241}{arXiv:0801.1241}.

\bibitem{babar2015fifteen}
Zunaira Babar, Panagiotis Botsinis, Dimitrios Alanis, Soon~Xin Ng, and Lajos
  Hanzo.
\newblock ``Fifteen years of quantum {LDPC} coding and improved decoding
  strategies''.
\newblock \href{https://dx.doi.org/10.1109/ACCESS.2015.2503267}{IEEE Access
  {\bf 3}, 2492--2519}~(2015).

\bibitem{rigby2019modified}
Alex Rigby, J.~C. Olivier, and Peter Jarvis.
\newblock ``Modified belief propagation decoders for quantum low-density
  parity-check codes''.
\newblock \href{https://dx.doi.org/10.1103/PhysRevA.100.012330}{Physical Review
  A {\bf 100}, 012330}~(2019).
\newblock  \href{http://arxiv.org/abs/1903.07404}{arXiv:1903.07404}.

\bibitem{kuo2021exploiting}
Kao-Yueh Kuo and Ching-Yi Lai.
\newblock ``Exploiting degeneracy in belief propagation decoding of quantum
  codes''~(2021).
\newblock  \href{http://arxiv.org/abs/2104.13659}{arXiv:2104.13659}.

\bibitem{wang2012enhanced}
Yun-Jiang Wang, Barry~C. Sanders, Bao-Ming Bai, and Xin-Mei Wang.
\newblock ``Enhanced feedback iterative decoding of sparse quantum codes''.
\newblock \href{https://dx.doi.org/10.1109/TIT.2011.2169534}{IEEE transactions
  on information theory {\bf 58}, 1231--1241}~(2012).
\newblock  \href{http://arxiv.org/abs/0912.4546}{arXiv:0912.4546}.

\bibitem{kuo2020refined}
Kao-Yueh Kuo and Ching-Yi Lai.
\newblock ``Refined belief-propagation decoding of quantum codes with scalar
  messages''.
\newblock In 2020 IEEE Globecom Workshops.
\newblock \href{https://dx.doi.org/10.1109/GCWkshps50303.2020.9367482}{Pages
  1--6}.
\newblock ~(2020).
\newblock  \href{http://arxiv.org/abs/2102.07122}{arXiv:2102.07122}.

\bibitem{liu2019neural}
Ye-Hua Liu and David Poulin.
\newblock ``Neural belief-propagation decoders for quantum error-correcting
  codes''.
\newblock \href{https://dx.doi.org/10.1103/PhysRevLett.122.200501}{Physical
  Review Letters {\bf 122}, 200501}~(2019).
\newblock  \href{http://arxiv.org/abs/1811.07835}{arXiv:1811.07835}.

\bibitem{roffe2021bposd}
Joschka Roffe.
\newblock ``{BP+OSD}: A decoder for quantum {LDPC} codes''.
\newblock \url{https://pypi.org/project/bposd/}~(2021).

\bibitem{wang2010}
D.~S. Wang, A.~G. Fowler, A.~M. Stephens, and L.~C.~L. Hollenberg.
\newblock ``Threshold error rates for the toric and planar codes''.
\newblock Quantum Information \& Computation {\bf 10}, 456--469~(2010).
\newblock  \href{http://arxiv.org/abs/0905.0531}{arXiv:0905.0531}.

\bibitem{fowler2015}
Austin~G. Fowler.
\newblock ``Minimum weight perfect matching of fault-tolerant topological
  quantum error correction in average {{O}}(1) parallel time''.
\newblock Quantum Information \& Computation {\bf 15}, 145--158~(2015).
\newblock  \href{http://arxiv.org/abs/1307.1740}{arXiv:1307.1740}.

\bibitem{higgott2021pymatching}
Oscar Higgott.
\newblock ``Pymatching: A python package for decoding quantum codes with
  minimum-weight perfect matching''.
\newblock \href{https://dx.doi.org/10.1145/3505637}{ACM Transactions on Quantum
  Computing{\bf 3}}~(2022).

\bibitem{lemon}
Bal\'{a}zs Dezs, Alp\'{a}r J\"{u}ttner, and P\'{e}ter Kov\'{a}cs.
\newblock ``Lemon - an open source c++ graph template library''.
\newblock \href{https://dx.doi.org/10.1016/j.entcs.2011.06.003}{Electron. Notes
  Theor. Comput. Sci. {\bf 264}, 23–45}~(2011).

\bibitem{ChubbFlammia2018}
Christopher~T. Chubb and Steven~T. Flammia.
\newblock ``Statistical mechanical models for quantum codes with correlated
  noise''.
\newblock \href{https://dx.doi.org/10.4171/AIHPD/105}{Annales de l’Institut
  Henri Poincaré D {\bf 8}, 269--321}~(2018).
\newblock  \href{http://arxiv.org/abs/1809.10704}{arXiv:1809.10704}.

\bibitem{Chubb2021}
Christopher~T. Chubb.
\newblock ``General tensor network decoding of 2d pauli codes''~(2021)
  \href{http://arxiv.org/abs/2101.04125}{arXiv:2101.04125}.

\end{thebibliography}

\clearpage
\appendix

\section{Proof of 50\% threshold for the 3D surface code on the checkerboard lattice}
\label{sec:rhombic-thresholdproof-appendix}

We presented a Clifford deformation of the checkerboard lattice surface code which consists of applying a Hadamard operation on half of the vertical qubits, 
in a three-dimensional checkerboard manner (see \cref{fig:rhombic-checkerboard-deformation}). For this Clifford-deformed code, cube stabilizer generators are violated under $Z$ errors on the Clifford-deformed edges, while the triangle stabilizer generators are violated by $Z$ errors on the remaining edges. Using this fact, one can show that the products of cubes and the products of triangles along the diagonals of the code are effectively identity for pure $Z$ errors.
The presence of these linear symmetries gives rise to the following theorem:

\begin{theorem}
The Clifford-deformed checkerboard lattice surface code has a $50\%$ threshold error rate under pure $Z$ noise
\end{theorem}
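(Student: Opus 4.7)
The plan is to follow the general strategy used throughout \cref{sec:def3dcodes}: at infinite $Z$ bias we pass to the classical code obtained by forgetting the $Z$ components of every stabilizer generator, identify a collection of one-dimensional materialized symmetries, and decode sector-by-sector via minimum-weight perfect matching along those symmetries, invoking the weight-reduction technique whenever the effective checks are not yet of weight two.

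First, I would observe that the Clifford deformation decouples the decoding problem into two independent sectors. A cube stabilizer, after Hadamard conjugation on the Clifford-deformed vertical edges, acts as $X$ precisely on the (four) Clifford-deformed edges of that cube and as $Z$ on the remaining eight edges; at infinite $Z$ bias these $Z$ terms are inert, so cube syndromes detect $Z$ errors only on Clifford-deformed edges. By the same argument, the triangle stabilizers detect $Z$ errors only on the non-deformed edges. Hence the classical effective code splits cleanly into a ``cube sector'' and a ``triangle sector,'' which can be decoded independently.

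Next I would exhibit the linear symmetries in each sector. For the cube sector, the product of cubes along an appropriate diagonal line of the checkerboard is made purely of $Z$s at infinite bias (the $X$-terms on shared Clifford-deformed edges either cancel or live on the boundary of the line), giving an identity relation; this yields a family of $O(L^2)$ parallel diagonal repetition codes covering every Clifford-deformed qubit. The cube sector therefore decodes directly by matching along each diagonal, giving the standard lower bound $p_{\mathrm{success}} \ge (1 - A e^{-\alpha L})^{O(L^2)} \to 1$ below $p=1/2$. For the triangle sector I would apply the same idea, but since triangle generators are weight-3 with two $X$-terms effectively surviving per triangle, the first round of diagonal matching may only weight-reduce the effective checks (for example to two-body checks living on a sub-lattice of non-deformed edges) rather than directly decode them; a second round of matching along a transverse family of diagonal symmetries of those reduced checks then completes the decoding, exactly as in the XY-code weight-reduction argument of \cref{sec:xy-code}.

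The key step, and the main obstacle, is the bookkeeping for the triangle sector: one must verify that after the Clifford deformation the relevant products of triangles along each diagonal genuinely telescope to identity (rather than to some residual surface operator), check that the resulting weight-reduced checks again organize themselves into independent repetition codes whose supports cover every non-deformed edge, and confirm this works for all three classes of diagonals forced by the 3D checkerboard symmetry. Once these combinatorial facts are established, the conclusion is immediate: the full decoder is a composition of a constant number of rounds of 1D matching on $\mathrm{poly}(L)$ disjoint repetition codes, each of which has a $50\%$ threshold, so
\begin{align}
    p_{\mathrm{success}} \;\ge\; \bigl(1 - A e^{-\alpha L}\bigr)^{O(L^2)} \;\xrightarrow[L\to\infty]{}\; 1
\end{align}
for any fixed $p<1/2$, establishing the claimed $50\%$ threshold error rate.
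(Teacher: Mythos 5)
There is a genuine gap here: you have correctly identified the high-level strategy the paper uses (split the classical effective code into a cube sector and a triangle sector, identify diagonal materialized symmetries, then match along them), but you have not actually carried out the combinatorial part that constitutes the proof, and some of the combinatorics you do state is wrong.

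Concretely, two issues. First, you assert that a cube stabilizer has four Clifford-deformed edges and hence acts as $X$ on four edges at infinite bias. That is incorrect: the 3D-checkerboard deformation applies a Hadamard to exactly two of the four vertical edges of each cube (alternating parity), so the effective cube check has weight \emph{two}, not four. This is what makes the diagonal products of cubes into a genuine repetition code decodable by a single round of matching; with weight-4 checks your claim that ``the cube sector decodes directly by matching along each diagonal'' would not follow. Second, and more importantly, the triangle sector is not handled by a generic two-stage weight-reduction as you suggest. Half the triangles contain a deformed vertical edge and therefore reduce to effective weight-2 checks on horizontal qubits; the other half contain an \emph{undeformed} vertical edge and remain effective weight-3, and these do not form a directly matchable repetition code. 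The paper resolves the undeformed vertical qubits via a distinct construction you do not anticipate: the product of the \emph{four triangle stabilizers within a cube} yields a weight-4 check supported on the four vertical edges, which reduces to weight 2 (on the two undeformed ones) after the deformation, and these weight-2 checks telescope along diagonals. You would need to discover this specific combination, and also verify that the two resulting diagonal families of weight-2 checks, together with the cube-sector family, cover every qubit. Your proposal explicitly defers exactly this verification (``the key step, and the main obstacle, is the bookkeeping''), so as written it is a plan rather than a proof.
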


\begin{proof}

The Clifford deformation we consider for the checkerboard lattice surface code consists of applying a Hadamard on half of the vertical qubits, 
in a 3D checkerboard fashion (see \cref{fig:rhombic-checkerboard-deformation}).
In this new code, under pure $Z$ noise, cube stabilizers can only be excited by errors acting on the Clifford-deformed qubits, 
while triangle stabilizers can be excited by any of the remaining qubits.
We now show that this code has a $50\%$ threshold error rate under pure $Z$ noise.

We start by decoding the cube stabilizers. 
Due to the Clifford deformation, these stabilizers are now effectively weight-2 at infinite bias, 
and are involved in some linear symmetries represented in \cref{fig:rhombic-symmetries-a}. 
We can therefore decode the cubes by performing matching along each symmetry line.

We then tackle the triangle stabilizers. 
To decode them, we can perform matching along the two linear symmetries represented 
in \cref{fig:rhombic-symmetries-b,fig:rhombic-symmetries-c}.
One symmetry allows us to decode all the remaining vertical qubits, and the other all the horizontal qubits.

Since all the steps involve decoding a polynomial number of repetition codes (in the lattice size $L$), and the
probability of success is lower-bounded by the probability of correctly decoding all these repetition codes,
it shows that this decoding strategy leads to $50\%$ threshold error rate.

\begin{figure}[t]
    \centering
    \subfloat[]{
        \includegraphics{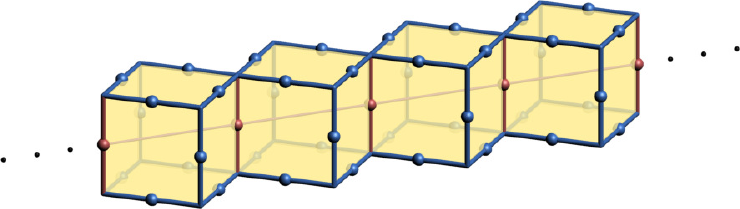}
        \label{fig:rhombic-symmetries-a}
    }
    \hfill
    \subfloat[]{
        \includegraphics{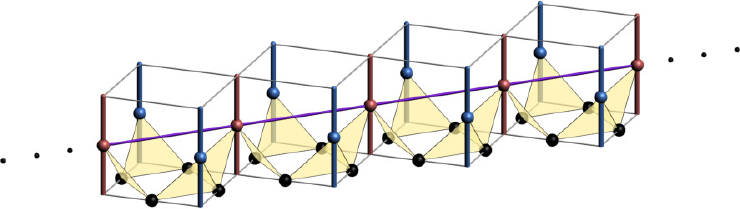}
        \label{fig:rhombic-symmetries-b}
    }
    \subfloat[]{
        \includegraphics{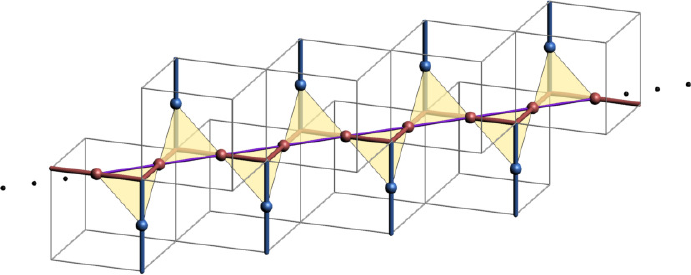}
        \label{fig:rhombic-symmetries-c}
    }
    \caption{Linear symmetries of the checkerboard lattice surface code.
    \textbf{(a)} Symmetry of the cube stabilizers. Due to the Clifford deformation, the cube stabilizers become weight-2 and are supported on
    two diagonally-opposite edges (red). Multiplying cubes diagonally effectively gives the identity. 
    Matching along this line allows us to decode all the Clifford-deformed vertical qubits.
    \textbf{(b)} Symmetry of the triangle stabilizers involving vertical qubits only. 
    The product of four triangle stabilizers within a cube gives a four-body stabilizer supported on the four vertical qubits of the cube
    (blue and red). Due to the Clifford deformation, this becomes a two-body term, supported on two diagonally opposite qubits (blue).
    Multiplying these weight-2 stabilizers on a diagonal line (purple) gives the identity.
    Matching along all such lines allows to decode all the undeformed vertical qubits.
    \textbf{(c)} Symmetry of the triangle stabilizers involving horizontal qubits only. 
    The triangles containing a Clifford-deformed qubit become two-body terms after applying the Clifford deformation. 
    Multiplying them along a line (purple) gives the identity.
    All the other horizontal qubits of the cube are involved in a similar symmetry. 
    Matching along all these symmetries allows to decode all the horizontal qubits of the code.
    }%
    \label{fig:rhombic-symmetries}
\end{figure}
\end{proof}


\section{Decoders}
\label{sec:decoders-appendix}

\subsection{BP-OSD}\label{sec:bposd-appendix}

\begin{figure}
    \centering
    \includegraphics[width=0.8\textwidth]{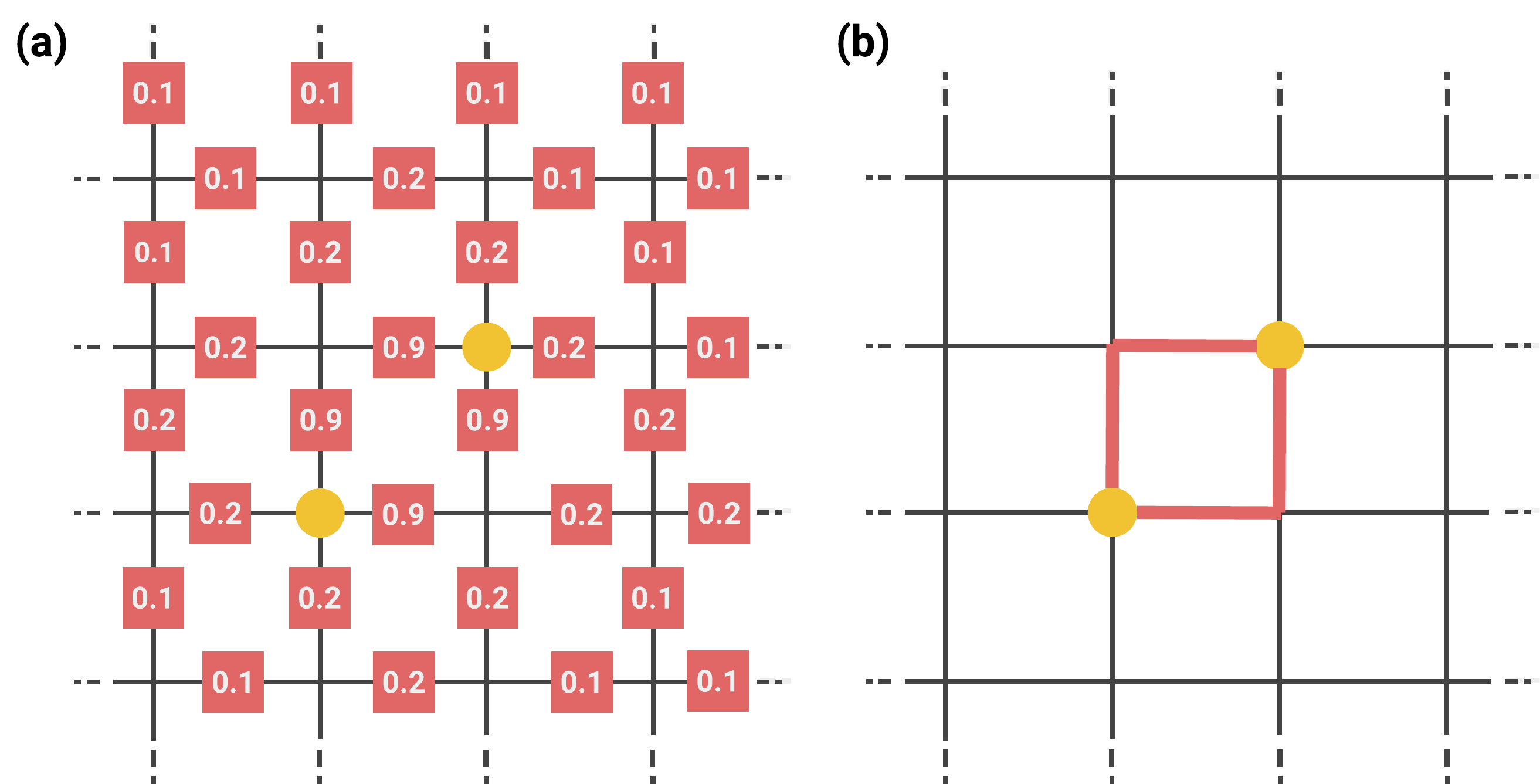}
    \caption{Illustration of the BP-OSD decoder on a 2D surface code. 
    \textbf{(a)} The belief propagation algorithm takes as input a syndrome $\vb{s}$ (yellow dots) and an error model, 
    and computes a probability $P(e_i|\vb{s})$ of error on each individual qubit (red squares). 
    When the solution is degenerate (e.g.\ two chains of errors have minimal weight), a high probability ($0.9$ shown in figure) 
    is assigned to all the solutions.
    \textbf{(b)} Split-belief phenomenon. In the basic version of belief propagation, a correction operator (red edges) is applied to 
    all qubits $i$ having $P(e_i|\vb{s}) > 0.5$. 
    In degenerate cases, it can produce an invalid correction with some defects remaining. 
    The OSD algorithm consists of solving the parity-check equation $\vb{H}\vb{e}=\vb{s}$ for the most probable set of errors,
    as found by belief propagation, guaranteeing that the final corrected state lives in the codespace.}
    \label{fig:bp-osd-explanation}
\end{figure}

The belief propagation with ordered statistics decoder (BP-OSD) is a generic decoder for quantum LDPC codes. 
Based on a classical technique to improve the iterative decoding of linear codes \cite{fossorier1995soft, fossorier2001iterative}, 
it was introduced to the quantum domain by Panteleev and Kalachev \cite{panteleev2019degenerate} and has been shown
to have high performance on a large class of LDPC codes, including topological codes \cite{roffe2020decoding, quintavalle2021single}. 
BP-OSD is built from two components: the belief propagation (BP) algorithm, which estimates the probability for each qubit to have an error,
and the ordered statistics decoder (OSD), which takes these probabilities as input and proposes a correction that fits the syndrome. 
It is particularly well-adapted to the decoding of Clifford-deformed codes under biased noise, as it naturally takes into account 
the non-uniform probability of errors along the different axes in the Clifford-deformed noise model.

\paragraph{Belief propagation}
The belief propagation decoder is one of the most commonly used decoders for classical LDPC codes \cite{mackay2003information}. 
It is an inference algorithm that computes an approximation of the probabilities $P(e_i|\bm{s})$ that an error has occurred 
on each bit $i$ given a syndrome $\bm{s}$. 
A correction operator is then applied to all the bits $i$ such that $P(e_i|\bm{s}) > 0.5$. 
While computing this marginal probability involves in principle summing over an exponential number of terms, 
belief propagation exploits the fact that for LDPC codes, this sum can be factored into a small number of terms. 
It then uses an algorithm called the \emph{product-sum algorithm} (or its variant the \emph{min-sum algorithm}) 
to calculate this sum, by iteratively passing messages between parity checks and data bits.
Belief propagation can be shown to converge to the exact marginal distribution when the Tanner graph is a tree.
For more general Tanner graphs that can contain loops, it is used as a heuristic algorithm to approximate the distribution, 
and is sometimes called \emph{loopy belief propagation}. While the approximation is often acceptable when the \emph{girth}
\footnote{The girth of a graph is the size of its shortest cycle.} of the graph is large, the presence of short-cycles tends to be detrimental to the performance
of BP~\cite{mackay2003information, Raveendran2021trappingsetsof}.

Several methods have been proposed in the literature to generalize belief propagation to quantum codes 
\cite{poulin2006optimal, poulin2008iterative, babar2015fifteen, rigby2019modified, roffe2020decoding, kuo2021exploiting, wang2012enhanced}. 
For instance, one can decode $X$ and $Z$ errors separately using the classical version of BP.
The potential correlations between $X$ and $Z$ errors can be taken into account by first decoding $X$ errors, 
adjusting the channel probabilities based on the correction, and decoding $Z$ errors with this adjusted probability, 
as proposed in Ref. \cite{rigby2019modified}.
It has also been proposed to send vector instead of scalar messages, to compute the probability $P(e_i=W|\bm{s})$ 
that a Pauli error $W \in \{I, X, Y, Z\}$ has occurred on each qubit $i$ \cite{poulin2008iterative}. 
However, this results in an increase in complexity compared to the original BP algorithm, 
and sometimes reduced performance due to the presence of shorter cycles in the whole Tanner graph compared to the $X$ and $Z$ ones.
A simplified message-passing rule was proposed in Ref. \cite{kuo2020refined} to reduce this complexity while guaranteeing the same output as 
the original version, but the presence of short cycles is still hindering its performance.
In this work, we chose to decode $X$ and $Z$ errors separately.

Apart from the presence of short cycles in quantum Tanner graphs, a major problem with BP decoding of quantum codes 
is the \emph{degeneracy problem}, also called \emph{split-belief} phenomenon \cite{poulin2008iterative}. 
Indeed, in quantum codes, a syndrome can often be generated by several equally likely combinations of errors. 
By symmetry, the BP algorithm outputs the same probability for all these errors, and if they are all higher than $0.5$, 
it applies a correction operator to all these equally likely errors, resulting in an invalid correction that does not fit the syndrome.
An illustration of a split-belief problem is shown in \cref{fig:bp-osd-explanation}.

To mitigate the degeneracy problem, several solutions have been proposed in the literature, such as breaking the degeneracy 
with random noise \cite{poulin2008iterative}, adjusting the error probabilities when the decoder fails \cite{wang2012enhanced}, 
using a neural network to learn the BP procedure, with a loss function tailored to avoid degeneracies \cite{liu2019neural}, 
using previous messages in the message-passing update rule \cite{kuo2021exploiting}, 
or complementing the BP decoder with a second decoder such as the ordered statistics decoder (OSD) \cite{roffe2020decoding}. 
Since OSD has recently been shown to outperform other methods for many different codes \cite{panteleev2019degenerate}, 
we are using this solution in our work.

\paragraph{Ordered statistics decoding}
For any classical linear code with a parity-check matrix $\vb{H}$, the following equation, called the \emph{syndrome equation}, holds:
\begin{align}
    \vb{H} \vb{e} = \vb{s}
\end{align}
Since many errors can correspond to a given syndrome, $\vb{H}$ is not directly invertible. 
The idea of OSD is to only solve the system for the most-likely errors, as given by the BP algorithm. 
More precisely, we sort the columns of $\vb{H}$ by increasing probability of error, and eliminate them one-by-one in that order until the system
is full-rank. We then solve the reduced system to find a set of errors that respect the syndrome equation. 
The remaining qubits can either be set to have no error, or be searched-over for a better correction using some heuristics
\cite{panteleev2019degenerate, roffe2020decoding}.

In the quantum setting, a similar syndrome equation holds, where the parity-check matrix and the error vector can either be 
written over the field $GF(4)$ or in a symplectic form. 
For CSS codes, we can also consider $X$ and $Z$ errors separately and write a syndrome equation for each of them:
\begin{aligns}
    \vb{H_Z} \vb{e_X} &= \vb{s_Z} \\
    \vb{H_X} \vb{e_Z} &= \vb{s_X}
\end{aligns}
A classical OSD algorithm can then be applied separately for each equation.

In this work, we use BP-OSD through the Python library \texttt{bposd} developed by J. Roffe \cite{roffe2021bposd}. 
In particular, we use the min-sum algorithm for BP and the \emph{combination sweep strategy} (to order $50$) described 
in Ref. \cite{roffe2020decoding} to search over the remaining errors in OSD.

\paragraph{Limitations of BP-OSD} While OSD turns the output of BP into a valid correction, the algorithm still suffers
from the main drawbacks of loopy belief propagation discussed, such as short cycles and error degeneracies.
For instance, the effect of short cycles can be seen when decoding long strings on the 2D surface code. 
When the size of a string is higher than $8$ (the girth of the surface code when considering $X$ and $Z$ errors separately),
short cycles tend to deteriorate the messages passed between the two distant defects. 
This phenomenon, called \emph{bounded information spread}, has been documented in the literature
for the 2D surface and color codes \cite{higgottImprovedSingleshotDecoding2022}. 
As a result, decoding topological codes of large sizes is often harder for BP-OSD than for small sizes. 
This can be observed in some threshold error rate plots where the apparent threshold error rate seems to shrink 
when increasing the system size. Therefore, finite-size approximations of the BP-OSD threshold error rate might not reflect the true
threshold error rate, obtained when taking the system size to infinity. 
Examples of this finite-size effect on 3D codes are given in \cref{sec:finbiasthresholds}.

Apart from short cycles, error degeneracies can also have a negative effect on BP-OSD.
In general, split-belief problems appear around stabilizers with even weight.
Indeed, any error supported on half of an even-weight stabilizer gives the same syndrome after the
application of the stabilizer, while the new error has the same weight.
This argument has been used to explain why BP-OSD performs poorly on the 2D surface code,
while performing well on the 3D surface code,
observing that the smallest split-belief appears for weight-2 errors on the 2D surface code, 
but on weight-3 errors on the loop sector of the 3D surface code \cite{higgottImprovedSingleshotDecoding2022}.
We call the size of the smallest error that causes a split-belief the \emph{split-belief number} of the code.
It can be calculated by taking the smallest even-weight stabilizer and dividing by two.
Examples of split-belief numbers for different 3D codes are given in \cref{tab:girth-and-split-belief}.

\subsection{Sweep-matching decoder\label{sec:sweepmatch-appendix}}
The sweep-matching decoder on the CSS 3D surface code uses minimum weight
perfect matching (MWPM)~\cite{Dennis2002,wang2010,fowler2015} between vertices to correct for point-like syndromes
and the Sweep decoder~\cite{kubica2019cellular,vasmer2021sweep} over face syndromes to correct for string-like syndromes.
The two sectors are decoded independently
and the procedure can be generalized for the Clifford-deformed code by using the
Clifford-deformed stabilizer generators instead.

Implementations of MWPM are easily applicable in 3D for the Clifford-deformed code
and for biased noise by adjusting the graph weights of the match to match
the known noise parameters.
In this work, the Python library
\texttt{PyMatching}~\cite{higgott2021pymatching,lemon} is used for fast MWPM.
\vspace{5mm}
\begin{figure}[ht]
    \begin{center}
        \includegraphics[scale=1.2]{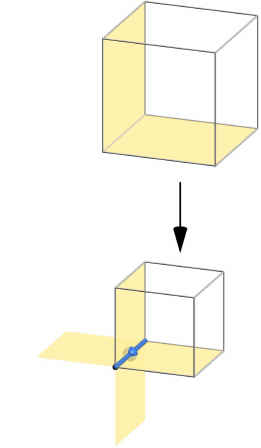}
        \includegraphics[scale=1.2]{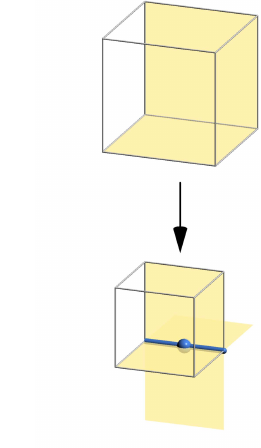}
        \includegraphics[scale=1.2]{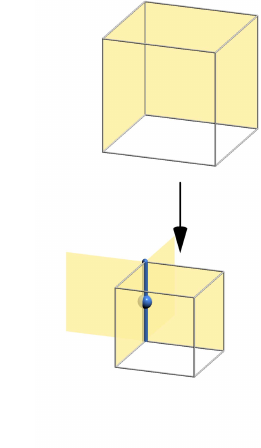}
        \includegraphics[scale=1.2]{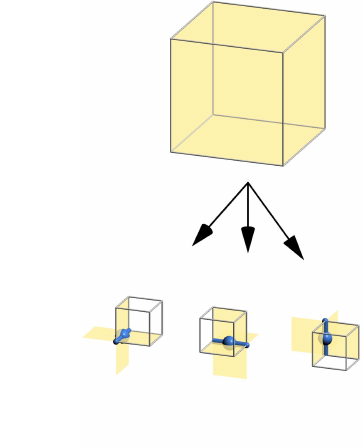}
    \end{center}
    \caption{The greedy sweep rule as applied to a single cell.
    The top figures enumerate the four possible
    non-trivial initial syndrome configurations on the 3 faces of a cell
    adjoining the vertex that is furthest from the sweep direction,
    where faces with nontrivial syndromes are shaded yellow.
    The corresponding corrections are shown in the figures below,
    where the $Z$ edge operator to be applied as a correction is shown in blue
    and the corresponding syndromes to be flipped and updated are shaded yellow.
    In the right-most initial configuration where non-trivial syndromes are on
    all 3 faces,
    the correction to apply is chosen randomly out of the 3 possible
    corrections.
    This rule is greedily applied to all cells at once and repeated
    $T_{\textrm{max}}$ times.}%
    \label{fig:sweep}
\end{figure}
The sweep decoder is a local cellular automaton decoder,
meaning that it is an iterative algorithm where,
at each step, a correction operator is computed locally according to the current syndrome using a cellular automaton rule.
To be of use, such a decoder should be able to eliminate every syndrome after a number of steps that is polynomial in the size of the code.

The sweep decoder is based on a cellular automaton called the sweep rule \cite{kubica2019cellular}.
We now briefly review the sweep rule in the special case of the simple cubic lattice.
We start by choosing a spatial direction, defined by a 3D vector $\va{v}$, called the sweep direction, with the only condition that it is not parallel to an edge of the lattice.
In practice we choose the sweep direction from one of eight possibilities $(\pm 1, \pm 1, \pm 1)$.
As illustrated in \cref{fig:sweep}, we then apply the following rule at each iteration, simultaneously for all the vertices:
\begin{enumerate}
    \item Find the three oriented lattice edges, $\va{e}_1$, $\va{e}_2$ and $\va{e}_3$, pointing away from the vertex and in the same direction as $\va{v}$, i.e., such that $\va{e}_i \cdot \va{v} > 0$.
    Each pair of edges corresponds to a face of the lattice.
    \item If two of these faces are excited, then apply a $Z$ operator to the intersecting edge. If all three faces are excited, then apply a $Z$ operator to a random edge among $\va{e}_1$, $\va{e}_2$, and $\va{e}_3$. Otherwise, do nothing.
\end{enumerate}
In the sweep decoder, we apply the sweep rule $T_{\mathrm{max}} = O(L)$ times, where $L$ is the linear lattice size.
For lattices with boundaries we run the decoder multiple times using different sweep directions, as described in~\cite{vasmer2021sweep}.
The sweep decoder can fail in two ways: either if the product of the original error and the operators applied by the sweep rule is a non-trivial logical operator, or if the syndrome is non-trivial after $T_{\mathrm{max}}$ applications of the rule.

We implemented the sweep-matching decoder and simulated its performance for 3D surface codes defined on cubic lattices, with and without boundaries. The code is available online\footnote{See the repository at \url{https://github.com/panqec/panqec} for the implementation of the sweep-matching decoder.}.

\section{Numerical simulation details}\label{sec:numerics-appendix}
To compute the threshold error rate of the different codes, we simulate up to
$n_{\textrm{trials}}=10,000$
trials for each physical error rate $p$ and for each lattice size $L$.
Values of $p$ were taken in intervals between $0$ and $0.5$,
with a maximum step size of $0.01$,
while lattice sizes were chosen to be greater than $L=9$ and up to $L=21$,
using at least 3 values of $L$ for each $p$.
We extract the threshold error rate from crossover plots
using a common finite-size scaling regression analysis~\cite{Dennis2002,ChubbFlammia2018,Chubb2021}.
The logical error simulation data is fitted to
the following ansatz for the physical error rate $p_L(p, L)$ as a function
of the physical error rate $p$ and system size $L$.
\begin{align}
    p_L &= A + Bx + Cx^2,\label{eqn:ansatz}\\
    x &= (p-p_{\text{th}}) L^{1/\nu},
\end{align}
where $p_{\text{th}}$ is the threshold error rate we seek
to evaluate, $\nu$ is a critical exponent and, $A,B,C$ are coefficients of the
quadratic ansatz,
all of which are free parameters to be determined by fitting to the data.
Here $x$ is termed the \emph{rescaled physical error rate},
which is zero at the phase transition $p=p_{\textrm{th}}$.
That $p_L$ is a quadratic function of $x$ is only expected to be a valid
approximation near this phase transition for $x$,
so only data points with physical error rates close to the phase transition were
used for the fitting.

For each given physical error rate $p$ and system size $L$,
suppose that
$n_{\textrm{fail}}$ trials out of $n_{\textrm{trials}}$ trials
result in a logical error after running the numerical simulations of sampling
the noise model, syndrome extraction and decoding.
The logical error rate can then be estimated by
$p_L=n_{\textrm{fail}}/n_{\textrm{trials}}$.

Using these estimated logical error rates,
we run an optimization procedure to obtain the set of free parameters
$(p_{\text{th}}, \nu, A, B, C)$ that fits the data the best,
as measured by minimizing the mean-squared error.

Uncertainties for the threshold error rate estimate $p_{\textrm{th}}$ are
calculated using the following bootstrap resampling method.

The first step is to obtain a distribution to sample for estimates of the
logical error rate $p_L$ for each $(p, L)$ to account for the finite number of
trials $n_{\textrm{trials}}$.
Starting from a uniform prior distribution before taking into account the number
of trials and failures,
the posterior distribution for $p_L$ is a Beta distribution with
\begin{align}
    p_L\sim\textrm{Beta}\left(n_{\textrm{trials}} - n_{\textrm{fail}} + 1,
    n_{\textrm{fail}} + 1\right),
\end{align}
where $\textrm{Beta}(a, b)$ is a probability distribution with support over the
interval $[0,1]$ and probability density function
\begin{align}
    f(x; a, b)
    = \frac{%
    \Gamma\left( a + b \right)
    x^{a - 1} {\left( 1 - x \right)}^{b - 1}%
    }{%
    \Gamma\left(a\right) \Gamma\left(b\right)
    }
\end{align}
for real parameters $a,b$.
Here, $\Gamma$ is the Gamma function with the property that $\Gamma(n)=(n-1)!$.
Thus the logical error rates $p_L(p, L)$ for each $p$ and $L$ can be sampled
independently from these posterior distributions to produce a resampled set of
$p_L$ values to use for fitting.
The second step is to reflect the uncertainty that arises from the choice of
data points,
which may be done by resampling with replacement the set of $(p,L)$ pairs
to use.

With these two sources of uncertainty accounted for by resampling
$n_{\textrm{bs}}=100$ times,
the least-squares fit of $p_L(p,L)$ can be done repeatedly on the resampled
data to produce a set of $n_{\textrm{bs}}$ best-fit parameters
$\left\{(p_{\text{th}}, \nu, A, B, C) \right\}$ for each resampling.
This collection of best-fit parameters can be used to produce error bars on the
threshold error rate $p_{\textrm{th}}$ by taking the $1\sigma$-bounds of the
resampled threshold error rate estimates to be interpreted as a credible
interval.
Note that these uncertainty bounds need not be symmetrical in upper and lower
directions,
as seen in \cref{tab:toric,tab:rhombic,tab:xcube}.
An example of this fitting is shown in \cref{fig:collapse},
with the bootstrapped uncertainty estimates shown pink and the best-fit value of
$p_{\textrm{th}}$ marked with the red dashed vertical line.
The validity of the ansatz may be vindicated by visual inspection of the
so-called data collapse plot of the logical error rate $p_L$ over the rescaled
physical error rate $x=(p-p_{\text{th}}) L^{1/\nu}$
where all data points collapse onto the fit line per the quadratic ansatz
in \cref{eqn:ansatz} to within reasonable bounds,
as quantified by the $1\sigma$ envelope above and below the fit line.
This ensures that the data points chosen for the finite-size scaling were chosen
sufficiently close to the critical point such that the ansatz is valid.
\begin{figure}[t]
    \centering
    \includegraphics[width=0.7\textwidth,page=52]{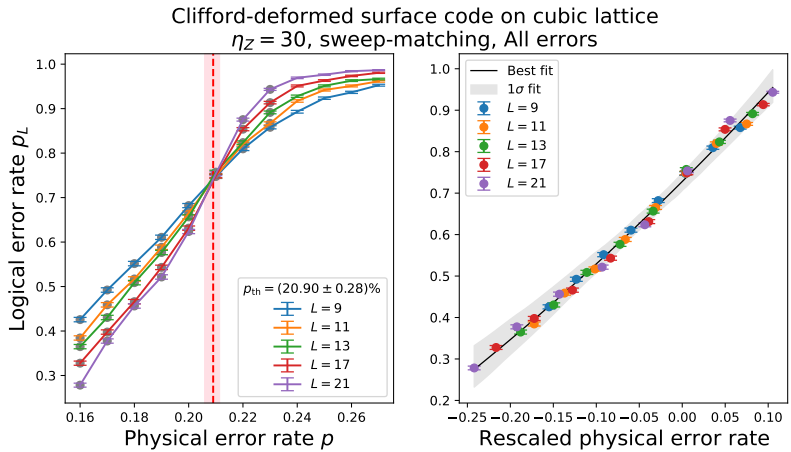}
    \caption{(left) Total logical error rate $p_L$ vs the physical error rate
    $p$ for $Z$-biased noise with $\eta=30$ decoded with the sweep-matching decoder.
    (right) Data collapse onto the quadratic finite-size-scaling ansatz with
    bootstrapped error bars and $1\sigma$ fit bounds.}
    \label{fig:collapse}
\end{figure}

To verify the reliability of the estimated threshold error rates,
the average $X$ and $Z$ logical failure rate over every logical qubit is used
as the logical error rate and subjected to the same analysis to extract corresponding
threshold error rates.
This is important since, due to finite-size effects,
the apparent threshold error rate as determined by the total logical error rate
may be higher than the threshold error rates determined by the logical $X$ and
$Z$ error rates.
For the case of the X-cube model where the number of logical
operators increases with the code distance,
the logical $X$ error rate is determined by taking the average logical $X$
error rate over all logical qubits. 
The logical $Z$ error rate is calculated analogously.

We use this procedure to compute the threshold error rate of both the CSS and
Clifford-deformed codes, for bias ratios
$\eta_Z \in \{0.5, 1, 3, 10, 30, 100, \infty \}$,
sampling more where interesting features are to be elucidated.
Representative examples of crossover plots of the logical error rate over
physical error rate along with the ansatz-fitting for both $X$ and $Z$ logical
errors are separately given in
\cref{fig:collapse2}.

The above procedure produces best-fit estimates and credible intervals for
the threshold error rate $p_{\textrm{th}}$ with respect to the total logical
error rate, logical $X$ errors and logical $Z$ errors,
which may differ significantly.
To be conservative, the reported threshold error rate is the minimum of these estimates,
as determined by which $1\sigma$ (68\% equal-tail) credible interval has
the lowest lower bound.

\begin{figure}[t]
    \centering
    \subfloat[]{
        \includegraphics[width=0.7\textwidth,page=53]{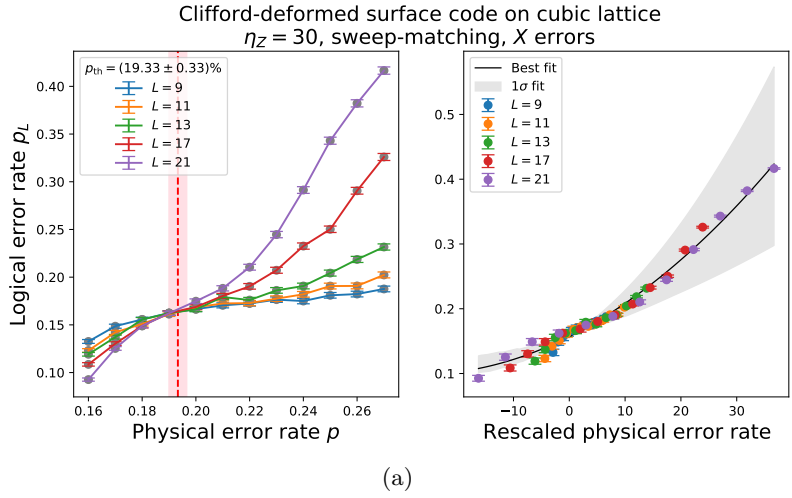}%
        \label{fig:collapse2-a}
    }\\
    \subfloat[]{
        \includegraphics[width=0.7\textwidth,page=54]{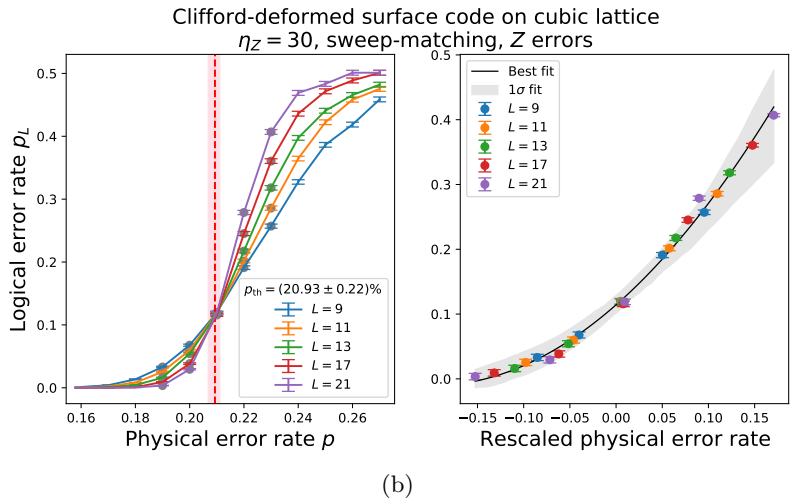}%
        \label{fig:collapse2-b}
    }
    \caption{Examples of data collapse plots in terms of logical $X$ and $Z$
    errors for the 3D surface code under $Z$-biased noise with $\eta_Z=30$ and
    decoding by sweep-matching. Note that these are the same parameters as those in
    \cref{fig:collapse}, but the minimum threshold error rate is slightly lower
    than that estimated using the total error rate in \cref{fig:collapse} and
    is thus a more conservative estimate.}%
    \label{fig:collapse2}
\end{figure}

\end{document}